\pgfplotsset{compat=newest}
\definecolor{NavyBlue}{rgb}{0.0, 0.0, 0.5}
\definecolor{OliveGreen}{rgb}{0.33, 0.42, 0.18}
\definecolor{def_color_frame}{RGB}{220,230,242}
\colorlet{def_color_back}{def_color_frame!30}
\definecolor{def_color_text}{RGB}{37,64,97}
\definecolor{def_color_frame2}{RGB}{242,200,200}
\colorlet{def_color_back2}{def_color_frame2!30}
\definecolor{def_color_text2}{RGB}{97,55,33}
\newtcolorbox[auto counter]{bluebox}[2][]{%
colback=def_color_back,colframe=def_color_frame,fonttitle=\bfseries,coltitle=def_color_text,float,floatplacement=t,title=Box~\thetcbcounter: #2,#1}
\newtcolorbox[use counter from=bluebox]{redbox}[2][]{%
colback=def_color_back2,colframe=def_color_frame2,fonttitle=\bfseries,coltitle=def_color_text2,float,floatplacement=t,title=Box~\thetcbcounter: #2,#1}
\def\D{ {\cal D} }
\def\E{ {\cal E} }
\def\N{ {\cal N} }
\def\F{ {\cal F} }
\def\>{\rangle}
\def\<{\langle}
\newcommand{\bra}[1]{\langle {#1} |}
\newcommand{\ket}[1]{| {#1} \rangle}
\newcommand{\ketbra}[2]{\ensuremath{|#1\rangle\!\langle#2|}}
\newcommand{\matrixel}[3]{\ensuremath{\left\langle #1 \vphantom{#2#3} \right| #2 \left| #3 \vphantom{#1#2} \right\rangle}}
\newcommand{\tr}[1]{\mathrm{Tr}\left( #1 \right)}
\newcommand{\trr}[2]{\mathrm{Tr}_{#1}\left( #2 \right)}
\newcommand{\iden}{\mathbbm{1}}
\renewcommand{\v}[1]{\ensuremath{\boldsymbol #1}}
\definecolor{bluecyan}{rgb}{0.27, 0.66, 0.88}
\definecolor{ppblue}{RGB}{46,117,182}
\definecolor{ppred}{RGB}{197, 90, 17}
\theoremstyle{plain}
\newtheorem{thm}{Theorem}
\newtheorem{subthm}{Theorem}[thm]
\newtheorem{lem}[thm]{Lemma}
\theoremstyle{definition}
\begin{document}

\title{Fluctuation-dissipation relations for thermodynamic distillation processes}

\author{Tanmoy Biswas}
\affiliation{International Centre for Theory of Quantum Technologies, University of Gdansk, Wita Stwosza 63, 80-308 Gdansk, Poland.}

\author{A. de Oliveira Junior}
\affiliation{Faculty of Physics, Astronomy and Applied Computer Science, Jagiellonian University, 30-348 Kraków, Poland.}

\author{Micha{\l} Horodecki}
\affiliation{International Centre for Theory of Quantum Technologies, University of Gdansk, Wita Stwosza 63, 80-308 Gdansk, Poland.}

\author{Kamil Korzekwa}
\affiliation{Faculty of Physics, Astronomy and Applied Computer Science, Jagiellonian University, 30-348 Kraków, Poland.}

\begin{abstract}

The fluctuation-dissipation theorem is a fundamental result in statistical physics that establishes a connection between the response of a system subject to a perturbation and the fluctuations associated with observables in equilibrium. Here we derive its version within a resource-theoretic framework, where one investigates optimal quantum state transitions under thermodynamic constraints. More precisely, we first characterise optimal thermodynamic distillation processes, and then prove a relation between the amount of free energy dissipated in such processes and the free energy fluctuations of the initial state of the system. Our results apply to initial states given by either asymptotically many identical pure systems or arbitrary number of independent energy-incoherent systems, and allow not only for a state transformation, but also for the change of Hamiltonian. The fluctuation-dissipation relations we derive enable us to find the optimal performance of thermodynamic protocols such as work extraction, information erasure and thermodynamically-free communication, up to second-order asymptotics in the number $N$ of processed systems. We thus provide a first rigorous analysis of these thermodynamic protocols for quantum states with coherence between different energy eigenstates in the intermediate regime of large but finite $N$.

\end{abstract}

\maketitle


\section{Introduction}
\label{sec:intro}

Thermodynamics has been profoundly triumphant by impacting the natural sciences and allowing the development of technologies that go from coolers to spaceships. As a theory of macroscopic systems in equilibrium, it presents us with a compelling picture of what state transformations are allowed in terms of a small number of macroscopic quantities, such as work and entropy~\cite{caratheodory1909untersuchungen,giles1964mathematical}. The drawback of the macroscopic description is that thermodynamics inevitably deals with average quantities, and as systems get smaller, fluctuations of these quantities become increasingly relevant, requiring a new description~\cite{Seifert2008, Seifert_2012}. Going beyond the original scenario of equilibrium thermodynamics led scientists to investigate fluctuations around these averages and their impact on the system dynamics. This line of research dates back to Einstein and Smoluchowski, who derived the connection between fluctuation and dissipation effects for Brownian particles~\cite{Einstein1906, Smoluchowski1906}. Now, it is well known that near-equilibrium, linear response theory provides a general proof of the fluctuation-dissipation theorem, which states that the response of a given system when subject to an external perturbation is expressed in terms of the fluctuation properties of the system in thermal equilibrium~\cite{Kubo1966,Bonanca2008}. The theoretical description underlying the fluctuation-dissipation relations is usually expressed in terms of the stochastic character of thermodynamic variables. This approach is strongly motivated since it is experimentally viable~\cite{batalhao2014, An2015}. 

On the other hand, a complementary approach is based on resource theories~\cite{Janzing2000,horodecki2013quantumness, brandao2015second,Lostaglio2019}. It aims at going beyond the thermodynamic limit and the assumption of equilibrium, and is often presented as an extension of statistical mechanics to scenarios with large fluctuations, the so-called single-shot statistical mechanics~\cite{Dahlsten_2011,Yunger2018}. A natural question is then whether fluctuation-dissipation relations are present in such a resource-theoretic description. Although important insights have been obtained in trying to connect the information-theoretic and fluctuation theorem approaches~\cite{Alhambra2016, Halpern2015}, they have, so far, not been explicitly related to dissipation. The tools required for the analysis of free energy dissipation in a resource-theoretic framework were developed in Refs.~\cite{Hayashithermo2017, Chubb2018beyondthermodynamic,Chubb2019_2,korzekwa2019avoiding}, where the authors investigated irreversibility of thermodynamic processes due to finite-size effects. However, the relation between fluctuations and actual dissipation was not derived and, moreover, these results were obtained for quasi-classical case of energy-incoherent states. Thus, they were not able to account for quantum effects that come into play when dealing with even smaller systems, when fluctuations around thermodynamic averages are no longer just thermal in their origin.

This work pushes towards a genuinely quantum framework characterising optimal thermodynamic state transformations and links fluctuations with free energy dissipation. We investigate a special case of state interconversion processes known as thermodynamic distillations. These are thermodynamic processes in which a given initial quantum system is transformed, with some transformation error, to a pure energy eigenstate of the final system. In particular, we focus on the initial system consisting of asymptotically many non-interacting subsystems that are either energy-incoherent and non-identical (in different states and with different Hamiltonians), or pure and identical. Within this setting, our main results are given by two theorems. The first one yields the optimal transformation error as a function of the free energy difference between the initial and target states, and the free energy fluctuations in the initial state (see Table~\ref{box1}). This can be seen as an extension of previously derived results on optimal thermodynamic state transformations~\cite{Hayashithermo2017, Chubb2018beyondthermodynamic,Chubb2019_2,korzekwa2019avoiding}. The second theorem provides a precise relation between the free energy fluctuations of the initial state and the minimal amount of free energy dissipated in the optimal thermodynamic distillation process (see Table~\ref{box2}). It is conceptually novel and does not form an extension of previously known results, and as such it constitutes our main contribution. Note that the second theorem employs the first one as one of the building blocks.


\begin{table}[t]
\setlength{\tabcolsep}{7pt}
\centering
\begin{tabular}{|l|}
\hline 
\begin{minipage}{0.45\textwidth}
\vspace{1.5mm}
\textbf{Optimal transformation error for thermodynamic distillation processes} 
\vspace{1.5mm}
\end{minipage}
\\
 \hline
\\
\begin{minipage}{0.45\textwidth}
\justifying
\vbox{\noindent In the optimal thermodynamic process of \mbox{$\epsilon$-approximate} transformation from many independent non-equilibrium systems into systems without fluctuations of free energy, the transformation error $\epsilon$ satisfies}
\vspace{2mm}
\vbox{ \begin{equation} \epsilon = \Phi\left(-\frac{\Delta F}{\sigma(F)} \right), \end{equation}} 
\vspace{2mm}
\vbox{\noindent where $\Delta F$ is the free energy difference between the initial and target state, $\sigma(F)$ is the free energy fluctuation in the initial state, and} 
\vspace{2mm}
\vbox{ \begin{equation} \Phi(x) := \frac{1}{\sqrt{2\pi}}\int_{-\infty}^x e^{-t^2/2}dt .\end{equation}} 
\vspace{2mm}
\vbox{\noindent We have proved such a statement for many independent systems in arbitrary incoherent states, as well as for many independent and identical systems in the same pure state. We conjecture that this is true for many independent systems in arbitrary mixed states.}
\vspace{2mm}
\end{minipage}
\\ \hline
\end{tabular}
\caption{Summary of the first main result.\label{box1}}
\end{table}


\begin{table}[t]
\setlength{\tabcolsep}{7pt}
\centering
\begin{tabular}{|l|}
\hline
\begin{minipage}{0.45\textwidth}
\vspace{1.5mm}
\textbf{Fluctuation-dissipation relation for
thermodynamic distillation processes} 
\vspace{1.5mm}
\end{minipage}
\\
 \hline
\\
\begin{minipage}{0.45\textwidth}
\justifying
\vbox{\noindent In the optimal thermodynamic process of \mbox{$\epsilon$-approximate} transformation from many independent non-equilibrium systems into systems without fluctuations of free energy, the dissipated free energy satisfies }
\vspace{2mm}
\vbox{\begin{equation}\label{eq:diss_intro}  F_{\rm diss}= a(\epsilon)\, \sigma(F), \end{equation}}
\vspace{2mm}
\vbox{\noindent where $\sigma(F)$ is the free energy fluctuation in the initial state, and}
\vbox{\begin{equation} \label{eq:a_intro}
    a(\epsilon)=-\Phi^{-1}(\epsilon)(1-\epsilon)+\frac{\exp\Big(\frac{-(\Phi^{-1}(\epsilon))^2}{2}\Big)}{\sqrt{2\pi}},  \end{equation}} 
\vbox{\noindent with $\Phi^{-1}(x)$ being the inverse of the Gaussian cumulative distribution function.}

\vbox{\begin{center}
\includegraphics{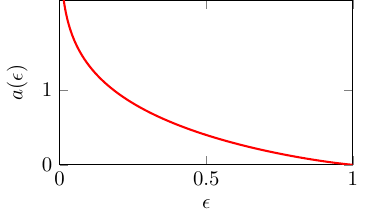} \end{center} }

\vbox{\noindent We have proved this statement for many independent systems in identical incoherent states, while for for many independent systems in identical pure states we proved that the right hand side of Eq.~\eqref{eq:diss_intro} is a lower bound for $F_{\mathrm{diss}}$. We conjecture that the equality in Eq.~\eqref{eq:diss_intro} holds for many independent systems in arbitrary mixed states.} 
\vspace{2mm}
\end{minipage}
\\ \hline
\end{tabular}
\caption{Summary of the second main result.\label{box2}}
\end{table}

Our results allow us for a rigorous study of important thermodynamic protocols. First of all, we extend the analysis of work extraction to the regime of not necessarily identical incoherent states, as well as to pure states. By directly applying our main results, we obtain a second-order asymptotic expression for the optimal transformation error while extracting a given amount of work from the initial system. Moreover, we also verify the accuracy of the obtained expression by comparing it with the numerically optimised work extraction process. As a second application, we analyse the optimal energetic cost of erasing $N$ independent bits prepared in arbitrary states. In this case, we obtained the optimal transformation error for the erasure process as a function of invested work. The last application we consider is the optimal thermodynamically-free communication scheme, i.e., the optimal encoding of information into a quantum system without using any extra thermodynamic resources. Applying our theorems gives us the optimal number of messages that can be encoded into a quantum system in a thermodynamically free-way, which we show to be directly related to the non-equilibrium free energy of the system. This result can be interpreted as the inverse of the Szilard engine, as in this process we use the ability to perform work to encode information. Furthermore, our results connect the fluctuations of free energy and the optimal average decoding error. Finally, our findings also provide new tools to study approximate transformations and corresponding asymptotic interconversion rates. Here, we not only extend previous distillation results~\cite{Chubb2018beyondthermodynamic} to non-identical systems, but also to genuinely quantum states in superposition of different energy eigenstates. 

The paper is organised as follows. We start in Sec.~\ref{sec:high} with a high-level description that can give a flavour of our investigations and explains the physical intuition behind them to a broad audience without the necessity to get into the technicalities of the framework we work in. We then recall the resource-theoretic approach to thermodynamics in Section~\ref{sec:setting} and introduce the necessary concepts used for applications. In Section~\ref{sec:results}, we state our main results concerning the optimal transformation error and fluctuation-dissipation relation for incoherent and pure states, discuss their thermodynamic interpretation and apply them to three thermodynamic protocols of work extraction, information erasure and thermodynamically-free communication. The technical derivation of the main results can be found in Section~\ref{sec:math}. Finally, we conclude with an outlook in Section~\ref{sec:out}. This way we prove a general fluctuation-dissipation relation for thermodynamic distillation processes.


\section{High-level description}
\label{sec:high}

Before we give a formal statement of the setting studied in this paper, let us present a high-level description to give a flavour of our investigations. In our aims to identify fluctuation-dissipation phenomenon in the realm of resource-theoretic approach to thermodynamics, we shall extract the following main feature captured by the original works of Einstein and Smoluchowski. Namely, in order to obtain any ordered motion of a state of the system that is subjected to random forces, we necessarily need to dissipate energy that is proportional to fluctuations of energy induced by these random forces. On the other hand, the main point of the resource theory of thermodynamics is the question, whether one state can be thermodynamically transformed into another one. In our framework, we shall therefore examine what is the effect of the fluctuations present in the initial state of the system on the minimal amount of dissipation during a state transformation process. As we shall see at the end of this section, the proper fluctuating and dissipated quantity in the thermodynamic context will be given by the free energy of the system. 

As a warm up, we shall start now with a simple example, where the goal is to draw work from a given system (this is indeed an example of state transformation if one includes explicitly an ancillary weight system). More precisely, consider a model system with a continuous, non-degenerate energy spectrum with the ground state of energy $E_0=0$ that is prepared in a probabilistic mixture $\rho$ of different energy eigenstates $\ket{E}$ corresponding to energy $E$, i.e.,
\begin{equation}
    \rho=\int\limits_{0}^{\infty} p(E) \ketbra{E}{E}\, dE, 
\end{equation}
where $p(E)$ is a probability density function describing the system's distribution over energy levels. Our aim is now to use this model system with probabilistic (``fluctuating'') amounts of energy to make an almost deterministic (``ordered'') change of energy of another system. More formally, we are interested in performing $\epsilon$-deterministic work $W$, i.e., in changing the state of the ancillary weight system from one energy eigenstate $\ket{W_0}$ to another energy eigenstate $\ket{W_0+W}$ with probability $1-\epsilon$. 

How can we achieve this? If the distribution $p(E)$ is vanishing for $E\in[0,W]$, then we can couple the two systems and transfer the amount of energy $W$ between them by simply shifting the entire distribution $p(E)$ down by $W$, while moving the weight system up by $W$ (see the top panel of Fig.~\ref{fig:highleveldescript}). Similarly, if the bulk of $p(E)$ is localized far away from the ground energy 0, we can try to perform an analogous protocol, but this time we will fail with probability
\begin{equation}
    \epsilon=\int\limits_{0}^{W} p(E)\, dE,
\end{equation}
since the states with $E\in[0,W]$ cannot be lowered by $W$, as they would need to to be lowered below the ground state.

\begin{figure}[t]
    \centering
    \includegraphics{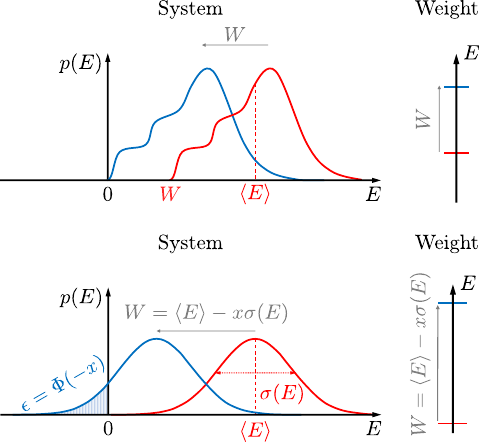}
    \caption{\label{fig:highleveldescript} \textbf{Transforming ``fluctuating'' to ``deterministic'' energy.} \textit{Top:} Despite fluctuations of energy, the lowest occupied state of the system is far away from the ground state, and so deterministic amount of work $W$ can drawn from it. However, $W$ is much smaller than the average energy content $\langle E\rangle$ of the system.\textit{ Bottom:} accepting probability of failure $\epsilon$, one can extract deterministic amount of work approaching the average energy $\langle E\rangle$, with the loss proportional to energy fluctuations $\sigma(E)$, where the proportionality constant is determined by $\epsilon$. Here, the initial distribution is assumed to be Gaussian with average $\langle E\rangle$ and standard deviation $\sigma(E)$.}
\end{figure}
To illustrate this more clearly, consider an important example of $p(E)$ given by a Gaussian distribution with a mean $\langle E\rangle$ and standard deviation $\sigma$:
\begin{equation}
    p(E) = \frac{1}{\sqrt{2\pi} \sigma} \exp\left(-\frac{1}{2}\frac{(E-\langle E\rangle)^2}{\sigma^2}\right).
\end{equation}
The importance of this example stems from the fact that in thermodynamics we are interested in total energy distributions of a large number $N$ of particles, and results like the central limit theorem tell us that the distributions of total quantities in such a case are approximated very well by Gaussian distributions. Of course, a Gaussian distribution is non-vanishing below the ground state energy 0, but as long the average energy $\langle E \rangle$ is far away from zero this can be neglected for the sake of our example. In the bottom panel of Fig.~\ref{fig:highleveldescript}, we present how shifting down a Gaussian distribution by its mean $\langle E\rangle$ decreased by a number of standard deviations $x\sigma$ results in an error $\epsilon=\Phi(-x)$, where $\Phi$ is the cumulative normal distribution function. Thus, for a fixed success probability of extracting work $W$ we can extract the average energy content of the system, $\langle E\rangle$, decreased by the quantity proportional to energy fluctuations $\sigma$. In other words, in order to transform the fluctuating type of energy into (almost) deterministic one, we need to lose (dissipate) some of it due to fluctuations. This simple scenario gives an intuition for why fluctuations may be related to dissipation.

Of course, the toy example we analysed above does not account for many features of realistic scenarios. First of all, it deals merely with mechanical work, whereas in thermodynamics one also has access to a thermal bath and can use it to draw even more thermodynamical work. Second, when considering systems of many particles we do not deal with non-degenerate spectrum, but rather at each energy we have a corresponding density of states. As a result, one may not be able to simply shift the distribution down, as there may be less low energy states then high energy states. Next, in the analysed example we only considered the protocol of work extraction, which is a very particular type of a general thermodynamic state transformation that physicists are interested in. Finally, since we deal with quantum mechanics, within each degenerate energy subspace we may deal with coherent superposition of states that can constructively or destructively interfere. Hence, the picture gets even more complex and requires a formalism that can account both for coherent and incoherent contributions to fluctuations. 

Despite these complications, in our work we prove that the original intuition from the simple toy example can be extended to general quantum thermodynamic scenarios with all the features described above. The crucial modification needed is the replacement of the concept of average energy and its fluctuations (relevant in the case of mechanical systems) by average free energy and its fluctuations (relevant for thermodynamic scenarios). To account for quantum systems prepared in arbitrary non-equilibrium states, one needs to use the non-equilibrium quantum generalisation of the classical expression for free energy, which also allows one to rigorously define the concept of free energy fluctuations. We prove that with these modifications in place, one can employ the intuition described above to investigate general thermodynamic distillation processes, which transform generic states with fluctuations of free energy into states with no free energy fluctuations (the equivalent of ``ordered energy'' states). More precisely, we show that for a fixed success probability of transformation $(1-\epsilon)$, during such a process the amount of free energy must be dissipated that is proportional to the initial free energy fluctuations (with the proportionality constant given in Eq.~\eqref{eq:a_intro}).


\section{Setting the scene}
\label{sec:setting}


\subsection{Thermodynamic distillation processes}
\label{sec:distillation}

In order to formally define the thermodynamic distillation process, we first need to identify the set of thermodynamically-free states and transformations. By definition, a state of the system that is in equilibrium with a thermal environment $E$ at inverse temperature $\beta$ is a free state. Therefore, for a system described by a Hamiltonian $H$, the only free state is given by the thermal Gibbs state
\begin{equation}
\label{thermal_state}
    \gamma = \frac{e^{-\beta H}}{Z}, \qquad Z = \tr{e^{-\beta H}}.
\end{equation}
The set of free transformations that we consider is given by \emph{thermal operations}~\cite{Janzing2000, brandao2015second, horodecki2013fundamental}, which act on the system as
\begin{equation}
\label{eq:thermal_ops}
    \E(\rho)=\trr{E'}{U\left(\rho\otimes\gamma_E\right)U^{\dagger}},
\end{equation}	
where $U$ is a joint unitary acting on the system and the thermal environment $E$ that is described by a Hamiltonian $H_E$ and is prepared in a thermal Gibbs state $\gamma_E$ at inverse temperature $\beta$. Moreover, $U$ is commuting with the total Hamiltonian of the system and bath, \mbox{$[U, H\otimes \iden_E+ \iden\otimes H_E] = 0$}, and at the end we discard any subsystem $E'$ of the joint system of the considered system and environment.

A \emph{thermodynamic distillation} process is a thermodynamically free transformation from a general \emph{initial system} described by a Hamiltonian $H$ and prepared in a state $\rho$, to a \emph{target system} described by a Hamiltonian $\tilde{H}$ and in a state $\tilde{\rho}$ that is an eigenstate of $\tilde{H}$.\footnote{In fact, all of our results apply to a slightly more general setting with target states being proportional to the Gibbs state on their support, e.g. for \mbox{$\tilde{\rho}=\frac{\tilde{\gamma}_k}{\tilde{\gamma}_k+\tilde{\gamma}_l}\ketbra{\tilde{E}_k}{\tilde{E}_k}+\frac{\tilde{\gamma}_l}{\tilde{\gamma}_k+\tilde{\gamma}_l}\ketbra{\tilde{E}_l}{\tilde{E}_l}$}, where $\ket{\tilde{E}_i}$ denotes the eigenstate of $\tilde{H}$ and $\tilde{\gamma}_i$ is its thermal occupation.} An \emph{$\epsilon$-approximate thermodynamic distillation} process from $(\rho,H)$ to $(\tilde{\rho},\tilde{H})$ is a thermal operation that transforms the initial system $(\rho,H)$ to the \emph{final system} $(\rho_{\rm fin},\tilde{H})$ with $\rho_{\rm fin}$ being $\epsilon$ away from $\tilde{\rho}$ in the infidelity distance $\delta$,
\begin{equation}
        \delta(\rho_1,\rho_2):=1-\left(\mathrm{Tr}{\sqrt{\sqrt{\rho_1}{\rho_2}\sqrt{\rho_1}}}\right)^2.
\end{equation}
We say that $\rho$ is \emph{energy incoherent} if it is a convex combination of eigenstates of $H$. 

In this paper, we will study the distillation process from $N$ independent initial systems to arbitrary target systems, e.g., to $\tilde{N}$ independent target systems as illustrated in Fig.~\ref{fig:distillation}. In particular, we will be interested in the asymptotic behaviour for large $N$. Thus, our distillation setting is specified by a family of initial and target systems indexed by a natural number $N$. For each fixed $N$, the initial system $(\rho^N,H^N)$ consists of $N$ non-interacting subsystems with the total Hamiltonian $H^N$ and a state $\rho^N$ given by
\begin{equation}\label{eq:initial}
    H^N=\sum_{n=1}^N H^N_{n},\qquad \rho^N=\bigotimes_{n=1}^N \rho^N_{n}, 
\end{equation}
while the target system is described by an arbitrary Hamiltonian $\tilde{H}^N$ and a state $\tilde{\rho}^N=\ketbra{\tilde{E}_k^{N}}{\tilde{E}_k^{N}}$, with $\ket{\tilde{E}_k^{N}}$ being an eigenstate of $\tilde{H}^N$ corresponding to some energy $\tilde{E}_k^{N}$. Note that since $\tilde{H}^N$ is arbitrary, it does not need to describe $N$ particles; in fact, it can even be a Hamiltonian of a single qubit. 

A typical example of this setting is when initial and target systems are given by copies of independent and identical subsystems. More precisely, in this case, the family of initial systems is given by $H^N$ with $H^N_n=H$ and $\rho^N=\rho^{\otimes N}$, while the family of target systems is given by $\tilde{N}$ subsystems, each with a Hamiltonian $\tilde{H}$ and in a state $\ketbra{\tilde{E}_k}{\tilde{E}_k}$. One is then interested in the optimal distillation rate $\tilde{N}/N$ as $N$ tends to infinity. However, we will investigate a more general setting, allowing the subsystems to differ in both state and Hamiltonian, as long as the initial state is uncorrelated.

\begin{figure}[t]
    \centering
    \includegraphics{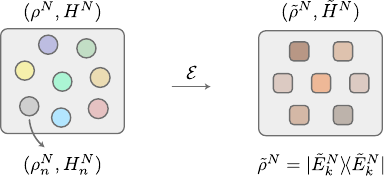}
    \caption{\label{fig:distillation} \textbf{Thermodynamic distillation process.} The arrow depicts the existence of a thermal operation transforming  $N$ independent initial systems to $\tilde{N}$ independent target systems. The circles and squares represent the initial and target systems with each subsystem described by a different Hamiltonian and prepared in a different state.} 
\end{figure}


\subsection{Work extraction}
\label{sec:work-def}

One of the manifestations of the second law of thermodynamics is that for a system interacting with a bath in thermal equilibrium, the maximum amount of work that it can perform (that can be extracted from the system) is bounded by the difference $\Delta F$ between its initial and final free energy. Traditionally, the free energy $F = U-S/\beta$ has been defined only for states at thermal equilibrium, with $U$ denoting the internal energy and $S$ the entropy of the system. However, taking into account its operational meaning, one can extend its definition to investigate also the case of non-equilibrium states. More precisely, the relative entropy,
\begin{equation}
    D(\rho \|\gamma):=\tr{\rho (\log \rho-\log \gamma )},    
\end{equation}
can be interpreted as a non-equilibrium generalisation of the free energy difference between a state $\rho$ and a thermal state $\gamma$. It quantifies the maximum amount of work that can be extracted on average from the system in an out-of-equilibrium state~\cite{Esposito2009, Parrondo2015}.

Generally, work extraction protocols are based on controlling and changing the external parameters that define the Hamiltonian of the system~\cite{Alicki_1979, Aberg2013}. Within a resource-theoretic treatment~\cite{horodecki2013fundamental, Skrzypczyk2014}, however, we avoid using an external agent. Therefore, we explicitly model the ancillary \textit{battery system} $B$, intending to transform it from an initial pure energy state to another pure energy state with higher energy, see Fig.~\ref{fig:work_extraction}. A continuous Hamiltonian usually describes the battery, but we can as well choose a Hamiltonian with the discrete spectrum, as long as its energy differences coincide with the amount of work we want to extract. Without loss of generality, we focus on a two-level battery system described by a Hamiltonian $H^N_B$ with eigenstates $\ket{0}_B$ and $\ket{1}_B$ corresponding to energies $0$ and $W^N_{\rm ext}$, respectively. The possibility of extracting the amount of work equal to $W^N_{\rm ext}$ from $N$ subsystems described by a Hamiltonian $H^N$ and prepared in a state $\rho^N$ is then equivalent to the existence of a thermodynamic distillation process
\begin{equation}
    \E(\rho^N \otimes \ketbra{0}{0}_B) = \ketbra{1}{1}_B \,,    
\end{equation}
from $(N+1)$ initial subsystems described by a Hamiltonian $H^N+H^N_B$ to a target subsystems with a Hamiltonian~$H^N_B$. If only an $\epsilon$-approximate distillation with transformation error $\epsilon_N$ is possible, then $\epsilon_N$ directly measures the quality of extracted work, i.e., with probability $1-\epsilon_N$ we end up with a battery system in an excited state of energy $W^N_{\rm ext}$.

\begin{figure}[t]
    \centering
    \includegraphics{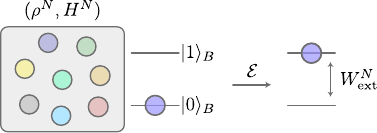}
    \caption{\label{fig:work_extraction} \textbf{Work extraction process.} Extraction of work $W^N_{\rm ext}$ from $N$ subsystems described by a Hamiltonian $H^N$ and prepared in a state $\rho^N$ can be seen as a particular case of thermodynamic distillation process $\E$ involving a battery system~$B$. The battery is modelled by a two-level system with energy levels $\ket{0}_B$ and $\ket{1}_B$ corresponding to energies $0$ and $W^N_{\rm ext}$, respectively. The initial system is given by the investigated $N$ subsystems with a battery in the ground state $\ket{0}_B$, while the target system is given just by the battery in the excited state~$\ket{1}_B$.
    }
\end{figure}


\subsection{Information erasure}
\label{sec:erasure-def}

The connection between information and thermodynamics is as old as the thermodynamic theory itself, going back to the thought experiment known as Maxwell's demon~\cite{maxwell1872theory}. It suggests that if one has information about the particles' positions and momenta, one can reduce the entropy of a gas of particles without investing work, and thus violate the second law of thermodynamics. However, the recognition of the thermodynamic significance of information is perhaps best captured by the Szilard's engine~\cite{Szilard1929}, a simple setup that converts information into work. As in the Maxwell's demon example, the Szilard engine can overcome the second law of thermodynamics whenever some information about the state of the system is available. During the resolution of this puzzle, it became clear that thermodynamics imposes physical constraints on information processing. In particular, the second law can be reformulated as a statement that no thermodynamic process can result solely in the erasure of information. Every time information is erased, the erasure process is accompanied by a fundamental heat cost, i.e., an entropy increase in the environment~\cite{Bennett1982}.  Alternatively, the Landauer's Principle~\cite{Landauer1961} tells us that the erasure process has an unavoidable energetic cost, with the minimum possible amount of energy required to erase a completely unknown bit of information given by $\log 2 / \beta$ (see Ref.~\cite{AlickiH2019} in this context, where a more nuanced view on Szilard engine and Landauer erasure is presented).

Similarly to the case of work extraction, the erasure process can also be formulated as a particular type of thermodynamic distillation process. The $N$ bits of information that one wants to erase can be represented by $N$ two-level systems in a state $\rho^N$ with a trivial Hamiltonian. We also add the two-level battery system $B$ initially in an excited state $\ket{1}_B$ of energy $W^N_{\rm cost}$ to measure the energetic cost of erasure. Then, the erasure process resetting the state $\rho^N$ to a fixed state $\ket{0}^{\otimes N}$ is possible while investing $W^N_{\rm cost}$ work, if there exists the following distillation process:
\begin{equation}
    \E(\rho^N \otimes \ketbra{1}{1}_B) = \ketbra{0}{0}^{\otimes N}\otimes\ketbra{0}{0}_B \,,    
\end{equation}
with the initial and target Hamiltonians being identical. The transformation error quantifies the quality of erasure, and the process is illustrated in Fig.~\ref{fig:erasure}.
 
 \begin{figure}[t]
    \centering
    \includegraphics{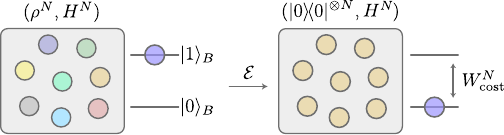}
    \caption{\label{fig:erasure} \textbf{Information erasure.} The $N$ bits of information to be erased are represented by $N$ subsystems in a state $\rho^N$ with a trivial Hamiltonian $H^N$. The process is performed by attaching a battery system $B$ in an excited state $\ket{1}_B$ with energy $W^N_{\rm cost}$, which measures the energetic cost of erasure. The erasure process resets the state $\rho^N$ to a fixed state $\ket{0}^{\otimes N}$, and de-excites the battery system.} 
\end{figure}


\subsection{Thermodynamically-free communication}
\label{sec:encoding-def}

Since thermodynamics is closely linked with information processing, one can also study thermodynamic aspects of communication. A traditional communication scenario in which Alice wants to encode and transmit classical information to Bob over a quantum channel consists of the following three steps~\cite{wilde2013quantum}. First, she encodes a message \mbox{$m \in \{1, ..., M \}$} by preparing a quantum system in a state $\rho_m$. Then, she sends it to Bob via a noisy quantum channel~$\mathcal{N}$. Finally, Bob decodes the original message by performing an optimal measurement on $\mathcal{N}(\rho_m)$. Crucially, in this standard scenario, both Alice and Bob are completely unconstrained, meaning that they can employ all encodings and decodings for free, and the only thing beyond their control is the noisy channel~$\N$.

Recently, a modification of this scenario was introduced that allows one to quantify the thermodynamic cost of communication~\cite{narasimhachar2019quantifying,korzekwa2019encoding}. More precisely, it is assumed for simplicity that Alice and Bob are connected via a noiseless channel, and Bob's decoding is still unconstrained. However, Alice is constrained to \emph{thermodynamically-free encodings}, meaning that encoded states $\rho_m$ can only arise from thermal operations acting on a given initial state $\rho$, interpreted as an \emph{information carrier}. Physically, this means that Alice obeys the second law of thermodynamics, in the sense that the encoding channel is constrained to use no thermodynamic resources other than the ones initially present in the information carrier $\rho$. We illustrate this process in Fig.~\ref{fig:encoding}.

Now, the central question is: what is the optimal number of messages $M(\rho,\epsilon_{\mathrm{d}})$ that can be encoded into $\rho$ in a thermodynamically-free way, so that the average decoding error is smaller than $\epsilon_{\mathrm{d}}$? We will investigate the case when the information carrier is given by $N$ independent systems in a state $\rho^N$ and with a Hamiltonian $H^N$, as specified in Eq.~\eqref{eq:initial}. Then, instead of asking for $M(\rho^N,\epsilon_{\mathrm{d}})$, we can equivalently ask for the optimal encoding rate:
\begin{equation}
\label{eq:optimal-encoding-rate}
    R(\rho^N, \epsilon_{\mathrm{d}}) := \frac{\log [M(\rho^N, \epsilon_{\mathrm{d}})]}{N} \,.   
\end{equation}
As we will explain later in the paper, the optimal thermodynamically-free encodings (i.e., the ones that allow one to achieve the optimal rate $R$) can be chosen to be given by thermodynamic distillation processes. Through this connection and our results on optimal distillation processes, we will derive second-order asymptotic expansion of $R(\rho^N, \epsilon_{\mathrm{d}})$ working for large $N$.

\begin{figure}[t]
    \centering
    \includegraphics{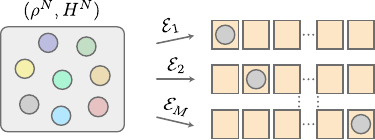}
    \caption{\label{fig:encoding}\textbf{Thermodynamically-free encoding.} The thermal encoding of information can be captured by a thermodynamic distillation process by considering $N$ independent subsystems in a state $\rho^N$ and with a Hamiltonian $H^N$ as an information carrier. The sender encodes a message \mbox{$m \in \{1, ..., M \}$} into it by applying a thermal operation $\E_{m}$ that transforms $\rho^N$ into mutually (almost) distinguishable states, and the receiver decodes the original message by performing a measurement on $\E_m(\rho^N)$.} 
\end{figure}


\subsection{Information-theoretic notions and their thermodynamic interpretation}

Finally, before we proceed to present our results, let us introduce the necessary information-theoretic quantities together with their thermodynamic interpretation. For any $d$-dimensional quantum state~$\rho$, we define the relative entropy~$D$ between~$\rho$ and a thermal Gibbs state~$\gamma$, together with the corresponding relative entropy variance~$V$ and the function~$Y$ related to relative entropy skewness~\cite{Tomamichel2013,Chubb2018beyondthermodynamic,boes2020variance}:
\begin{subequations}
	\begin{align}
	D(\rho\|\gamma):=&\tr{\rho \left(\log \rho-\log\gamma\right)},\label{eq:D}\\
	V(\rho\|\gamma):=&\tr{\rho \left(\log \rho-\log\gamma-D(\rho\|\gamma)\right)^2},\label{eq:V}\\
	Y(\rho\|\gamma):=&\tr{\rho \left|\log \rho-\log\gamma - D(\rho\|\gamma)\right|^3}.\label{eq:W}
	\end{align}
\end{subequations}

It is clear from the above definitions that we are dealing with the average, variance and the absolute third moment of the ``random variable'' \mbox{$\log\rho-\log\gamma$}. As already mentioned, the average of this random variable, $D(\rho\|\gamma)$, can be interpreted as the non-equilibrium free energy of the system since
\begin{equation}
    \label{eq:rel_ent_gibbs}
    \frac{1}{\beta}D(\rho\|\gamma)= \tr{\rho H}-\frac{S(\rho)}{\beta}+\frac{\log Z}{\beta},
\end{equation} 
where
\begin{align}
    S(\rho):=&-\tr{\rho \log \rho}
\end{align}
is the von Neumann entropy. The higher moments can then be understood as fluctuations of the non-equilibrium free energy content of the system. This is most apparent for pure states $\rho=\ketbra{\psi}{\psi}$, as $V$ then simply describes energy fluctuations of the system:
\begin{align}
    \label{eq:energy_fluct}
    \frac{1}{\beta^2}V(\ketbra{\psi}{\psi}\|\gamma)&=\bra{\psi} H^2\ket{\psi}-\bra{\psi} H\ket{\psi}^2.
\end{align}
Moreover, as noted in Ref.~\cite{Chubb2018beyondthermodynamic}, when $\rho=\gamma'$ is a thermal distribution at some different temperature \mbox{$T'\neq T$}, the expression for $V$ becomes
\begin{equation}
\label{eq:var_capacity}
    V(\gamma'\|\gamma)=\left(1-\frac{T'}{T}\right)^2 \cdot\frac{c_{T'}}{k_B},
\end{equation}
where
\begin{equation}
    c_{T'}=\frac{\partial}{\partial T'} \tr{\gamma'H}
\end{equation}
is the specific heat capacity of the system in a thermal state at temperature $T'$, and $k_B$ is the Boltzmann constant. 

Now, for the initial system $(\rho^N,H^N)$, we introduce the following notation for free energy and free energy fluctuations:
    \begin{subequations}
    \begin{align}
        F^N := \frac{1}{\beta }\sum_{n=1}^N D(\rho^{N}_n\|\gamma^{N}_n),\label{eq:dn}\\
        \sigma^2(F^N) := \frac{1}{\beta^2 }\sum_{n=1}^N V(\rho^{N}_n\|\gamma^{N}_n),\label{eq:vn}\\ 
        \kappa^3(F^N) := \frac{1}{\beta^3 }\sum_{n=1}^N Y(\rho^{N}_n\|\gamma^{N}_n).\label{eq:yn}
    \end{align}
\end{subequations}
We also introduce
\begin{equation}
    \label{eq:deltaF}
    \Delta F^N := \frac{1}{\beta }\left(\sum_{n=1}^N D(\rho^N_n\|\gamma^N_n)-D(\tilde{\rho}^N\|\tilde{\gamma}^N)\right),
\end{equation}
which describes the free energy difference between the initial and target states, as well as
\begin{equation}
    \label{eq:F_diss}
    F^N_{\rm diss} := \frac{1}{\beta }\left(\sum_{n=1}^N D(\rho^N_n\|\gamma^N_n)-D(\rho^N_{\rm fin}\|\tilde{\gamma}^N)\right),
\end{equation}
which quantifies the amount of free energy that is dissipated in the distillation process, i.e., the free energy difference between the initial and final states. 

Let us also make two final technical comments. First, we only consider families of initial systems for which the limits of $\sigma^2(F^N)/N$ and $\kappa^3(F^N)/N$ as $N\rightarrow \infty$ are well-defined and non-zero. Second, in what follows, we will use a shorthand notation with $\simeq$, $\lesssim$ and $\gtrsim$ denoting equalities and inequalities up to terms of order $o(\sqrt{N})$.


\section{Results}
\label{sec:results}


\subsection{Optimal distillation error and fluctuation-dissipation relations} 
\label{sec:FDR-incoherent}

The first pair of our main results concerns thermodynamic distillation process from incoherent systems. The following two theorems connect the optimal distillation error to the free energy fluctuations of the initial state of the system, and the minimal amount of free energy dissipated in such a distillation process to these fluctuations.

\addtocounter{thm}{1}
\begin{subthm}[Optimal distillation error for incoherent states]
    \label{thm:incoherent}
    For a distillation setting with energy-incoherent initial states, the transformation error $\epsilon_N$ of the optimal $\epsilon$-approximate distillation process in the asymptotic limit is given by
    \begin{equation}
    \label{eq:error_incoherent}
        \lim_{N\to\infty}\epsilon_N = \lim_{N\to\infty}\Phi\left(-\frac{\Delta F^N}{\sigma(F^N)} \right),
    \end{equation}
    where $\Phi$ denotes the cumulative normal distribution function. Moreover, for any $N$ there exists an $\epsilon$-approximate distillation process with the transformation error $\epsilon_N$ bounded by
    \begin{align}
        \label{eq:error_bound}
        \epsilon_N&\leq \Phi\left(-\frac{\Delta F^N}{\sigma(F^N)} \right)+\frac{C\kappa^3(F^N)}{\sigma^3(F^N)},
    \end{align}
    where $C$ is a constant from the Berry-Esseen theorem that is bounded by
    \begin{equation}
        0.4097\leq C \leq 0.4748.
    \end{equation}
\end{subthm}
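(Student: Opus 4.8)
The plan is to reduce the distillation to a classical thermomajorization problem and then extract its second-order asymptotics through a Gaussian (Berry--Esseen) analysis of a log-likelihood variable. Since the inputs are energy-incoherent, a thermal operation acts on the populations as a Gibbs-stochastic map, and the infidelity between a diagonal output and the target eigenstate $\ketbra{\tilde E_k^N}{\tilde E_k^N}$ is one minus the population placed in slot $k$. Hence $\epsilon_N = 1 - \max_G (G\,\v p^N)_k$ over Gibbs-stochastic maps $G$, where $\v p^N$ is the population vector of $\rho^N$. By the standard thermomajorization characterisation this maximum equals the value at $x=\tilde\gamma_k$ of the thermomajorization (Lorenz) curve $L_{\v p^N}$ of $\v p^N$ relative to $\v\gamma^N$, where $\tilde\gamma_k=e^{-D(\tilde\rho^N\|\tilde\gamma^N)}$ is the thermal weight of the target, so that $\epsilon_N = 1 - L_{\v p^N}(\tilde\gamma_k)$. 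Throughout I work with the additive variable $\ell := \log\rho^N-\log\gamma^N = \sum_{n=1}^N \ell_n$, which under the initial populations $\v p^N$ has mean $\beta F^N$, variance $\beta^2\sigma^2(F^N)$ and absolute third central moment $\beta^3\kappa^3(F^N)$ by Eqs.~\eqref{eq:dn}--\eqref{eq:yn}.

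For the achievability bound \eqref{eq:error_bound} I would avoid computing the optimal curve and instead exhibit one feasible set. Writing $D_k:=D(\tilde\rho^N\|\tilde\gamma^N)$, take $S=\{i:\ell_i\ge D_k\}$. Because $\gamma_i = p_i e^{-\ell_i}\le p_i e^{-D_k}$ on $S$, its Gibbs weight obeys $\sum_{i\in S}\gamma_i \le e^{-D_k}\sum_{i\in S}p_i\le e^{-D_k}=\tilde\gamma_k$, so $S$ fits under the budget $x=\tilde\gamma_k$ and therefore $L_{\v p^N}(\tilde\gamma_k)\ge \sum_{i\in S}p_i = P_{\v p^N}(\ell\ge D_k)$. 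Consequently $\epsilon_N \le P_{\v p^N}(\ell< D_k)$, and the Berry--Esseen theorem applied to the sum $\ell=\sum_n\ell_n$ of independent variables gives $P_{\v p^N}(\ell<D_k)\le \Phi\!\big(\tfrac{D_k-\beta F^N}{\beta\sigma(F^N)}\big)+C\,\kappa^3(F^N)/\sigma^3(F^N)$. Since $\tfrac{D_k-\beta F^N}{\beta\sigma(F^N)}=-\Delta F^N/\sigma(F^N)$, this is exactly Eq.~\eqref{eq:error_bound}.

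To upgrade this to the two-sided limit \eqref{eq:error_incoherent} I would prove the matching converse, namely that no feasible set does essentially better, i.e.\ $L_{\v p^N}(\tilde\gamma_k)\le P_{\v p^N}(\ell\ge D_k)+o(1)$. Parametrising the curve by the threshold $\lambda$, its coordinates are the tails $\big(P_{\v\gamma^N}(\ell\ge\lambda),P_{\v p^N}(\ell\ge\lambda)\big)$, and the optimal error is $\epsilon_N=P_{\v p^N}(\ell<\lambda^*)$ with $\lambda^*$ fixed by $P_{\v\gamma^N}(\ell\ge\lambda^*)=\tilde\gamma_k$. The two tails are tied by the exact exponential-tilting identity $P_{\v\gamma^N}(\ell\ge\lambda)=\mathbb E_{\v p^N}\!\big[e^{-\ell}\,\mathbbm{1}_{\ell\ge\lambda}\big]$, which follows from $p_i=\gamma_i e^{\ell_i}$. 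Approximating the law of $\ell$ under $\v p^N$ by $\mathcal N(\beta F^N,\beta^2\sigma^2(F^N))$ and completing the square yields $P_{\v\gamma^N}(\ell\ge\lambda^*)\simeq e^{\beta^2\sigma^2(F^N)/2-\beta F^N}\,\Phi\!\big(\tfrac{\beta F^N-\beta^2\sigma^2(F^N)-\lambda^*}{\beta\sigma(F^N)}\big)$; setting this equal to $e^{-D_k}$, taking logarithms and using $\Phi(-u)\sim\varphi(u)/u$, every term is $o(\sqrt N)$ except the two of order $\sqrt N$, whose matching forces $\lambda^*=D_k-o(\sqrt N)$. Hence $\epsilon_N\ge\Phi(-\Delta F^N/\sigma(F^N))-o(1)$, and together with the achievability bound (whose Berry--Esseen remainder vanishes as $N\to\infty$) this gives Eq.~\eqref{eq:error_incoherent}.

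I expect the converse, and specifically its change-of-measure step, to be the main obstacle. The threshold $\lambda^*$ lies deep in the right tail of $\ell$ under $\v\gamma^N$ --- the event has probability $\tilde\gamma_k=e^{-\Theta(N)}$ --- so a direct central-limit estimate under $\v\gamma^N$ is invalid; the tilting identity is what rescues the analysis, since it re-expresses this large-deviation quantity as an expectation under $\v p^N$, for which $\lambda^*$ sits within $O(\sqrt N)$ of the mean where the Gaussian approximation is controlled. The delicate points are verifying that the tilted integral is dominated by this central window rather than by the far tail where the Gaussian is inaccurate, and confirming that the Lorenz-curve discretisation together with the $O(\log N)$ and $O(1)$ corrections are genuinely $o(\sqrt N)$, so that they drop out of the limit while the Berry--Esseen term survives as the explicit finite-$N$ correction in Eq.~\eqref{eq:error_bound}.
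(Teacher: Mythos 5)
Your proposal is correct in substance and, once unpacked, is essentially the paper's own proof written in unembedded thermomajorisation language. The identity $\epsilon_N=1-L_{\v{p}^N}(\tilde{\gamma}_k)$ is the Lorenz-curve version of the paper's reduction via embedding together with Lemma~\ref{lem:1shot-distill}, and your feasible set $S=\{i:\ell_i\ge D_k\}$ is exactly the paper's set $\{i:\hat{P}^N_i\ge 1/l(x)\}$ counted by $\chi_{\hat{\v{P}}^N}$, so your achievability half, including the remainder $C\kappa^3(F^N)/\sigma^3(F^N)$, coincides with Sec.~\ref{sec:proof1a}. The converse rests on the same mechanism, but the step you propose there does not survive as written: you want to \emph{evaluate} $P_{\v{\gamma}^N}(\ell\ge\lambda)=\mathbb{E}_{\v{p}^N}[e^{-\ell}\mathbbm{1}_{\ell\ge\lambda}]$ by substituting a Gaussian for the law of $\ell$ and completing the square, but for $\lambda$ within $O(\sqrt{N})$ of the mean this tilted tail is of size $\Theta(e^{-\lambda}/\sqrt{N})$, whereas Berry--Esseen controls the distribution function of $\ell$ only to additive precision $O(1/\sqrt{N})$; after the change of measure the approximation error is of the same order as the quantity itself (the value arises as a near-cancellation of two $\Theta(e^{-\lambda})$ terms), so the equation $P_{\v{\gamma}^N}(\ell\ge\lambda^*)=e^{-D_k}$ cannot be solved for $\lambda^*$ this way. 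The repair is that only a one-sided bound is needed: to get $\lambda^*\ge D_k-\delta\sqrt{N}$, and hence $\epsilon_N\ge P_{\v{p}^N}(\ell<D_k-\delta\sqrt{N})$, it suffices to show $\mathbb{E}_{\v{p}^N}[e^{-\ell}\mathbbm{1}_{\ell\ge D_k-\delta\sqrt{N}}]\ge e^{-D_k}$, which follows by truncating the expectation to $\ell\in[D_k-\delta\sqrt{N},\,D_k-\delta\sqrt{N}/2]$: there $e^{-\ell}\ge e^{\delta\sqrt{N}/2}e^{-D_k}$, and the window has standardised width $\Theta(1)$ and hence probability bounded away from zero by Berry--Esseen, so the product dwarfs $e^{-D_k}$ for large $N$. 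This is word for word the paper's unlabelled lemma, Eqs.~\eqref{eq:lower}--\eqref{eq:Expofc}, with $\alpha=e^{\delta\sqrt{N}}$, whose constant $c$ is precisely this truncated tilted expectation rescaled by $e^{\delta\sqrt{N}/2}$; sending $N\to\infty$ and then $\delta\to0$ finishes the converse as in Sec.~\ref{sec:incoherent_lower}. In short, the route is sound, the obstacle you flagged is the real one, and its resolution is window truncation rather than a sharper Gaussian estimate; the remaining loose ends you list (interpolation of the Lorenz curve at $\tilde{\gamma}_k$, the bookkeeping for a changing Hamiltonian) are handled in the paper by the embedding construction and are indeed negligible at the relevant order.
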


\addtocounter{thm}{1}
\addtocounter{subthm}{-1}
\begin{subthm}[Fluctuation-dissipation relation for incoherent states]
    \label{thm:incoherent2}
    The minimal amount of free energy dissipated in the optimal (minimising the transformation error $\epsilon$) distillation process from identical incoherent states asymptotically satisfies
    \begin{equation}
        \label{eq:diss}
        F^N_{\rm diss} \simeq a(\epsilon_N)\sigma(F^N),
    \end{equation}
    where 
    \begin{equation}
        \label{eq:a}
       a(\epsilon)=-\Phi^{-1}(\epsilon)(1-\epsilon)+\frac{\exp\Big(\frac{-(\Phi^{-1}(\epsilon))^2}{2}\Big)}{\sqrt{2\pi}}
    \end{equation}
    and $\Phi^{-1}$ is the inverse function of the cumulative normal distribution function $\Phi$.
\end{subthm}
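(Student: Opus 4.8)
The plan is to work throughout with the \emph{free-energy content} random variable and to reduce the statement to a single Gaussian integral. For the initial family of identical incoherent systems, the eigenvalues $p_i$ of $\rho^N$ together with the Gibbs weights $g_i$ of $\gamma^N$ define the variable $\ell=\log\rho^N-\log\gamma^N$ taking value $\ell_i=\log p_i-\log g_i$ with probability $p_i$; by Eqs.~\eqref{eq:dn}--\eqref{eq:vn} it has mean $\beta F^N$ and variance $\beta^2\sigma^2(F^N)$, and since the subsystems are identical it is a sum of $N$ i.i.d. terms. Standardising to $X=(\ell/\beta-F^N)/\sigma(F^N)$, the central limit theorem makes $X$ asymptotically $\mathcal N(0,1)$, with Berry--Esseen controlling the correction at order $o(\sqrt N)$. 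The target eigenstate has deterministic free-energy content $D(\tilde\rho^N\|\tilde\gamma^N)=-\log\tilde\gamma^N_k=:\beta F^N+\beta\sigma(F^N)z$, which defines the standardised threshold $z$; by Theorem~\ref{thm:incoherent} the optimal error is $\epsilon_N=\Phi(z)$, i.e. $z=\Phi^{-1}(\epsilon_N)$.

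The core step is to identify the free-energy-content distribution of the \emph{optimal} final state $\rho^N_{\rm fin}$, using the thermo-majorisation characterisation underlying Theorem~\ref{thm:incoherent}. The optimal protocol splits the initial population into two branches: the high-content ``successful'' branch $X>z$ is driven onto the target level, dumping its excess free energy into the bath and ending at the deterministic content $D(\tilde\rho^N\|\tilde\gamma^N)$ with total weight $1-\epsilon_N$; the low-content ``failed'' branch $X<z$ cannot reach the target and is carried over, retaining its original free-energy content with weight $\epsilon_N$. Since the relative entropy depends only on this content distribution, I would write
\[
D(\rho^N_{\rm fin}\|\tilde\gamma^N)=(1-\epsilon_N)\,(\beta F^N+\beta\sigma(F^N)z)+\mathbb{E}\!\left[\ell\,\mathbbm{1}_{X<z}\right].
\]

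Evaluating the failed-branch term by the Gaussian integrals $\Pr[X<z]=\epsilon_N$ and $\mathbb E[X\,\mathbbm{1}_{X<z}]=-\phi(z)$, with $\phi$ the standard normal density, gives $\mathbb E[\ell\,\mathbbm{1}_{X<z}]=\beta F^N\epsilon_N-\beta\sigma(F^N)\phi(z)$, so that
\[
\beta F^N-D(\rho^N_{\rm fin}\|\tilde\gamma^N)=\beta\sigma(F^N)\big(-z(1-\epsilon_N)+\phi(z)\big).
\]
Recalling $D(\rho^N\|\gamma^N)=\beta F^N$ and the definition~\eqref{eq:F_diss} of $F^N_{\rm diss}$, and substituting $z=\Phi^{-1}(\epsilon_N)$, the right-hand side is precisely $\beta\sigma(F^N)\,a(\epsilon_N)$ for $a$ as in Eq.~\eqref{eq:a}, yielding $F^N_{\rm diss}\simeq a(\epsilon_N)\sigma(F^N)$. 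Equivalently $a(\epsilon_N)=\mathbb E[(X-z)_+]$, the expected free energy dissipated above the target threshold, which is the physical content of the relation.

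The step I expect to be the main obstacle is the structural claim in the second paragraph: that the optimal final state really does pin the successful weight at the target content while the failed branch retains the truncated initial content, with no hidden $O(\sqrt N)$ free energy leaking between branches. Because $D(\rho^N_{\rm fin}\|\tilde\gamma^N)$ couples populations and entropy, a naive guess for the error branch produces spurious $O(N)$ terms that must cancel; showing rigorously that the optimal thermo-majorising final state has exactly the two-branch content distribution above---and that the $O(1)$ binary-entropy contribution of the success/failure split is genuinely subleading---is the delicate part, and is where the identical-subsystem (i.i.d.) assumption and the Berry--Esseen bound do the real work.
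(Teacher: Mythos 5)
Your proposal is correct and follows essentially the same route as the paper: the two-branch structure of the optimal final state that you flag as the main obstacle is exactly what the paper establishes in Lemma~\ref{Opt_Diss_free} (with the ``no hidden free energy leaking'' concern handled by the embedding-box uniformisation argument of Appendix~\ref{app:DDF}), and your evaluation of \mbox{$\mathbb{E}[\ell\,\mathbbm{1}_{X<z}]$} via the truncated Gaussian first moment is the paper's computation of \mbox{$\sum_{i\leq\tilde D_k}(\hat{\v{P}}^N)_i^{\downarrow}\log(\hat{\v{P}}^N)_i^{\downarrow}$} rewritten in terms of the free-energy-content random variable. The only presentational difference is that the paper works in the embedded picture with explicit upper and lower bounds via the $\chi$ function rather than directly with $\ell$, but the decomposition, the key lemma and the final integral are identical.
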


We prove the above theorems in Secs.~\ref{sec:proof1a}~and~\ref{sec:proof2a}, and here we will briefly discuss their scope and consequences. We start by noting that combining Eqs.~\eqref{eq:error_incoherent}~and~\eqref{eq:diss} yields the optimal amount of dissipated free energy as a function of $\Delta F^N$ and $\sigma(F^N)$:
\begin{align}
    F^N_{\rm diss} \simeq &  \left(1-\Phi\left(-\frac{\Delta F^N}{\sigma(F^N)}\right)\right)\Delta F^N \nonumber\\
    &+\frac{\exp\left(-\frac{(\Delta F^N)^2}{2\sigma^2(F^N)}\right)}{\sqrt{2\pi}}\sigma(F^N).\label{eq:diss_delta_sigma}
\end{align}
Now, for the analysed case of independent initial subsystems, free energy fluctuations $\sigma(F^N)$ scale as $\sqrt{N}$. Thus, we can distinguish three regimes, depending on how the free energy difference between the initial and target states, $\Delta F^N$, behaves with growing $N$:
\begin{equation}
    \lim_{N\to\infty} \frac{\Delta F^N}{\sqrt{N}}=\left\{
    \begin{array}{l}
         \infty,  \\
         -\infty, \\
         \alpha\in \mathbb{R}.
    \end{array}
    \right.
\end{equation}

The first case corresponds to the target state having much smaller free energy than the initial state (as compared to the size of free energy fluctuations). According to Eq.~\eqref{eq:error_incoherent}, the transformation error then approaches zero in the asymptotic limit; while according to Eq.~\eqref{eq:diss_delta_sigma}, the amount of dissipated free energy $F^N_{\rm diss}\simeq \Delta F^N$, i.e., up to second order asymptotic terms the target and final states have the same free energy. This means that one can get arbitrarily close to the target state with much lower free energy than the initial state. The second case corresponds to the target state having much larger free energy than the initial state. The transformation error then approaches one in the asymptotic limit, while the amount of dissipated free energy $F^N_{\rm diss}$ approaches zero. This means that it is impossible to even get slightly closer to the target state with much higher free energy than the initial state, and so the optimal process corresponds to doing nothing (that is why there is no free energy dissipated). 

Finally, the third case that forms the essence of Theorems~\ref{thm:incoherent}~and~\ref{thm:incoherent2} corresponds to the target state having free energy very close to that of the initial state (again, the scale is set by the magnitude of free energy fluctuations). Our theorems then directly link the optimal transformation error and the minimal amount of dissipated free energy in the process to the free energy fluctuations of the initial state of the system. For two processes with the same free energy difference $\Delta F^N$, the process involving the initial state with smaller fluctuations will yield a smaller transformation error according to Eq.~\eqref{eq:error_incoherent}. Similarly, since the derivative of Eq.~\eqref{eq:diss_delta_sigma} over $\sigma(F^N)$ for a fixed $\Delta F^N$ is always positive, states with smaller free energy fluctuations will lead to smaller free energy dissipation. As a particular example consider a battery-assisted distillation process, i.e. a thermodynamic transformation from \mbox{$(\rho^N\otimes \ketbra{1}{1}_B,H^N+H_B)$} to \mbox{$(\tilde{\rho}^N\otimes \ketbra{0}{0}_B,H^N+H_B)$}, where the energy gap of the battery system $B$ is $W^N_{\mathrm{cost}}$. Now, the quality of transformation from $\rho^N$ to $\tilde{\rho}^N$ (measured by transformation error $\epsilon_N$) depends on the amount of work $W^N_{\mathrm{cost}}$ that we invest into the process. As expected, to achieve $\epsilon\leq 1/2$, we need to invest at least the difference of free energies $[D(\tilde{\rho}^N\|\tilde{\gamma}^N)-D(\rho^N\|\gamma^N)]/\beta$. However, Theorem~\ref{thm:incoherent} tells us how much more work is needed to decrease the transformation error to a desired level: the more free energy fluctuations there were in $\rho^N$, the more work we need to invest.

Let us also compare our two theorems to the results presented in Ref.~\cite{Chubb2018beyondthermodynamic}. There, the authors studied the incoherent thermodynamic interconversion problem between identical copies of the initial system, $\rho^{\otimes N}$, and identical copies of the target system, $\tilde{\rho}^{\otimes \tilde{N}}$. Here, for the price of the reduced generality of the target state (it has to be an eigenstate of the target Hamiltonian), we obtained a four-fold improvement. First, our result applies to general independent systems, not only to identical copies. Second, the Hamiltonians of the initial and target systems can vary, which is particularly important for applications like work extraction or thermodynamically-free communication. Third, we went beyond the second-order asymptotic result and found a single-shot upper bound on the optimal transformation error $\epsilon_N$, Eq.~\eqref{eq:error_bound}, that holds for any finite $N$. Thus, even in the finite $N$ regime, one can get a guarantee on the transformation error that is approaching the asymptotically optimal value as $N\to\infty$. Finally, we derived the expression for the actual amount of dissipated free energy in the optimal process and related it to the fluctuations of the free energy content of the initial state.

Our second pair of main results is analogous to the first pair, but concerns thermodynamic distillation process from $N$ identical copies of a pure quantum system. Thus, the following two theorems connect the optimal distillation error to the free energy fluctuations of the initial state of the system, and the minimal amount of free energy dissipated in such a distillation process to these fluctuations. To formally state these theorems, we need to introduce a technical notion of a Hamiltonian with incommensurable spectrum. Given any two energy levels, $E_i$ and $E_j$, of such a Hamiltonian, there does not exist natural numbers $m$ and $n$ such that $m E_i=n E_j$. We then have the following results.

\addtocounter{thm}{-1}
\begin{subthm}[Optimal distillation error for identical pure states]
    \label{thm:pure}
    For a distillation setting with $N$ identical initial systems, each in a pure state $\ketbra{\psi}{\psi}$ and described by the same Hamiltonian $H$ with incommensurable spectrum, the transformation error $\epsilon_N$ of the optimal $\epsilon$-approximate distillation process in the asymptotic limit is given by
    \begin{equation}
        \label{eq:error_pure}
            \lim_{N\to\infty}\epsilon_N = \lim_{N\to\infty}\Phi\left(-\frac{\Delta F^N}{\sigma(F^N)} \right),
    \end{equation}
    where $\Phi$ denotes the cumulative normal distribution function. Moreover, the result still holds if both the initial and target systems get extended by an ancillary system with an arbitrary Hamiltonian $H_A$, with the initial and target states being some eigenstates of $H_A$.
\end{subthm}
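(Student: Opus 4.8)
The plan is to reduce the coherent (pure-state) distillation problem to an incoherent one and then invoke the analysis behind Theorem~\ref{thm:incoherent}. Two structural facts drive the reduction. First, since the target $\tilde\rho^N=\ketbra{\tilde E_k^N}{\tilde E_k^N}$ is an eigenstate of $\tilde H^N$, the transformation error sees only the diagonal of the final state in the target energy eigenbasis: for any $\rho_{\rm fin}$ one has $\delta(\rho_{\rm fin},\tilde\rho^N)=1-\matrixel{\tilde E_k^N}{\rho_{\rm fin}}{\tilde E_k^N}=1-\matrixel{\tilde E_k^N}{\D_{\tilde H^N}(\rho_{\rm fin})}{\tilde E_k^N}$, where $\D_{\tilde H^N}$ is the dephasing (time-averaging) map in the eigenbasis of $\tilde H^N$. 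Second, thermal operations are covariant under the respective time translations, so they commute with dephasing, $\D_{\tilde H^N}\circ\E=\E\circ\D_{H^N}$. Combining these, I would show that for \emph{every} thermal operation the error obtained from $\ketbra{\psi}{\psi}^{\otimes N}$ equals the error obtained from its globally dephased version $\sigma_N:=\D_{H^N}\!\big(\ketbra{\psi}{\psi}^{\otimes N}\big)$; hence the optimal coherent and incoherent distillation errors coincide, and it remains to analyse $\sigma_N$.

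The next step exploits the incommensurability of the spectrum of $H$. Writing $\ket{\psi}=\sum_i c_i\ket{E_i}$, incommensurability guarantees that distinct occupation vectors $\vec n=(n_1,\dots,n_d)$ with $\sum_i n_i=N$ yield distinct total energies $E_{\vec n}=\sum_i n_i E_i$, so the energy eigenspaces of $H^N$ are exactly the permutation (occupation) sectors. Global dephasing therefore leaves the state pure inside each sector, and I would show $\sigma_N=\sum_{\vec n} p_{\vec n}\ketbra{D_{\vec n}}{D_{\vec n}}$ is a mixture of orthogonal Dicke states with multinomial weights $p_{\vec n}=\binom{N}{\vec n}\prod_i|c_i|^{2 n_i}$. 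Because each $\ket{D_{\vec n}}$ lies entirely in the energy-$E_{\vec n}$ eigenspace, a free energy-conserving unitary acting within each sector maps $\sigma_N$ to a genuinely incoherent (energy-diagonal) state $\sigma_N'$ with the same energy distribution. Since prepending this unitary to any thermal operation is free, the distillation problem for $\sigma_N'$ is equivalent to the original one, and the incoherent analysis applies to $\sigma_N'$ because it depends only on the free-energy surprisal distribution.

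It then remains to check that the surprisal of $\sigma_N'$ reproduces $\Delta F^N$ and $\sigma(F^N)$ up to $o(\sqrt N)$. That surprisal is $\log p_{\vec n}-\log\gamma^N_{\vec n}=\beta E_{\vec n}+N\log Z+\log p_{\vec n}$. The energy part $\beta E_{\vec n}+N\log Z$ is a sum of $N$ i.i.d.\ single-copy terms, so by the central limit theorem its mean and variance reproduce $\sum_n D(\rho^N_n\|\gamma^N_n)$ and $\sum_n V(\rho^N_n\|\gamma^N_n)$, i.e.\ the pure-state values defining $\Delta F^N$ and $\sigma(F^N)$ through Eqs.~\eqref{eq:dn}~and~\eqref{eq:vn}. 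A local-limit (Stirling) estimate then shows that $\log p_{\vec n}$ has expectation $-H(\{p_{\vec n}\})=O(\log N)$ and $O(1)$ fluctuations, and is asymptotically uncorrelated with the (linear-in-deviation) energy part; it thus shifts the mean by only $o(\sqrt N)$ and the variance by $o(N)$. Feeding these second-order data into the incoherent error formula~\eqref{eq:error_incoherent} and the Berry-Esseen bound~\eqref{eq:error_bound} yields $\epsilon_N=\Phi(-\Delta F^N/\sigma(F^N))+o(1)$, as claimed.

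The main obstacle is exactly the treatment of the coherence retained by global dephasing: unlike the genuinely incoherent setting, $\sigma_N$ is not energy-diagonal, and one must argue that the permutation coherence organises into clean Dicke blocks that can be freely removed, and that the multinomial corrections ($H(\{p_{\vec n}\})$ and the fluctuations of $\log p_{\vec n}$) are $o(\sqrt N)$ and hence invisible at second order. This is precisely where incommensurability is essential, as it aligns energy eigenspaces with occupation sectors. The ancillary extension is then immediate: an ancilla prepared in an eigenstate of $H_A$ contributes a deterministic energy offset that shifts $\Delta F^N$ but adds neither coherence nor energy fluctuation, so both the covariance-based reduction and the central-limit analysis go through verbatim with $\sigma(F^N)$ unchanged.
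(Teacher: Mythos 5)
Your proposal follows essentially the same route as the paper: the covariance-based reduction to the dephased state is the paper's Lemma~\ref{lem:pure-to-incoherent}, the incommensurability-driven identification of energy eigenspaces with occupation sectors and the free diagonalisation within each sector reproduce the paper's construction of the multinomial distribution $\v{P}^N$ of Eq.~\eqref{eq:Multinomial_prob_vector}, and your splitting of the surprisal into an i.i.d.\ energy part plus an $O(\log N)$ multinomial correction is exactly the content of the paper's Appendices~\ref{app:log}~and~\ref{app:clt}. The only real difference is cosmetic: the paper does not quote the product-state Berry--Esseen bound of Theorem~\ref{thm:incoherent} (which does not apply verbatim to the non-product multinomial state) but instead reruns the $\chi$-function upper/lower bounds and evaluates the limit via a multivariate Gaussian approximation and a hyperplane-distance computation — yet since your decomposition already exhibits the i.i.d.\ energy sum to which a one-dimensional central-limit argument applies, your version of the final step closes in the same way.
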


\addtocounter{thm}{1}
\addtocounter{subthm}{-1}
\begin{subthm}[Fluctuation-dissipation relation for identical pure states]
    \label{thm:pure2}
   The minimal amount of free energy dissipated in the optimal (minimising the transformation error $\epsilon$) distillation process from identical pure states asymptotically satisfies
    \begin{equation}
        \label{eq:diss_pure}
        F^N_{\rm diss} \gtrsim a(\epsilon_N)\sigma(F^N),
    \end{equation}
    where $a(\epsilon)$ is given by Eq.~\eqref{eq:a}.
\end{subthm}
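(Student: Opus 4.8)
The plan is to prove the stated inequality as a \emph{converse} bound: I will show that \emph{any} thermal operation taking $\ket{\psi}^{\otimes N}$ to a final state $\rho^N_{\rm fin}$ with fidelity $1-\epsilon_N$ to the target eigenstate must dissipate at least $a(\epsilon_N)\sigma(F^N)$ of free energy, up to $o(\sqrt{N})$ corrections. (The matching achievability, which would upgrade $\gtrsim$ to $\simeq$, is exactly what coherence obstructs, which is why only a lower bound is claimed.) First I would invoke Theorem~\ref{thm:pure} to pin down $\Delta F^N \simeq -\sigma(F^N)\Phi^{-1}(\epsilon_N)$, and then use the identity
\begin{equation}
    F^N_{\rm diss}=\Delta F^N+\frac{1}{\beta}\left[D(\tilde\rho^N\|\tilde\gamma^N)-D(\rho^N_{\rm fin}\|\tilde\gamma^N)\right],
\end{equation}
which reduces the theorem to bounding the target-to-final free-energy gap, $\frac{1}{\beta}[D(\tilde\rho^N\|\tilde\gamma^N)-D(\rho^N_{\rm fin}\|\tilde\gamma^N)] \gtrsim \sigma(F^N)\,[\epsilon_N\Phi^{-1}(\epsilon_N)+\phi(\Phi^{-1}(\epsilon_N))]$, where $\phi$ is the standard normal density. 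Equivalently, I must show that the final state cannot retain too much free energy.

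Second, I would recast this as a Gaussian partial expectation. Introducing the energy-surprisal variable $W=\beta H^N+N\log Z$, which for the pure product state has mean $\beta F^N$ and variance $\beta^2\sigma^2(F^N)$ (here $V$ reduces to the energy variance, Eq.~\eqref{eq:energy_fluct}), the bound I need is precisely $F^N_{\rm diss}\gtrsim \frac{1}{\beta}\mathbb{E}[(W-c)^+]$ with $c=D(\tilde\rho^N\|\tilde\gamma^N)$ the target cost. Using the central limit theorem — valid because the incommensurable spectrum of $H$ guarantees that the spectrum of $H^N$ carries no accidental degeneracies and that $W$ is asymptotically Gaussian, with a Berry--Esseen error controlled by $\kappa^3(F^N)$ — the identity $\mathbb{E}[(Z-t)^+]=\phi(t)-t(1-\Phi(t))$ evaluated at $t=\Phi^{-1}(\epsilon_N)$ gives $\frac{1}{\beta}\mathbb{E}[(W-c)^+]\simeq a(\epsilon_N)\sigma(F^N)$. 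Thus the whole theorem collapses to one operational statement: \emph{the dissipated free energy is at least the expected positive part of the free-energy surplus over the target cost.}

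Third — and this is the crux — I must establish that converse for a \emph{coherent} initial state. My plan is to reuse the reduction built for Theorem~\ref{thm:pure}: decompose $\ket{\psi}^{\otimes N}$ over the energy sectors of $H^N$ (labelled by occupation vectors, each a pure state inside its sector), and use incommensurability to argue that the inter-sector coherences cannot be exploited to defeat the sector-wise second law. This maps the problem onto a classical thermomajorisation problem for the energy distribution, in which the effective free-energy distribution is exactly the Gaussian law of $W$. On that classical problem the converse is the one underlying Theorem~\ref{thm:incoherent2}: to place weight $1-\epsilon_N$ on a target of cost $c$ one must discard, sector by sector, all free energy lying above $c$, and the thermomajorisation (Lorenz-curve) analysis shows the discarded amount is at least $\frac{1}{\beta}\mathbb{E}[(W-c)^+]$. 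Crucially the bound uses only the energy \emph{distribution} and not the coherences, so dephasing the final state (which can only lower its free energy) does not weaken it; this is what makes the lower bound robust while simultaneously blocking the matching upper bound.

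I expect the coherence-handling step to be the main obstacle. Controlling $D(\rho^N_{\rm fin}\|\tilde\gamma^N)$ from above genuinely requires the reachability (thermal-operation) constraint — without it the final free energy is unbounded, since one could hide the $\epsilon_N$ error-weight in arbitrarily high-energy levels — so the argument cannot be purely information-theoretic about the final state alone. The delicate point is to show rigorously, via incommensurability, that the coherences between energy sectors of $\ket{\psi}^{\otimes N}$ afford at most an $O(\log N)=o(\sqrt{N})$ free-energy advantage (the entropy generated by dephasing), so that the coherent converse matches the incoherent one to the required second order. Establishing this reduction cleanly, rather than the subsequent Gaussian bookkeeping, is where the real work lies.
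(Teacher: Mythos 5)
Your proposal is correct and follows essentially the same route as the paper: reduce to the dephased (incoherent) problem via time-translation covariance, use the characterisation of the error-saturating final distribution (the paper's Lemma~\ref{Opt_Diss_free}) to identify the unavoidable loss, and evaluate it by a Gaussian approximation of the multinomial energy distribution, with the $O(\log N)$ coherence/dephasing contribution absorbed into the $o(\sqrt{N})$ corrections. Your compact formula $F^N_{\rm diss}\gtrsim\frac{1}{\beta}\mathbb{E}[(W-c)^+]=(1-\epsilon)\Delta F^N+\sigma(F^N)\phi(\Phi^{-1}(\epsilon))$ is exactly the paper's Eq.~\eqref{eq:F_N_Diss_Simp}, just packaged as a partial expectation rather than derived through the explicit embedding-box bookkeeping.
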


We prove the above theorems in Secs.~\ref{sec:proof1b}~and~\ref{sec:proof2b}, while here we will only add one comment to the previous discussion. Namely, since for a pure state the free energy fluctuations are just the energy fluctuations (recall Eq.~\eqref{eq:energy_fluct}), and because in the considered scenario all pure states are identical, we have
\begin{equation}
    \label{eq:energy_fluc}
    \frac{1}{N}\sigma^2(F^N)=\langle H^2 \rangle_{\psi}-\langle{H}\rangle_{\psi}^2,
\end{equation}
where we use a shorthand notation \mbox{$\langle \cdot\rangle_\psi=\matrixel{\psi}{\cdot}{\psi}$}. Analogously to the incoherent case, the only non-trivial behaviour of the optimal transformation error happens when $\Delta F^N\simeq \alpha \sqrt{N}$, and its value is then specified by the ratio $\alpha/(\langle H^2 \rangle_{\psi}-\langle{H}\rangle_{\psi}^2)^{1/2}$.


\subsection{Optimal work extraction}
\label{sec:work-res}

As the first application of our results, we focus on work extraction process from a collection of $N$ non-interacting subsystems with Hamiltonians $H^N_n$ and in incoherent states $\rho^N_n$. As already described in Sec.~\ref{sec:work-def}, this is just a particular case of a thermodynamic distillation process. We only need to note that the pure battery state does not contribute to fluctuations $\sigma$ and $\kappa$, and that the difference between non-equilibrium free energies of the ground and excited battery states is just the energy difference $W^N_{\rm ext}$. Then, Theorem~\ref{thm:incoherent} tells us that, in the asymptotic limit, the optimal transformation error for extracting the amount of work $W^N_{\rm ext}$ is
\begin{equation}
\label{eq:workanderror}
    \lim_{N\to \infty} \epsilon_N = \lim_{N\to\infty}\Phi\left(\frac{W_{\rm ext}^N-F^N}{\sigma(F^N)}\right).
\end{equation}
We thus clearly see that again we have three cases dependent on the difference $(W_{\rm ext}^N-F^N)$. To get the asymptotic error different from zero and one, the extracted work $W^N_{\rm ext}$ has to be of the form
\begin{equation}
    W^N_{\rm ext}\simeq F^N - \alpha\sqrt{N},
\end{equation}
for some constant $\alpha$. Combining the above two equations yields the following second-order asymptotic expression for the extracted work:
\begin{equation}
    \label{eq:work}
    W_{\rm ext}^N\simeq F^N + \sigma(F^N)\Phi^{-1}(\epsilon).
\end{equation}
Thus, for a fixed quality of extracted work measured by~$\epsilon$, more work can be extracted from states with smaller free energy fluctuations (assuming that the initial free energy $F^N$ is fixed). This is a direct generalisation of the result obtained in Ref.~\cite{Chubb2018beyondthermodynamic} to a scenario with non-identical initial systems and with a cleaner interpretation of the error in the battery system. We present the comparison between our bounds and the numerically optimised work extraction processes in Fig.~\ref{fig:work_numerics}. 

\begin{figure}[t]
	\includegraphics[width=\columnwidth]{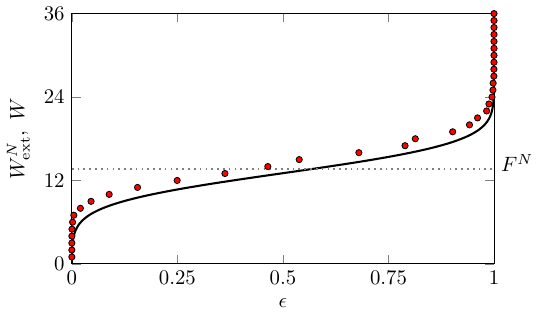}    
	\caption{\textbf{Optimal work extraction.} 
		Comparison between the asymptotic approximation, Eq.~\eqref{eq:work}, for the optimal amount of extracted work $W^N_{\mathrm{ext}}$ (solid black line) as a function of transformation error $\epsilon$, and the actual optimal value $W$ (red circles) obtained by explicitly solving the thermomajorisation conditions (see Sec.~\ref{sec:incoherent} for details). The inverse temperature of the thermal bath is chosen to be $\beta=1$, while the initial system is composed of 100 two-level subsystems. The first 59 subsystems are described by the Hamiltonian corresponding to a thermal state $0.6\ketbra{0}{0}+0.4\ketbra{1}{1}$, and the remaining 41 subsystems have the Hamiltonian leading to a thermal state $0.75\ketbra{0}{0}+0.25\ketbra{1}{1}$. The initial state of the system is given by 59 copies of a state $0.9\ketbra{0}{0}+0.1\ketbra{1}{1}$ and 41 copies of a state $0.7\ketbra{0}{0}+0.3\ketbra{1}{1}$. The non-equilibrium free energy of the total initial system, $F^N$, is indicated by a grey dotted line.
		\label{fig:work_numerics}}
\end{figure}

\begin{figure}[t]
    \includegraphics[width=\columnwidth]{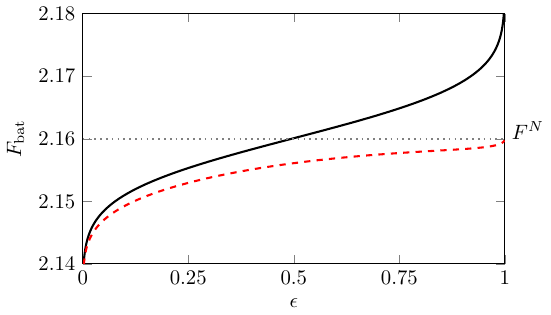}
	\caption{\label{fig:work_quality} \textbf{Work quality.} Non-equilibrium free energy $F_{\textrm{bat}}$ of the two-level battery system calculated for the final (dashed red line) and target (solid black line) state of the optimal work extraction process. The inverse temperature of the thermal bath is chosen to be $\beta = 1$, and the initial state that the work is extracted from is composed of 100 copies of a state 0.7\ketbra{0}{0}+0.3\ketbra{1}{1}. Each subsystem is described by the Hamiltonian corresponding to a thermal state 0.6\ketbra{0}{0}+0.4\ketbra{1}{1} and the non-equilibrium free energy of the total initial system, $F^N$, is indicated by a grey dotted line.} 
\end{figure}

Similarly, by employing Theorem~\ref{thm:pure}, we can investigate the optimal work extraction process from a collection of $N$ non-interacting subsystems with identical Hamiltonians $H$ and each in the same  pure state $\ketbra{\psi}{\psi}$. We simply need to choose the ancillary system $A$ to be the battery $B$ with energy splitting $W^N_{\mathrm{ext}}$ and the initial and target states to be given by $\ket{0}_B$ and $\ket{1}_B$. Also, since all systems are in identical pure states and have the same Hamiltonian, we have $\sigma(F^N)$ specified by Eq.~\eqref{eq:energy_fluc} and
\begin{align}
    \frac{1}{N}{F}^N&=\langle H\rangle_\psi+\frac{\log Z}{\beta}.
\end{align}
As a result, the optimal amount of work extracted from $N$ pure quantum systems up to second-order asymptotic expansion is given by:
\begin{equation}
    \label{eq:work_pure}
    \!\!\!W_{\rm ext}\!\simeq\! N\left(\!\langle H\rangle_\psi+\frac{\log Z}{\beta} + \frac{\langle H^2\rangle_\psi-\langle H\rangle_\psi^2}{\sqrt{N}}\Phi^{-1}(\epsilon)\!\right)\!.\!
\end{equation}

Finally, let us note that we can employ Theorem~\ref{thm:incoherent2} (and to some extent also Theorem~\ref{thm:pure2}) to investigate the meaning of work quality measured by the transformation error~$\epsilon$. So far we have measured the extracted work as the difference between the free energy of the initial battery's state and its target state that was obtained with success probability $1-\epsilon$. However, due to the aforementioned theorems, we know precisely the free energy of the actual final state of the battery, which can be used to quantify the actual amount of extracted work (with no error). In Fig.~\ref{fig:work_quality} we present the behaviour of both measures as a function of $\epsilon$, where it is clear that the two notions coincide for small error~$\epsilon$.


\subsection{Optimal cost of erasure}
\label{sec:erasure-res}

In order to obtain the optimal work cost of erasing $N$ two-level systems prepared in incoherent states $\rho^N_n$, we apply Theorem~\ref{thm:incoherent} analogously as in the previous section, but this time to the scenario described in Sec.~\ref{sec:erasure-def}. We then get the optimal transformation error in the erasure process given by
\begin{equation}
    \lim_{N\to \infty} \epsilon_N = \lim_{N\to\infty}\Phi\left(\frac{\frac{1}{\beta}S(\rho^N)-W^N_{\mathrm{cost}}}{\sigma(F^N)}\right),
\end{equation}
where $S(\rho^N)$ is the entropy of the initial state, and \mbox{$W^N_{\mathrm{cost}}$} is the invested work cost. Using analogous reasoning as in the case of work extraction, we can now obtain the second-order asymptotics for the cost of erasure:
\begin{equation}
    W^N_{\mathrm{cost}}\simeq \frac{S(\rho^N)}{\beta}  - \sigma(F^N)\Phi^{-1}(\epsilon).
\end{equation}

Let us make three brief comments on the above result. First, we only considered the application of the incoherent result, Theorem~\ref{thm:incoherent}, as in the case of trivial Hamiltonians, the erasure of a pure state $\ketbra{\psi}{\psi}^{\otimes N}$ is free (because all unitary transformations are then thermodynamically-free). Of course, our results straightforwardly extend to non-trivial Hamiltonians, but we believe that the simple case we described above is most illustrative and recovers the spirit of the original Landauer's erasure scenario. Second, since the maximally mixed initial state has vanishing free energy fluctuations, $\sigma(F^N)=0$, we cannot directly apply our result (that relates fluctuations of the initial state to dissipation) to get the erasure cost of $N$ completely unknown bits of information. However, using the tools described in Sec.~\ref{sec:math}, it is straightforward to show that in this case, the exact expression (working for all $N$) for the erasure cost is given by
\begin{equation}
    W^N_{\mathrm{cost}}= \frac{N}{\beta}\left( \log 2 - \frac{\log(1-\epsilon)}{N}\right).
\end{equation}
Thus, for the case of zero error one recovers the Landauer's cost of erasure~\cite{Alicki2004}. Third, analogously to the case of work extraction, here also Theorem~\ref{thm:incoherent2} can be employed to investigate the meaning of erasure quality quantified by $\epsilon$.


\subsection{Optimal thermodynamically-free communication rate}
\label{sec:encoding-res}

Finally, we now explain how Theorems~\ref{thm:incoherent}~and~\ref{thm:pure} allow one to obtain the optimal thermodynamically-free encoding rate into a collection of $N$ identical subsystems in either incoherent or pure states. We simply choose the target system to be a single $M$-dimensional quantum system with a trivial Hamiltonian $\tilde{H}=0$ that is prepared in any of the degenerate eigenstates of $\tilde{H}$. Note that the non-equilibrium free energy of such a target system is given by
\begin{equation}
    \frac{1}{\beta}D(\tilde{\rho}^N\|\tilde{\gamma}^N)=\frac{1}{\beta}\log M.
\end{equation}
Our theorems then tell us that in the asymptotic limit, the optimal transformation error $\epsilon$ in the considered distillation process is given by
\begin{equation}
    \lim_{N\to \infty} \epsilon_N = \lim_{N\to\infty}\Phi\left(\frac{\frac{1}{\beta}\log M-F^N}{\sigma(F^N)}\right).
\end{equation}
Rewriting the above, we get the following second-order asymptotic behaviour:
\begin{equation}
    \log M\simeq \beta F^N+\beta\sigma(F^N)\Phi^{-1}(\epsilon).
\end{equation}

Now, the distillation process above can be followed by unitaries that map between $M$ degenerate eigenstates of $\tilde{H}$ that we will simply denote $\ket{1},\dots,\ket{M}$. Crucially, note that such unitaries are thermodynamically-free because they act in a fixed energy subspace. Such a protocol then allows one to encode $M$ messages into $M$ states $\sigma_i$, each one being $\epsilon$-close in infidelity to $\ket{i}$ for $i\in\{1,\dots,M\}$. Decoding the message using a measurement in the eigenbasis of $\tilde{H}$ then leads to the average decoding error $\epsilon_{\mathrm{d}}$ satisfying:
\begin{equation}
    1-\epsilon_{\mathrm{d}}:=\frac{1}{M}\sum_{i=1}^M \matrixel{i}{\sigma_i}{i}=1-\epsilon,
\end{equation}
so that $\epsilon_{\mathrm{d}}=\epsilon$.

Using the communication protocol described above, we then get the following asymptotic lower bound on the optimal thermodynamically-free encoding rate into a state $\rho^N$ (recall Eq.~\eqref{eq:optimal-encoding-rate}):
\begin{equation}
    R(\rho^N,\epsilon_{\mathrm{d}}) \geq \frac{\beta}{N}(F^N+\sigma(F^N)\Phi^{-1}(\epsilon_{\mathrm{d}}))+o\left(\frac{1}{\sqrt{N}}\right).
\end{equation}
The above lower bound is exactly matching the upper bound for $R(\rho^N,\epsilon_{\mathrm{d}})$ recently derived in Ref.~\cite{korzekwa2019encoding} for a slightly different scenario with $\rho^N_n=\rho$ and $H^N_n=H$ for all $n$, with $\tilde{H}^N=H^N$, and with Gibbs-preserving operation instead of thermal operations. However, the proof presented there can be easily adapted to work in the current case if we keep the first restriction, i.e., when the initial state is $\rho^N=\rho^{\otimes N}$ and all initial subsystems have equal Hamiltonians. We explain in detail how to adapt that proof in Appendix~\ref{app:optimality}, where we also explain what technical result concerning hypothesis testing relative entropy needs to be proven in order to make the proof also work when subsystems are not identical. Here we conclude that
\begin{equation}
  \!\!\!R(\rho^{\otimes N},\epsilon_{\mathrm{d}}) \!=\! D(\rho\|\gamma)+\frac{\sqrt{V(\rho\|\gamma)}}{\sqrt{N}}\Phi^{-1}(\epsilon_{\mathrm{d}})+o\!\left(\!\frac{1}{\sqrt{N}}\!\right)\!,\!
\end{equation}
where $\rho$ is either a pure or incoherent state.

The above result can be thermodynamically interpreted as the inverse of the Szilard engine. While the Szilard engine converts bits of information into work, the protocol studied here employs the free energy of the system (i.e., the ability to perform work) to encode bits of information. While the asymptotic result was recently proven in Ref.~\cite{narasimhachar2019quantifying}, here we proved that this relation is deeper as it also connects fluctuations of free energy to the optimal average decoding error.


\section{Derivation of the results}
\label{sec:math}

In what follows, we first introduce the mathematical formalism used to study the incoherent distillation process. We then use it to prove Theorems~\ref{thm:incoherent}~and~\ref{thm:incoherent2}. Finally, we also prove Theorems~\ref{thm:pure}~and~\ref{thm:pure2} by first mapping the problem of distillation from pure states to an equivalent incoherent problem, and then using the formalism of incoherent distillations.


\subsection{Incoherent distillation process}
\label{sec:incoherent}


\subsubsection{Distillation conditions via approximate majorisation}

A state of a $d$-dimensional quantum system $\rho$ will be called energy-incoherent if it commutes with the Hamiltonian of the system, i.e., when it is block-diagonal in the energy eigenbasis. Such a state can be equivalently represented by a $d$-dimensional probability vector $\v{p}$ given by the eigenvalues of $\rho$. Since the thermal Gibbs state $\gamma$ is energy-incoherent, it can be represented by a vector of thermal occupations $\v{\gamma}$. Moreover, an energy eigenstate $\ketbra{E_k}{E_k}$ can be represented by a \emph{sharp state} $\v{s}_k$, with $(\v{s}_k)_j=\delta_{jk}$.

In order to formulate the solution to the thermodynamic interconversion problem for incoherent states we will need two concepts: \emph{approximate majorisation} and \emph{embedding}. First, given two $d$-dimensional probability vectors $\v{p}$ and $\v{q}$, we say that $\v{p}$ majorises $\v{q}$, and write $\v{p} \succ \v{q}$, if and only if~\cite{bhatia1996matrix}
\begin{equation}
    \label{eq:majorisation}
    \forall  k : \:\: \sum_{j=1}^k p_j^\downarrow\geq \sum_{j=1}^k q_j^\downarrow,
\end{equation}
where $\v{p}^{\downarrow}$ denotes the vector $\v{p}$ rearranged in a decreasing order. Moreover, we say that $\v{p}$ $\epsilon$-post-majorises $\v{q}$~\cite{Chubb2018beyondthermodynamic}, and write $\v{p} \succ_\epsilon \v{q}$, if $\v{p}$ majorises $\v{r}$ which is $\epsilon$-close in the infidelity distance to $\v{q}$, i.e.,
\begin{equation}\label{eq:DefFidelity}
    1-F(\v{q},\v{r})\leq \epsilon, \qquad F(\v{q},\v{r}):=\left(\sum_{j=1}^d \sqrt{q_j r_j}\right)^2.   
\end{equation}
Second, we express the thermal distribution $\v{\gamma}$ as a probability vector with rational entries, 
\begin{align}
\label{eq:definitiongibbsvec}
	\v{\gamma}&=\left[\frac{D_1}{D},\dots,\frac{D_{d}}{D}\right],
\end{align}
with $D$ and $D_k$ being integers. Now, the embedding map is defined as a transformation that sends a $d$-dimensional probability distribution $\v{p}$ to a $D$-dimensional probability distribution $\hat{\v{p}}$ in the following way~\cite{brandao2015second}:
\begin{equation}
    \hat{\v{p}}=\left[\phantom{\frac{i}{i}}\!\!\!\!\right.\underbrace{\frac{p_1}{D_1},\dots,\frac{p_1}{D_1}}_{D_1\mathrm{~times}},
\,\dots\,,\underbrace{\frac{p_{d}}{D_{d}},\dots,\frac{p_{d}}{D_{d}}}_{D_{d}\mathrm{~times}}\left.\phantom{\frac{i}{i}}\!\!\!\!\right]. \label{eq:embedding}
\end{equation}
We will refer to the sets of repeated elements above as \emph{embedding boxes}. Observe that the embedded version of a thermal state $\v{\gamma}$ is a \emph{maximally mixed state} over $D$ states,
\begin{equation}
    \v{\eta}:=\frac{1}{D}[1,\dots,1],
\end{equation}
and the embedded version of a sharp state $\v{s}_k$ is a \emph{flat state} $\v{f}_k$ that is maximally mixed over a subset of $D_k$ entries, with zeros otherwise:
\begin{equation}
\!\!\hat{\v{s}}_{k}=\v{f}_{k}:=\left[\phantom{\frac{i}{i}}\!\!\!\!\right.\underbrace{0,\dots,0}_{\sum_{j=1}^{k-1}D_j},
\,\underbrace{1\dots 1}_{D_k}\,,\underbrace{0,\dots,0}_{\sum_{j=k+1}^d D_j}\left.\phantom{\frac{i}{i}}\!\!\!\!\right]\!. \label{eq:flat}
\end{equation}

We can now state the crucial theorem based on Ref.~\cite{brandao2015second} and concerning thermodynamic interconversion for incoherent states.
\begin{thm}[Corollary~7 of Ref.~\cite{Chubb2018beyondthermodynamic}]
	\label{thm:thermo_int}
	For the initial and target system with the same thermal distribution $\v{\gamma}$, there exists a thermal operation mapping between an energy-incoherent state $\v{p}$ and a state $\epsilon$-close to $\v{q}$ in infidelity distance, if and only if $\hat{\v{p}}\succ_\epsilon \hat{\v{q}}$.
\end{thm}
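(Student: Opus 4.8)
The plan is to reduce this approximate statement to the known exact (zero-error) version of the theorem and then to control the approximation entirely at the level of probability vectors, using two elementary properties of the fidelity $F$ under the embedding. The exact version, which I would invoke as established in Ref.~\cite{brandao2015second}, states that when the initial and target Hamiltonians share the same thermal distribution $\v{\gamma}$, there is a thermal operation mapping an incoherent state $\v{p}$ exactly onto an incoherent state $\v{r}$ if and only if $\hat{\v{p}}\succ\hat{\v{r}}$; this follows because thermal operations act on energy-incoherent states as Gibbs-stochastic matrices, these correspond under the embedding to doubly-stochastic maps on the $D$-dimensional space, and Hardy--Littlewood--P\'olya identifies the existence of such a map with standard majorisation $\hat{\v{p}}\succ\hat{\v{r}}$. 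Before splitting into the two directions I would record two facts. First, the embedding preserves fidelity: for $d$-dimensional $\v{q},\v{r}$ one has $F(\hat{\v{q}},\hat{\v{r}})=F(\v{q},\v{r})$, immediate from $\sum_{j}\sqrt{\hat{q}_j\hat{r}_j}=\sum_k D_k\sqrt{(q_k/D_k)(r_k/D_k)}=\sum_k\sqrt{q_k r_k}$. Second, averaging an arbitrary $D$-dimensional vector $\v{r}'$ within each embedding box cannot decrease its fidelity with a vector that is constant on boxes (such as $\hat{\v{q}}$), since concavity of the square root gives $\sum_{j\in\mathrm{box}\,k}\sqrt{r'_j}\leq D_k\sqrt{\bar r_k}$ with $\bar r_k$ the box average, and the averaged vector $\bar{\v{r}'}$ is itself the embedding $\hat{\v{r}}$ of a genuine $d$-dimensional state.

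For the ($\Leftarrow$) direction I would assume $\hat{\v{p}}\succ_\epsilon\hat{\v{q}}$, i.e.\ there is a $D$-dimensional $\v{r}'$ with $\hat{\v{p}}\succ\v{r}'$ and $1-F(\hat{\v{q}},\v{r}')\leq\epsilon$. The key move is to replace $\v{r}'$ by its box-average $\bar{\v{r}'}=\hat{\v{r}}$. Box-averaging is doubly stochastic, so $\v{r}'\succ\hat{\v{r}}$ and hence $\hat{\v{p}}\succ\hat{\v{r}}$, while the second fact gives $1-F(\hat{\v{q}},\hat{\v{r}})\leq 1-F(\hat{\v{q}},\v{r}')\leq\epsilon$, which by fidelity-preservation reads $1-F(\v{q},\v{r})\leq\epsilon$. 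The exact theorem then supplies a thermal operation $\v{p}\to\v{r}$, and $\v{r}$ is $\epsilon$-close to $\v{q}$, as required.

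For the ($\Rightarrow$) direction I would start from a thermal operation whose output is $\epsilon$-close to $\v{q}$. First I would argue the output may be taken incoherent: dephasing in the energy eigenbasis of the target Hamiltonian is itself a thermal operation, and by monotonicity of fidelity under this channel (for which the incoherent target $\v{q}$ is a fixed point) it cannot decrease the fidelity with $\v{q}$. Calling the resulting incoherent output $\v{r}$, the exact theorem gives $\hat{\v{p}}\succ\hat{\v{r}}$, while fidelity-preservation turns $1-F(\v{q},\v{r})\leq\epsilon$ into $1-F(\hat{\v{q}},\hat{\v{r}})\leq\epsilon$; taking $\hat{\v{r}}$ as the witness establishes $\hat{\v{p}}\succ_\epsilon\hat{\v{q}}$.

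I expect the main obstacle to be the ($\Leftarrow$) direction, specifically the gap between the definition of $\epsilon$-post-majorisation --- which permits an \emph{arbitrary} intermediate vector $\v{r}'$ on the embedded space --- and the need for a genuine embedded state in order to de-embed and apply the exact theorem. The box-averaging argument, together with the two fidelity facts, is exactly what closes this gap, and the one nontrivial estimate is verifying that averaging within boxes cannot decrease fidelity with a box-constant target. The WLOG-incoherence step in the converse is standard but also deserves care, since it is what guarantees the final state can be represented by a probability vector at all.
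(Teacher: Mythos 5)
The paper does not actually prove this statement—it is imported as Corollary~7 of Ref.~\cite{Chubb2018beyondthermodynamic}—so there is no in-paper proof to compare against. Your argument is correct and follows essentially the same route as the cited reference: reduce to the exact Gibbs-stochastic/majorisation characterisation of thermal operations on incoherent states, and then close the only real gap (the witness $\v{r}'$ in $\hat{\v{p}}\succ_\epsilon\hat{\v{q}}$ need not a priori be an embedded state) by box-averaging, checking that this doubly stochastic step preserves the majorisation relation and, by concavity of the square root against the box-constant target $\hat{\v{q}}$, can only increase the fidelity; the converse direction via dephasing covariance and data processing is likewise sound.
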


Despite the fact that in our case we want to study the general case of initial and final systems with different Hamiltonians, with a little bit of ingenuity we can still use the above theorem. Namely, we consider a family of total systems composed of the first $N$ subsystems with initial Hamiltonians $H^N_n$, and the remaining part described by the target Hamiltonian $\tilde{H}^N$. We choose initial states of the total system on the first $N$ subsystems to be a general product of incoherent states $\v{p}^N_n$, while the remaining part to be prepared in a thermal equilibrium state $\tilde{\v{\gamma}}^N$ corresponding to $\tilde{H}^N$. Since Gibbs states are free, this setting is thermodynamically equivalent to having just the first $N$ systems with Hamiltonians $H^N_n$ and in states $\v{p}_n^N$. Moreover, for target states of the total system, we choose thermal equilibrium states $\v{\gamma}^N_n$ for the first $N$ subsystems, and sharp states $\tilde{\v{s}}_{k}^N$ of the Hamiltonian $\tilde{H}^N$ for the remaining part. Again, this is thermodynamically equivalent to having just the system with Hamiltonian $\tilde{H}^N$ and in a state $\tilde{\v{s}}_{k}^N$. Thus, employing Theorem~\ref{thm:thermo_int}, an $\epsilon$-approximate distillation process for incoherent states exists if and only if:
\begin{equation}
    \left(\bigotimes_{n=1}^N \hat{\v{p}}^N_n \otimes \hat{\tilde{\v{\gamma}}}^N\right) \succ_\epsilon \left(\bigotimes_{n=1}^{N}\hat{\v{\gamma}}^N_n\otimes\hat{\tilde{\v{s}}}_{k}^N \right) .
\end{equation}
This way, using a single fixed Hamiltonian, we can encode transformations between different Hamiltonians.

Let us introduce the following shorthand notation:
\begin{equation}
    \hat{\v{P}}^N:=\bigotimes_{n=1}^N \hat{\v{p}}^{N}_n,\qquad \hat{\v{G}}^N:=\bigotimes_{n=1}^N {\hat{\v{\gamma}}}^N_n=\bigotimes_{n=1}^N {\v{\eta}}^N_n. \label{eq:PN_GN}
\end{equation}
Then, we can use the previous facts on the embedding map to conclude with the following statement: there exists an $\epsilon$-approximate thermodynamic distillation process from $N$ systems with Hamiltonians $H^N_n$ and in energy-incoherent states $\v{p}^N_n$ to a system with a Hamiltonian $\tilde{H}^N$ and in a sharp energy eigenstate $\tilde{\v{s}}_{k}^N$ if and only if
\begin{equation}
\label{eq:e-thermomajorisation}
    \hat{\v{P}}^N \otimes \tilde{\v{\eta}}^N \succ_{\epsilon} \hat{\v{G}}^N \otimes \tilde{\v{f}}^N_k.
\end{equation}


\subsubsection{Information-theoretic intermission}
\label{sec:intermission}

Before we proceed, we need to make a short intermission for a few important comments concerning information-theoretic quantities introduced in Eqs.~\eqref{eq:D}-\eqref{eq:W}. For incoherent states $\rho$ and $\gamma$ represented by probability vectors $\v{p}$ and $\v{\gamma}$, these simplify and take the following classical form:
\begin{subequations}
	\begin{align}
	D(\v{p}\|\v{\gamma}):=&\sum_i p_i\left(\log \frac{p_i}{\gamma_i}\right),\\
	V(\v{p}\|\v{\gamma}):=&\sum_i p_i \left(\log \frac{p_i}{\gamma_i} - D(\v{p}\|\v{\gamma})\right)^2,\\
	Y(\v{p}\|\v{\gamma}):=&\sum_i p_i \left|\log \frac{p_i}{\gamma_i} - D(\v{p}\|\v{\gamma})\right|^3.
	\end{align}
\end{subequations}
Moreover, by direct calculation, one can easily show that the above quantities are invariant under embedding, i.e., \mbox{$D(\v{p}\|\v{\gamma})=D(\hat{\v{p}}\|\v{\eta})$}, and the same holds for $V$ and $Y$. Therefore
\begin{subequations}
	\begin{align}
	D(\v{p}\|\v{\gamma})=&D(\hat{\v{p}}\|\v{\eta})=\log D-H(\hat{\v{p}}),\label{eq:invariance1}\\
	V(\v{p}\|\v{\gamma})=&V(\hat{\v{p}}\|\v{\eta})=V(\hat{\v{p}}),\label{eq:invariance2}\\
	Y(\v{p}\|\v{\gamma})=&Y(\hat{\v{p}}\|\v{\eta})=Y(\hat{\v{p}}),
\end{align}
\end{subequations}
where
\begin{subequations}
	\begin{align}
	H(\v{p}):=&\sum_i p_i(-\log p_i),\\
	V(\v{p}):=&\sum_i p_i (\log p_i + H(\v{p}))^2,\\
	Y(\v{p}):=&\sum_i p_i \left|\log p_i + H(\v{p})\right|^3,\label{eq:invariance3}
\end{align}
\end{subequations}
and note that $V(\v{p})=0$ if and only if $\v{p}$ is a flat state.


\subsubsection{Optimal error for a distillation process}

In order to transform the approximate majorisation condition from Eq.~\eqref{eq:e-thermomajorisation} into an explicit expression for the optimal transformation error, we start from the following result proven by the authors of Ref.~\cite{Chubb2018beyondthermodynamic}.
\begin{lem}[Lemma 21 of Ref.~\cite{Chubb2018beyondthermodynamic}]
	\label{lem:1shot-distill}
	Let $\bm p$ and $\bm q$ be distributions with $V(\bm q)=0$. Then
	\begin{align}
	\min\left\lbrace \epsilon \middle| \bm p\succ_\epsilon \bm q \right\rbrace = 1-\sum_{i=1}^{\exp H(\bm q)}p_i^{\downarrow}.
	\end{align}
\end{lem}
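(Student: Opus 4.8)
The plan is to convert the one-shot distillation quantity into a fidelity-maximisation problem and then solve that problem by pairing a Cauchy--Schwarz estimate with the partial-sum characterisation of majorisation. First I would invoke the remark following Eq.~\eqref{eq:invariance3} that $V(\bm q)=0$ forces $\bm q$ to be flat: there is a support set $S$ with $q_j=1/k$ for $j\in S$ and $q_j=0$ otherwise, and since a state flat over $k$ entries has $H(\bm q)=\log k$, the integer $k$ is precisely $\exp H(\bm q)$. Next, unpacking the definition of $\epsilon$-post-majorisation in Eqs.~\eqref{eq:majorisation}--\eqref{eq:DefFidelity}, minimising $\epsilon$ over all $\bm r$ with $\bm p\succ\bm r$ and $1-F(\bm q,\bm r)\le\epsilon$ is the same as maximising the fidelity, so
\[
\min\{\epsilon\mid \bm p\succ_\epsilon\bm q\}=1-\max_{\bm r:\,\bm p\succ\bm r}F(\bm q,\bm r).
\]
The task thus reduces to computing the largest achievable fidelity over the majorisation polytope of $\bm p$.

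For the upper bound I would use that, for flat $\bm q$, the fidelity collapses to $F(\bm q,\bm r)=\frac{1}{k}\big(\sum_{j\in S}\sqrt{r_j}\big)^2$, depending on $\bm r$ only through its entries on $S$. Applying Cauchy--Schwarz gives $\sum_{j\in S}\sqrt{r_j}\le\sqrt{k}\,\big(\sum_{j\in S}r_j\big)^{1/2}$, and the mass $\bm r$ can place on any $k$ indices is controlled by majorisation: for a fixed $S$ with $|S|=k$ one has $\sum_{j\in S}r_j\le\sum_{i=1}^k r_i^{\downarrow}\le\sum_{i=1}^k p_i^{\downarrow}$, where the final step is the majorisation condition Eq.~\eqref{eq:majorisation}. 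Combining the two estimates yields $F(\bm q,\bm r)\le\sum_{i=1}^k p_i^{\downarrow}$ for every admissible $\bm r$.

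For achievability I would exhibit an explicit optimiser: place the averaged value $\frac{1}{k}\sum_{i=1}^k p_i^{\downarrow}$ on each of the $k$ indices in $S$, and keep the remaining $d-k$ entries equal to the smaller entries $p_{k+1}^{\downarrow},\dots,p_d^{\downarrow}$ of $\bm p$. This $\bm r$ is obtained from $\bm p$ by replacing its $k$ largest entries with their common average, i.e.\ by a sequence of Robin-Hood (doubly-stochastic averaging) operations, so $\bm p\succ\bm r$ holds. Since $\bm r$ is flat on $S$, the Cauchy--Schwarz step is tight and a direct evaluation gives $F(\bm q,\bm r)=\sum_{i=1}^k p_i^{\downarrow}$, matching the upper bound. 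Hence $\max_{\bm r:\,\bm p\succ\bm r}F(\bm q,\bm r)=\sum_{i=1}^{\exp H(\bm q)}p_i^{\downarrow}$, and the claimed formula for $\min\epsilon$ follows.

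The delicate points are essentially bookkeeping: checking that flatness of $\bm q$ makes the fidelity equal to the on-support mass once $\bm r$ is chosen flat on $S$, and verifying the majorisation $\bm p\succ\bm r$ for the explicit optimiser (which is where the averaging interpretation is convenient). The genuinely substantive step, and the one I expect to be the crux, is the matching of the two bounds---the Cauchy--Schwarz inequality and the partial-sum majorisation inequality must both be saturated by the \emph{same} $\bm r$, and it is this simultaneous tightness that pins the optimal error to $1-\sum_{i=1}^{\exp H(\bm q)}p_i^{\downarrow}$ rather than any individual estimate.
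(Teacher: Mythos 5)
Your proof is correct. Note, however, that the paper does not actually prove Lemma~\ref{lem:1shot-distill} itself---it imports it verbatim as Lemma~21 of Ref.~\cite{Chubb2018beyondthermodynamic}---so the natural comparison is with the paper's Appendix~\ref{app:min_out}, where the companion Lemma~\ref{Opt_Diss_free} (characterising the optimal final states) is proved. There the authors deploy exactly the two estimates you use: the chain \mbox{$F(\v{q},\tilde{\v{q}})\leq F(\v{q}^{\downarrow},\tilde{\v{q}}^{\downarrow})=\frac{1}{L}\bigl(\sum_{i=1}^{L}\sqrt{\tilde{q}_i^{\downarrow}}\bigr)^2\leq\sum_{i=1}^{L}\tilde{q}_i^{\downarrow}$} via Cauchy--Schwarz for a flat target, and the partial-sum majorisation bound \mbox{$\sum_{i=1}^{L}\tilde{q}_i^{\downarrow}\leq\sum_{i=1}^{L}p_i^{\downarrow}$} from Eq.~\eqref{eq:majorisation}. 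The difference is the direction of the argument: the paper takes the value of $\epsilon$ from Lemma~\ref{lem:1shot-distill} as given and uses saturation of both inequalities to pin down the structure of the optimiser (flat equal to $(1-\epsilon)/L$ on the support of $\v{q}$, bistochastically processed tail elsewhere), whereas you run the same two inequalities forward to obtain the upper bound on the fidelity and then exhibit the averaged-top-$k$ state as an explicit achiever. Your achievability witness is precisely the entropy-minimising optimiser $\v{q}^{*}$ of Eq.~\eqref{eq:qstar} with $B$ the identity, and your verification that $\v{p}\succ\v{r}$ via Robin--Hood averaging is sound (for $l\leq k$ the partial sums satisfy $l\bar{p}\leq\sum_{i=1}^{l}p_i^{\downarrow}$ since the running averages of $\v{p}^{\downarrow}$ are non-increasing, and for $l>k$ they agree exactly). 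So the proposal is a correct, self-contained proof of the imported lemma, built from the same ingredients the paper itself uses one level up; what it buys is that Appendix~\ref{app:min_out} no longer needs to lean on an external reference for the value of the optimal error.
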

Applying the above lemma to Eq.~\eqref{eq:e-thermomajorisation} yields the following expression for the optimal error $\epsilon_N$:
\begin{equation}
    \label{eq:optimalesum}
    \epsilon_N = 1-\sum_{i=1}^{\exp [H(\hat{\v{G}}^{N})+H(\tilde{\v{f}}_k^N)]} \left(\hat{\v{P}}^N\otimes \tilde{\v{\eta}}^N\right)_i^\downarrow.
\end{equation}
Now, for an arbitrary distribution $\v{p}$ and any flat state~$\v{f}$, we make two observations: the size of the support of~$\v{f}$ is simply $\exp (H(\v{f}))$, and the entries of $\v{p}\otimes\v{f}$ are just the copied and scaled entries of $\v{p}$. As a result, the sum of the $l$ largest elements of $\v{p}$ can be expressed as
\begin{equation}
\label{eq:simpleobservation}
    \sum_{i=1}^l p_i^\downarrow = \sum_{i=1}^{l \exp (H(\v{f}))} (\v{p}\otimes \v{f})_i^\downarrow.
\end{equation}
Inverting the above expression we can write
\begin{equation}
\label{eq:simpleobservation2}
    \sum_{i=1}^{l} (\v{p}\otimes \v{f})_i^\downarrow = \sum_{i=1}^{l\exp (-H(\v{f}))} p_i^\downarrow,
\end{equation}
where the summation with non-integer upper limit $x$ should be interpreted as:
\begin{equation}
    \sum_{i=1}^x p_i :=\sum_{i=1}^{\lfloor x \rfloor} p_i +(x-\lfloor x \rfloor) p_{\lceil x \rceil}.
\end{equation}
Since $\tilde{\v{\eta}}$ is a flat state, we conclude that
\begin{equation}
\label{eq:optimal_e}
    \epsilon_N = 1- \sum_{i=1}^{\exp [H(\hat{\v{G}}^{N})+H(\tilde{\v{f}}_k^N)-H(\tilde{\v{\eta}}^N)]} (\hat{\v{P}}^N)_i^\downarrow.
\end{equation}

We see that the error depends crucially on partial ordered sums as above. To deal with these kind of sums, we introduce the function $\chi_{\v{p}}$ defined implicitly by the following equation
\begin{equation}
    \sum_{i=1}^{\chi_{\v{p}}(l)} p_i^\downarrow=\sum_i \{p_i | p_i\geq 1/l\}.
\end{equation}
In words: $\chi_{\v{p}}(l)$ counts the number of entries of $\v{p}$ that are larger than $1/l$. Now, we have the following lemma that will be crucial in proving our theorems.
\begin{lem}
    Every $d$-dimensional probability distribution $\v{p}$ satisfies the following for all $l\in\{1,\dots,d\}$ and for all $\alpha\geq 1$:
    \begin{subequations}
        \begin{align}
            \sum_{i=1}^{l} p_i^\downarrow&\geq \sum_{i=1}^{\chi_{\v{p}}(l)} p_i^\downarrow,\label{eq:lower}\\
            \sum_{i=1}^{l} p_i^\downarrow&\leq \sum_{i=1}^{\chi_{\v{p}}(\alpha l)/c} p_i^\downarrow,\label{eq:upper}
        \end{align}
    \end{subequations}
    where 
    \begin{equation}\label{eq:Expofc}
        c=\sqrt{\alpha}\sum_{i=\chi_{\v{p}}(\sqrt{\alpha}l)}^{\chi_{\v{p}}(\alpha l)} p_i^\downarrow.
    \end{equation}
    Moreover, as the probabilities are ordered in the sums given in Eq.~\eqref{eq:lower} and Eq.~\eqref{eq:upper}, it simply follows that
    \begin{equation}
        \label{chi_order}
        \chi_{\v{p}}(l) \leq l\leq \chi_{\v{p}}(\alpha l)/c.
    \end{equation}
\end{lem}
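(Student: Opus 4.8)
The plan is to reduce both inequalities to two elementary facts about the counting function $\chi_{\v{p}}$, namely
\begin{equation}
    \chi_{\v{p}}(l)\le l \qquad\text{and}\qquad l\le \chi_{\v{p}}(\alpha l)/c.
\end{equation}
Once these are in hand, \eqref{eq:lower} and \eqref{eq:upper} follow at once: the sorted entries $p_i^\downarrow$ are nonnegative, so the partial-sum map $m\mapsto\sum_{i=1}^m p_i^\downarrow$ (extended to non-integer $m$ by the stated convention) is nondecreasing, and applying it to $\chi_{\v{p}}(l)\le l$ and to $l\le \chi_{\v{p}}(\alpha l)/c$ yields exactly the two displayed sum inequalities. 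Thus all the content sits in the two bounds on $\chi_{\v{p}}$, and the decreasing ordering of the $p_i^\downarrow$ is precisely what licenses this monotonicity step (and hence the final clause of the lemma).

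The first bound is a direct counting argument. By definition $\chi_{\v{p}}(l)$ is the number of coordinates with $p_i\ge 1/l$; since each such coordinate contributes at least $1/l$ to the total mass $\sum_i p_i=1$, there can be at most $l$ of them, giving $\chi_{\v{p}}(l)\le l$.

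The second bound is the heart of the lemma. Rewriting $l\le \chi_{\v{p}}(\alpha l)/c$ as $c\,l\le \chi_{\v{p}}(\alpha l)$ and inserting the definition \eqref{eq:Expofc} of $c$, I must establish
\begin{equation}
    \sqrt{\alpha}\,l\sum_{i=\chi_{\v{p}}(\sqrt{\alpha}l)}^{\chi_{\v{p}}(\alpha l)} p_i^\downarrow\le \chi_{\v{p}}(\alpha l).
\end{equation}
The indices in this sum label the ``shell'' of coordinates whose probability has already dropped below the threshold $1/(\sqrt{\alpha}\,l)$ (they lie beyond the first $\chi_{\v{p}}(\sqrt{\alpha}l)$ entries, which are the ones $\ge 1/(\sqrt{\alpha}l)$) but is still at least $1/(\alpha l)$. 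I would bound the shell mass crudely by its cardinality times the threshold value: there are $\chi_{\v{p}}(\alpha l)-\chi_{\v{p}}(\sqrt{\alpha}l)$ such coordinates, each strictly below $1/(\sqrt{\alpha}l)$, so their sum is at most $[\chi_{\v{p}}(\alpha l)-\chi_{\v{p}}(\sqrt{\alpha}l)]/(\sqrt{\alpha}l)$. Multiplying through by $\sqrt{\alpha}\,l$ then gives $c\,l\le \chi_{\v{p}}(\alpha l)-\chi_{\v{p}}(\sqrt{\alpha}l)\le \chi_{\v{p}}(\alpha l)$, as required.

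I expect the main obstacle to be the careful bookkeeping of the summation endpoints together with the non-integer upper-limit convention. The estimate ``each summand is below $1/(\sqrt{\alpha}l)$'' is what makes the whole argument work, and it forces the lower endpoint of the shell sum to be the first coordinate that is \emph{strictly} below the threshold $1/(\sqrt{\alpha}l)$: were one to include a coordinate sitting at or above that threshold (which can carry a large mass, as a single dominant atom shows), the per-term bound would fail and the inequality could be violated. The delicate step is therefore to fix the range of summation and the fractional-limit convention so that every counted entry genuinely lies below $1/(\sqrt{\alpha}l)$, after which the counting estimate closes the proof.
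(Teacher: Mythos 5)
Your proposal is correct and follows essentially the same route as the paper: the first bound is the same normalisation/counting argument, and your ``shell mass $\le$ cardinality times threshold'' estimate is exactly the paper's step of subtracting $1/(\sqrt{\alpha}l)$ from each term and noting that the extra terms beyond index $\chi_{\v{p}}(\sqrt{\alpha}l)$ are negative. The off-by-one subtlety you flag about the lower summation endpoint in the definition of $c$ is real (the paper's derivation implicitly starts the shell at $\chi_{\v{p}}(\sqrt{\alpha}l)+1$), but it does not change the argument.
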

\begin{proof}
     The first inequality is very easily proven by observing that the number of entries larger than $1/l$, i.e., $\chi_{\v{p}}(l)$, is bounded from above by $l$ due to normalisation. Now, to prove the second inequality, we start from the following observation:
    \begin{equation}
        \sum_{i=1}^{\chi_{\v{p}}(\sqrt{\alpha}l)} \left(p_i^\downarrow-\frac{1}{\sqrt{\alpha}l}\right)\geq
        \sum_{i=1}^{\chi_{\v{p}}(\alpha l)}
        \left(p_i^\downarrow-\frac{1}{\sqrt{\alpha}l}\right),
    \end{equation}
    which comes from the fact that all the extra terms on the right hand side of the above are negative by definition. By rearranging terms we arrive at
    \begin{equation}
        \chi_{\v{p}}(\alpha l)-\chi_{\v{p}}(\sqrt{\alpha}l)\geq c l,
    \end{equation}
    which obviously implies
    \begin{equation}
        l \leq \frac{\chi_{\v{p}}({\alpha}l)}{c}.
    \end{equation}
\end{proof}


\subsection{Proof of Theorem~\ref{thm:incoherent}}
\label{sec:proof1a}

The proof of Theorem~\ref{thm:incoherent} will be divided into two parts. First, we will derive the upper bound for the optimal transformation error $\epsilon_N$, Eq.~\eqref{eq:error_bound}. Then, we will provide a lower bound for $\epsilon_N$ and show that it is approaching the derived upper bound in the asymptotic limit, and so we will prove Eq.~\eqref{eq:error_incoherent}. 


\subsubsection{Upper bound for the transformation error}

We start by introducing the following averaged entropic quantities for the total initial distribution $\hat{\v{P}}^N$:
\begin{subequations}\label{Moment_incoherent}
\begin{align}
    \!\! h_N:=&\frac{1}{N} H(\hat{\v{P}}^N)=\frac{1}{N}\sum_{n=1}^N H(\hat{\v{p}}^N_n)=:\frac{1}{N}\sum_{n=1}^N h^N_n,\label{eq:hn}\\
    \!\! v_N:=&\frac{1}{N} V(\hat{\v{P}}^N)=\frac{1}{N}\sum_{n=1}^N V(\hat{\v{p}}^N_n)=:\frac{1}{N}\sum_{n=1}^N v^N_n,\label{eq:vn2}\\
    \!\! y_N:=&\frac{1}{N} Y(\hat{\v{P}}^N)=\frac{1}{N}\sum_{n=1}^N Y(\hat{\v{p}}^N_n)=:\frac{1}{N}\sum_{n=1}^N y^N_n. \label{eq:yn2}
\end{align}
\end{subequations}
Note that the above $v_N$ and $y_N$ are, up to rescaling by $N/\beta$, incoherent versions of $\sigma^2(F^N)$ and $\kappa^3(F^N)$ defined in Eqs.~\eqref{eq:vn}-\eqref{eq:yn}. We also define the function~$l$:
\begin{align}\label{eq:Defn_f}
    l(z):=\exp\left(N h_N+z\sqrt{N v_N}\right).
\end{align}
We now rewrite the upper summation limit appearing in Eq.~\eqref{eq:optimal_e} employing the above function:
\begin{equation}
    \exp [H(\hat{\v{G}}^{N})+H(\tilde{\v{f}}^N_k)-H(\tilde{\v{\eta}}^N)]=l(x),
\end{equation}
so that
\begin{align}\label{eq:Def_x}
    x &= \frac{D(\hat{\v{P}}^N\|\hat{\v{G}}^{N})- D(\tilde{\v{f}}^N_k\|\tilde{\v{\eta}}^N)}{\sqrt{V(\hat{\v{P}}^N)}}.
\end{align}
This can be further transformed by employing the invariance of relative entropic quantities under embedding, Eqs.~\eqref{eq:invariance1}-\eqref{eq:invariance2}, leading to
\begin{align}
\label{eq:dist_cond}
    x&=\frac{\sum\limits_{n=1}^N D(\v{p}_{n}^N\|\v{\gamma}_n^N)-D(\tilde{\v{s}}_{k}^N\|\tilde{\v{\gamma}}^N)}{\left(\sum\limits_{n=1}^N V(\v{p}_{n}^N\|\v{\gamma}_{n}^N)\right)^{\frac{1}{2}}},
\end{align}
which is precisely the argument of $\Phi$ appearing in the statement of Theorem~\ref{thm:incoherent} in Eq.~\eqref{eq:error_incoherent}:
\begin{equation}\label{eq:x_continued}
    x = \frac{\Delta F^N}{\sigma(F^N)}.
\end{equation}
We conclude that with the above $x$ we can then rewrite the expression for the optimal transformation error, Eq.~\eqref{eq:optimal_e}, as
\begin{equation}
    \label{eq:error_exact}
    \epsilon_N = 1- \sum_{i=1}^{l(x)} (\hat{\v{P}}^N)_i^\downarrow.
\end{equation}

Next, we will find an upper bound for the error employing Eq.~\eqref{eq:lower}:
\begin{align}
\label{eq:first_bound}
    \epsilon_N&\leq 1-\sum_{i=1}^{\chi_{\hat{\v{P}}^N}(l(x))}(\hat{\v{P}}^N)_i^\downarrow\nonumber\\
    &=1- \sum_{i}\left\lbrace 
    \hat{P}^N_i \middle|
    \hat{P}^N_i \geq \frac{1}{l(x)}	\right\rbrace \!.
    \end{align}
In order to evaluate the above sum, consider $N$ discrete random variables $X_n$ taking values $-\log(\hat{\v{p}}^N_n)_i$ with probability $(\hat{\v{p}}^N_n)_i$, so that
\begin{subequations}
\begin{align}\label{eq:Moment_Inc}
    \langle X_n \rangle & = h^N_{n},\\
    \langle (X_n-\langle X_n \rangle)^2 \rangle & = v^N_{n},\\
    \langle\left| X_n-\langle X_n \rangle\right|^3\rangle & = y^N_{n},	\end{align}
\end{subequations}
where the average $\langle \cdot\rangle$ is taken with respect to the distribution $\hat{\v{p}}_{n}^N$. We then have the following
\begin{align}\label{eq: prob_on_set}
    &\sum_{i}\left\lbrace 
    \hat{P}^N_i \middle|
    \hat{P}^N_i \geq \frac{1}{l(x)}		
    \right\rbrace \nonumber\\
    &\quad=\sum_{i_1,\dots,i_N}\left\lbrace \prod_{n=1}^{N}
    (\hat{\v{p}}^N_{n})_{i_n} \middle|\prod_{n=1}^{N}(\hat{\v{p}}^N_{n})_{i_n}\geq \frac{1}{l(x)}	\right\rbrace \nonumber\\
    &\quad=\sum_{i_1,\dots,i_N}\left\lbrace \prod_{n=1}^{N}(\hat{\v{p}}^N_{n})_{i_n} \middle|-\sum_{n=1}^{N}\log (\hat{\v{p}}^N_{n})_{i_n}\leq \log l(x)	\right\rbrace \nonumber\\
    &\quad=\Pr\left[\sum_{n=1}^N X_n\leq Nh_N+x\sqrt{Nv_N}\right] \nonumber\\
    &\quad=\Pr\left[\frac{\sum_{n=1}^N (X_n-\langle X_n\rangle)}{\sqrt{\sum_{n=1}^N \langle (X_n-\langle X_n \rangle)^2 \rangle}}\leq x\right].
\end{align}
Now, the Berry-Esseen theorem~\cite{berry1941accuracy,esseen1942} tells us that
\begin{align}
\!\!\!\!   \left| \Pr\!\left[\!\frac{\sum_{n=1}^N (X_n-\langle X_n\rangle)}{\sqrt{\sum_{n=1}^N \langle (X_n\!-\!\langle X_n \rangle)^2 \rangle }}\leq x\!\right]\! -\! \Phi(x)\right|\!\leq \!\frac{C y_N}{\sqrt{Nv_N^3}},\!\!
\end{align}
where $C$ is a constant that was bounded in Refs.~\cite{Esseen1956, Shevtsova2011} by
\begin{equation}
\label{eq:c_bounds}
    0.4097\leq C \leq 0.4748.
\end{equation}
We thus have
\begin{equation}
    \label{eq:bound_Pn}
    \left|\sum_{i}\left\lbrace 
    \hat{P}^N_i \middle|
    \hat{P}^N_i \geq \frac{1}{l(x)}		
    \right\rbrace -  \Phi(x)\right| \leq \frac{Cy_N}{\sqrt{Nv_N^3}},
\end{equation}
and so we conclude that the error $\epsilon_N$ is bounded from above by
\begin{equation}
\label{eq:error_bound2}
    \epsilon_N\leq \Phi\left(-\frac{\Delta F^N}{\sigma(F^N)}\right)+\frac{C\kappa^3(F^N)}{\sigma^3(F^N)},
\end{equation}
which proves the single-shot upper bound on transformation error, Eq.~\eqref{eq:error_bound}, presented in Theorem~\ref{thm:incoherent}. Also, note that from Eq.~\eqref{eq:error_bound2}, it is clear that if $\lim_{N\to\infty} v_n$ and $\lim_{N\to\infty} y_n$ are well-defined and non-zero (as we assume), then
\begin{equation}
    \label{eq:asymptotic_upper}
    \lim_{N\rightarrow\infty} \epsilon_N \leq \lim_{N\rightarrow\infty} \Phi\left(-\frac{\Delta F^N}{\sigma(F^N)}\right).
\end{equation}


\subsubsection{Lower bound for the transformation error}
\label{sec:incoherent_lower}

In order to lower bound the expression for the optimal error in the asymptotic limit, we choose \mbox{$\alpha=\exp(\delta\sqrt{N})$} with $\delta>0$ in Eq.~\eqref{eq:upper}. Thus, from Eq.~\eqref{chi_order} and Eq.~\eqref{eq:Defn_f} we have 
\begin{equation}\label{eq:Boundonlx}
    l(x)\leq \frac{\chi_{\hat{\v{P}}^N}(e^{\delta\sqrt{N}}l(x))}{c}=\frac{\chi_{\hat{\v{P}}^N}(l(x+\delta))}{c} ,
\end{equation}
where we can evaluate $c$ from Eq.~\eqref{eq:Expofc}:
\begin{align}\label{eq:c}
    \!\!\!\! c&=e^{\frac{\delta \sqrt{N}}{2}} \sum_{i=\chi_{\hat{\v{P}}^N}(l(x+\delta/2))}^{\chi_{\hat{\v{P}}^N}(l(x+\delta))} (\hat{\v{P}}^N)_i^\downarrow \nonumber\\
    \!\!\!\! &=e^{\frac{\delta \sqrt{N}}{2}} \left( \sum_{i}\left\lbrace 
    \hat{P}^N_i \middle|
    \hat{P}^N_i \geq \frac{1}{l(x+\delta)}	
    \right\rbrace  \right.\nonumber\\
    \!\!\!\! &\qquad\qquad\left. -\sum_{i}\left\lbrace 
    \hat{P}^N_i \middle|
    \hat{P}^N_i \geq \frac{1}{l(x+\delta/2)}
    \right\rbrace \right).
\end{align}
Using Eq.~\eqref{eq:bound_Pn} we can bound the above expression from below as
\begin{equation}
   c \geq e^{\frac{\delta \sqrt{N}}{2}}\left(\Phi(x+\delta)-\Phi(x+\delta/2)-\frac{2Cy_N}{\sqrt{Nv_N^3}}\right).
\end{equation}
Now, for any finite $\delta>0$ it is clear that there exists $N_0$ such that for all $N\geq N_0$ we have $c>1$. Combining with Eq.~\eqref{eq:Boundonlx}, for large enough $N$ we finally have 
\begin{equation}
    \label{eq:Boundonlx2}
    l(x)\leq \frac{\chi_{\hat{\v{P}}^N}(l(x+\delta))}{c} \leq \chi_{\hat{\v{P}}^N}(l(x+\delta)).
\end{equation}
Hence, using Eq.~\eqref{eq:Boundonlx2}, we have the following lower bound on transformation error
\begin{eqnarray}
     \label{eq:first_upper}
    \epsilon_N = 1-\sum_{i=1}^{l(x)}(\hat{\v{P}}^N)_i^\downarrow &\geq& 1- \!\!\sum_{i=1}^{\chi_{\hat{\v{P}}^N}(l(x+\delta))} \!\!(\hat{\v{P}}^N)_i^\downarrow\nonumber\\
    &=& 1- \sum_{i}\left\lbrace 
    \hat{P}^N_i \middle|
    \hat{P}^N_i \geq \frac{1}{l(x+\delta)}
    \right\rbrace \nonumber\\
    &\geq& 1-\Phi(x+\delta)-\frac{Cy_N}{\sqrt{Nv_N^3}},
\end{eqnarray}
where in the last line we used Eq.~\eqref{eq:bound_Pn} again. It is thus clear that
\begin{equation}
    \lim_{N\to\infty} \epsilon_N\geq 1-\lim_{N\to\infty} \Phi(x+\delta)=\lim_{N\to\infty} \Phi(-x-\delta)
\end{equation}
and, since it works for any $\delta>0$, we conclude that
\begin{equation}
    \lim_{N\to\infty} \epsilon_N\geq \lim_{N\to\infty} \Phi\left(-\frac{\Delta F^N}{\sigma(F^N)}\right).
\end{equation}
Combining the above with the bound obtained in Eq.~\eqref{eq:asymptotic_upper}, we arrive at
\begin{equation}\label{eq:err_opt}
    \lim_{N\to\infty}\epsilon_N=\lim_{N\to\infty} \Phi\left(-\frac{\Delta F^N}{\sigma(F^N)}\right),
\end{equation}
which proves the asymptotic expression for the transformation error, Eq.~\eqref{eq:error_incoherent}, presented in Theorem~\ref{thm:incoherent}.


\subsection{Proof of Theorem~\ref{thm:incoherent2}}
\label{sec:proof2a}

The proof of Theorem~\ref{thm:incoherent2} will be divided into two parts. First, we will find the embedded version of the optimal final state minimising the dissipation of free energy $F^N_{\mathrm{diss}}$, and derive the expression for $F^N_{\mathrm{diss}}$ as a function of the initial state. Then, we will calculate $F^N_{\mathrm{diss}}$ up to second order asymptotic terms by upper and lower bounding the expression, and showing that the bounds coincide.


\subsubsection{Deriving optimal dissipation}

We start by presenting an extension of Lemma~\ref{lem:1shot-distill} that not only specifies the optimal transformation error for approximate majorisation but also yields the optimal final state.

\begin{lem}\label{Opt_Diss_free}
    Let $\v{p}$ and $\v{q}$ be distributions of size $d$ with \mbox{$V(\v{q})=0$} and \mbox{$H(\v{q})=\log L$}. Then, states $\v{r}$ saturating $\v{p}\succ_\epsilon \v{q}$, i.e., such that \mbox{$\v{p}\succ\v{r}$} and $F(\v{q},\v{r})=1-\epsilon$ with $\epsilon$ being the minimal value specified by Lemma~\ref{lem:1shot-distill}, are given by $\Pi \v{q}^*$, where $\Pi$ is an arbitrary permutation,
    \begin{eqnarray}\label{eq:qstar}
     \v{q}^*&:=&
     \begin{cases}
       \frac{1-\epsilon}{L} &\quad \mathrm{for~} i\leq L,\\
       B \v{p}'_{i-L} &\quad\mathrm{for~}i>L,\\
     \end{cases}\, 
\end{eqnarray}
$B$ is an arbitrary $(d-L)\times(d-L)$ bistochastic matrix, and $\v{p}'$ is a vector of size $(d-L)$ with $p'_i=p^\downarrow_{i+L}$. Moreover, among all such distributions $\Pi\v{q}^*$, the entropy is minimised for the one with $B$ being the identity matrix.
\end{lem}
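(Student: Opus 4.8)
The plan is to characterise the saturating states $\v{r}$ by sandwiching the fidelity $F(\v{q},\v{r})$ between two inequalities whose simultaneous saturation pins down $\v{r}$ completely. Since both the majorisation relation in Eq.~\eqref{eq:majorisation} and the fidelity in Eq.~\eqref{eq:DefFidelity} are covariant under a simultaneous relabelling of coordinates, I would first reduce to the canonical case where $\v{q}$ is flat on its support $S=\{1,\dots,L\}$, i.e.\ $q_i=1/L$ for $i\le L$ and $q_i=0$ otherwise (here $L=\exp H(\v{q})$, and Lemma~\ref{lem:1shot-distill} gives $1-\epsilon=\sum_{i=1}^{L}p_i^{\downarrow}$). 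The arbitrary permutation $\Pi$ in the statement then merely restores the general position of the support of $\v{q}$, and it leaves the entropy invariant, which is why the entropy-minimisation claim can be phrased purely in terms of $B$.

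For this canonical $\v{q}$ I would write $F(\v{q},\v{r})=\tfrac1L\big(\sum_{i=1}^{L}\sqrt{r_i}\big)^2$. By Cauchy--Schwarz (equivalently, concavity of the square root) one has $\big(\sum_{i=1}^{L}\sqrt{r_i}\big)^2\le L\sum_{i=1}^{L}r_i$, with equality iff $r_1=\dots=r_L$; hence $F(\v{q},\v{r})\le\sum_{i=1}^{L}r_i$, the mass of $\v{r}$ on $S$. This mass is at most $\sum_{i=1}^{L}r_i^{\downarrow}$, which by $\v{p}\succ\v{r}$ is at most $\sum_{i=1}^{L}p_i^{\downarrow}=1-\epsilon$, recovering Lemma~\ref{lem:1shot-distill}. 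Saturation of the whole chain then forces three conditions: (i)~$r_1=\dots=r_L$; (ii)~these are the $L$ largest entries of $\v{r}$; and (iii)~$\sum_{i=1}^{L}r_i^{\downarrow}=\sum_{i=1}^{L}p_i^{\downarrow}$. Combined, they yield $r_i=(1-\epsilon)/L$ for $i\le L$, which is precisely the flat head of $\v{q}^{*}$ in Eq.~\eqref{eq:qstar}.

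With the head fixed, I would translate the remaining content of $\v{p}\succ\v{r}$ into a condition on the tail $\v{r}_{\rm tail}:=(r_{L+1},\dots,r_d)$. The partial-sum constraints for $k\le L$ hold automatically, because the average of the top $L$ entries of $\v{p}$ is no larger than the average of its top $k$ entries, so $\sum_{i=1}^{k}r_i^{\downarrow}=k(1-\epsilon)/L\le\sum_{i=1}^{k}p_i^{\downarrow}$; the constraints for $k>L$, after cancelling the common head mass $1-\epsilon$, reduce exactly to $\v{r}_{\rm tail}\prec\v{p}'$ with $\v{p}'=(p_{L+1}^{\downarrow},\dots,p_d^{\downarrow})$. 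One also checks $p'_1=p_{L+1}^{\downarrow}\le(1-\epsilon)/L$, so the head entries are genuinely the largest and the global ordering is consistent. By the Hardy--Littlewood--P\'olya / Birkhoff theorem, $\v{r}_{\rm tail}\prec\v{p}'$ holds iff $\v{r}_{\rm tail}=B\v{p}'$ for some $(d-L)\times(d-L)$ bistochastic matrix $B$, reproducing the second branch of Eq.~\eqref{eq:qstar}.

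Finally, for the entropy statement I would observe that $H(\Pi\v{q}^{*})=H(\v{q}^{*})$ decomposes into the fixed contribution of the flat head plus the tail term $-\sum_{i>L}r_i\log r_i$, which is a symmetric concave, hence Schur-concave, function of $\v{r}_{\rm tail}$. Since $\v{r}_{\rm tail}=B\v{p}'\prec\v{p}'$, Schur-concavity gives $-\sum_{i>L}r_i\log r_i\ge$ its value at the most peaked admissible tail $\v{p}'$, so the entropy is minimised exactly when $B$ is a permutation, in particular the identity. The main obstacle I anticipate is the third paragraph: showing rigorously that, once the head is frozen at $(1-\epsilon)/L$, the full majorisation $\v{p}\succ\v{r}$ is equivalent to the clean tail-majorisation $\v{r}_{\rm tail}\prec\v{p}'$. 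This hinges on the averaging argument that discharges the $k\le L$ constraints and on the ordering check $p'_1\le(1-\epsilon)/L$, which together guarantee that no partial-sum inequality is violated and that the identified flat entries are indeed the largest.
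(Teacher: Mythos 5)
Your proof is correct, and its skeleton coincides with the paper's: both arguments sandwich the fidelity via the chain \mbox{$1-\epsilon=F(\v{q},\v{r})\leq\frac{1}{L}\big(\sum_{i\le L}\sqrt{r_i}\big)^2\leq\sum_{i\le L}r_i^{\downarrow}\leq\sum_{i\le L}p_i^{\downarrow}=1-\epsilon$} to force equality of the partial sums, then pin down a flat head and a bistochastically-mixed tail, and finally invoke the entropy-increasing property of bistochastic maps to select $B=\iden$. You diverge in how the two middle steps are executed. For the flat head, the paper sets up a constrained maximisation of $F(\v{q}^{\downarrow},\v{r}^{\downarrow})$ and solves it with Lagrange multipliers, whereas you read the condition $r_1=\dots=r_L$ directly off the equality case of Cauchy--Schwarz; your version is more elementary and makes the uniqueness of the head immediate without appealing to concavity of the objective. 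For the tail, the paper writes $\v{r}^{\downarrow}=Q\v{p}^{\downarrow}$ with $Q$ bistochastic, Birkhoff-decomposes $Q$ into permutations, and argues from $\v{v}\cdot Q\v{p}^{\downarrow}=\v{v}\cdot\v{p}^{\downarrow}$ that every permutation in the decomposition must preserve the block structure, hence $Q=\tilde{B}\oplus B$; you instead cancel the frozen head from the partial-sum inequalities to obtain the clean equivalence $\v{p}\succ\v{r}\Leftrightarrow\v{r}_{\rm tail}\prec\v{p}'$ and then apply Hardy--Littlewood--P\'olya once to the tail alone. Your route buys a cleaner converse (the equivalence immediately shows every $\v{q}^*$ of the stated form is admissible) and sidesteps the degenerate case $p_L^{\downarrow}=p_{L+1}^{\downarrow}$, where the paper's ``equality iff $\v{v}$ is invariant under every $\Pi_i^T$'' step needs a small caveat; the checks you flag as potential obstacles (the averaging bound for $k\le L$ and $p'_1\le(1-\epsilon)/L$) both go through exactly as you describe.
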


The proof of the above lemma can be found in Appendix~\ref{app:min_out}. Here, we apply it to the central Eq.~\eqref{eq:e-thermomajorisation} that specifies the conditions for the investigated $\epsilon$-approximate thermodynamic distillation process. As a result, the actual total final state in the embedded picture, $\hat{\v{F}}^N$, which is $\epsilon$ away from the target state $\hat{\v{G}}^N \otimes \tilde{\v{f}}^N_k$, is given, up to permutations, by
\begin{eqnarray}
\label{eq:distribution-total}
     \hat{\v{F}}^N&=&
     \begin{cases}
       \frac{1-\epsilon}{K} &\quad \mathrm{for~} i\leq K,\\
       (\hat{\v{P}}^N\otimes\tilde{\v{\eta}}^N)_{i}^\downarrow &\quad\mathrm{for~}i>K.\\
     \end{cases}\, 
\end{eqnarray}
where
\begin{equation}
    K=\exp(H(\hat{\v{G}}^N)+H(\tilde{\v{f}}^N_k)).
\end{equation}
Note that we have chosen $\hat{\v{F}}^N$ to minimise entropy since, due to Eq.~\eqref{eq:invariance1}, it translates into the real (unembedded) final state $\v{F}^N$ with maximal free energy, i.e., it leads to minimal free energy dissipation.

The next step consists of going from the embedded to the unembedded picture. Importantly, if a state in the embedded picture is not uniform within each embedding box, the unembedding will effectively lead to the loss of free energy. However, we note that the final state is given by Eq.~\eqref{eq:distribution-total} only up to permutations. Thus, one may freely rearrange its elements to minimise such a loss. In particular, in Appendix~\ref{app:DDF} we show that for the case of identical initial states (i.e., $\v{P}^n=\v{p}^{\otimes N}$), there exists a permutation that transforms $\hat{\v{F}}^N$ so that it is uniform in almost all embedding boxes, which leads to exponentially small dissipation of free energy (i.e., no dissipation up to second-order asymptotics).  Therefore, employing the definition of dissipated free energy from Eq.~\eqref{eq:F_diss} and the above discussion, we have
\begin{align}
    F^N_{\mathrm{diss}}=&\frac{1}{\beta}\left( D(\v{P}^N\|\v{G}^N)-D(\v{F}^N\|\v{G}^N\otimes \tilde{\v{G}}^N)\right)\nonumber\\
    \simeq&\frac{1}{\beta}\left( H(\hat{\v{F}}^N)-H(\hat{\v{P}}^N)-\log \tilde{D}\right).
\end{align}
Here, $\tilde{D}$ is the embedding constant defined by $\tilde{\v{G}}^N$ according to Eq.~\eqref{eq:definitiongibbsvec}, and similarly $D$ will denote this embedding constant for $\v{G}^N$.
Note, however, that these can always be chosen to be equal, since the only thing that matters is that $\tilde{G}^N_k=\tilde{D}_k/\tilde{D}$ and ${{G}}^N_k={D}_k/D$, i.e., the change in $\tilde{D}$ or $D$ can be compensated by the appropriate change in $\tilde{D}_k$ or $D_k$.

Next, noting that $K=D\tilde{D}_k=\tilde{D}\tilde{D}_k$, we calculate the entropy of $\hat{\v{F}}^N$:
\begin{align}
    H(\hat{\v{F}}^N) =& -\sum_{i=1}^K \frac{1-\epsilon}{K}\log\left(\frac{1-\epsilon}{K}\right) \nonumber\\
    &- \sum_{i > \tilde{D}_k} (\hat{\v{P}}^N)_i{^{\downarrow}} \log \frac{(\hat{\v{P}}^N)_i{^{\downarrow}}}{\tilde{D}}\nonumber\\
    =&-(1-\epsilon)\log(1-\epsilon)+(1-\epsilon)\log K\nonumber\\
    & - \sum_{i > \tilde{D}_k} (\hat{\v{P}}^N)_i{^{\downarrow}} \log (\hat{\v{P}}^N)_i{^{\downarrow}} +\epsilon \log \tilde{D}\nonumber\\
    \simeq & \log \tilde{D} +(1-\epsilon)\log\tilde{D}_k\nonumber\\
    &- \sum_{i > \tilde{D}_k} (\hat{\v{P}}^N)_i{^{\downarrow}} \log (\hat{\v{P}}^N)_i{^{\downarrow}},
\end{align}
where we have dropped the term $(1-\epsilon) \log (1-\epsilon)$ as it is constant (i.e. it does not scale with $N$). 
Therefore, the dissipated free energy in the optimal distillation process is simply given by
\begin{equation}
    \label{eq:dissipatedfreeenergyapprox}
   \!\! F^N_{\rm{diss}} \simeq\frac{1}{\beta}\!\left(\! (1-\epsilon)\log \tilde{D}_k +\sum_{i = 1}^{\tilde{D}_k} (\hat{\v{P}}^N)_i{^{\downarrow}} \log (\hat{\v{P}}^N)_i{^{\downarrow}} \!\right)\!.\!
\end{equation}


\subsubsection{Calculating optimal dissipation}

We now proceed to provide bounds for $F^N_{\rm{diss}}$. To do so, we first make use of Eq.\eqref{chi_order} and the fact that $\log x$ is negative for all $x\in(0,1)$ to write 
\begin{align}
\label{eq:ineq}
    \!\!\sum_{i = 1}^{\tilde{D}_k} (\hat{\v{P}}^N)_i{^{\downarrow}} \log (\hat{\v{P}}^N)_i{^{\downarrow}} \leq \sum_{i = 1}^{\chi_{\hat{\v{P}}^N}(\tilde{D}_k)} (\hat{\v{P}}^N)_i{^{\downarrow}} \log (\hat{\v{P}}^N)_i{^{\downarrow}} \!.\!
\end{align}
The right hand side of the above can then be recast as follows,
\begin{align}
&\!\!\sum_{i = 1}^{\chi_{\hat{\v{P}}^N}(\tilde{D}_k)} (\hat{\v{P}}^N)_i{^{\downarrow}} \log (\hat{\v{P}}^N)_i{^{\downarrow}}\nonumber \\
&~=\sum_{i}\left\lbrace 
    \hat{P}^N_i\log (\hat{P}^N_i) \middle|
    \hat{P}^N_i \geq \frac{1}{\tilde{D}_k}		
    \right\rbrace \nonumber \\
    &~=\!\!\sum_{i_1,\dots,i_N}\!\left\lbrace \prod_{n=1}^{N}
    (\hat{\v{p}}^N_{n})_{i_n}\sum_{m=1}^N\log(\hat{\v{p}}^N_{m})_{i_m} \middle|\prod_{n=1}^{N}(\hat{\v{p}}^N_{n})_{i_n}\geq \frac{1}{\tilde{D}_k}\!\right\rbrace  \nonumber\\
    &~=\!\!\sum_{i_1,\dots,i_N}\!\left\lbrace \prod_{n=1}^{N}
    (\hat{\v{p}}^N_{n})_{i_n}\sum_{m=1}^N\log(\hat{\v{p}}^N_{m})_{i_m}\right. \nonumber\\
    &\qquad\qquad\qquad\qquad \left| -\sum_{m=1}^N\log(\hat{\v{p}}^N_{m})_{i_m}\leq \log \tilde{D}_k \right\rbrace.
    \label{eq:PNlogPN}
\end{align}
Let us now note that from the definition in Eq.~\eqref{eq:deltaF} we have (here we employ the notation introduced in Eqs.~\eqref{eq:hn}-\eqref{eq:yn2}):
\begin{align}
    \beta\Delta F^N &= \sum\limits_{n=1}^N D(\v{p}_{n}^N\|\v{\gamma}_n^N)-D(\tilde{\v{s}}_{k}^N\|\tilde{\v{\gamma}}^N)\nonumber\\
    &= H(\tilde{\v{f}}_{k}^N) - \sum\limits_{n=1}^N H(\hat{\v{p}}_{n}^N)=\log \tilde{D}_k - N h_N .
\end{align}
Next, since for non-trivial error we need
\begin{equation}
\label{eq:xsigmF}
    \beta\Delta F^N = x\beta\sigma(F^N) = x \sqrt{N v_N},
\end{equation}
with some constant $x$, we can rewrite $\log \tilde{D}_k$ as
\begin{equation}
    \log \tilde{D}_k = N h_N +x \sqrt{N v_N}. 
    \label{eq:Dktilde}
\end{equation}

Coming back to Eq.~\eqref{eq:PNlogPN}, we can rewrite it by introducing $N$ discrete random variables $\{X_n\}_{n=1}^N$ assuming values $-\log (\hat{\v{p}}_n^N)_{i_n}$ with probability $(\hat{\v{p}}^N_n)_{i_n}$. Crucially, since the sum of their averages is $Nh_N$ and their total variance is $Nv_N$, the condition in the summation in Eq.~\eqref{eq:PNlogPN} simply becomes:
\begin{equation}
    \label{eq:constraint}
    Y_N:=\frac{\sum_{n=1}^N(X_n-\langle X_n\rangle)}{\sqrt{\sum_{n=1}^N\langle(X_n-\langle X_n\rangle)^2\rangle}} \leq x.
\end{equation}
Thus, we can write Eq.~\eqref{eq:PNlogPN} in a compact way as follows
\begin{align}
    &\sum_{i = 1}^{\chi_{\hat{\v{P}}^N}(\tilde{D}_k)} (\hat{\v{P}}^N)_i{^{\downarrow}} \log (\hat{\v{P}}^N)_i{^{\downarrow}}=-\int_T \sum_{n=1}^N
    X_N  d P, 
\end{align}
where $P$ is discrete probability measure given by $P(i_1,\ldots,i_N)=\prod_{n=1}^{N} (\hat{\v{p}}^N_{n})_{i_n}$ and $T$ is a region satisfying the constraint $Y_N<x$. Noting that 
\begin{align}
    \sum_{n=1}^N X_n= \sqrt{N v_n} Y_N + N h_N
\end{align}
we can further rewrite it as 
\begin{align}
    \!\!\!\int\limits_T \sum_{n=1}^N
    X_N  d P=
    \sqrt{Nv_N}\!\!\int\limits_{Y_N\leq x}\!\!  
    Y_N dP   + N h_N \!\!\int\limits_{Y_N\leq x}\!\! dP.\!
\end{align}
The second integral on the right hand side of the above was already calculated and is equal to $1-\epsilon$, where $\epsilon$ is the optimal transformation error from Theorem~\ref{thm:incoherent}. Further, since $Y_N$ is a standarized sum of independent random variables, its distribution tends to a normal Gaussian distribution with density denoted by  $\phi(x)$, i.e.,
\begin{equation}
    \int\limits_{Y_N\leq x}\!\!  
    Y_N dP\simeq\int_{-\infty}^x  y \phi(y) d y = -\frac{e^{-x^2/2}}{\sqrt{2\pi}}.
\end{equation}
We thus obtain: 
\begin{align}
    \!\!\!&\!\!\!\sum_{i = 1}^{\chi_{\hat{\v{P}}^N}(\tilde{D}_k)} (\hat{\v{P}}^N)_i{^{\downarrow}} \log (\hat{\v{P}}^N)_i{^{\downarrow}}\nonumber\\
    \!\!\!&\!\!\!~  \simeq \sqrt{N v_N}\frac{e^{-x^2/2}}{\sqrt{2\pi}} - N h_N(1-\epsilon)\nonumber\\
    \!\!\!&\!\!\!~  = \sqrt{Nv_N}\frac{e^{-x^2/2}}{\sqrt{2\pi}} -(1-\epsilon) (\log \tilde{D}_k-x\sqrt{Nv_N})\nonumber\\
    \!\!\!&\!\!\!~  = \beta\sigma(F^N)\!\left(\!\frac{e^{-x^2/2}}{\sqrt{2\pi}}+x(1-\epsilon)\!\right)\! -\!(1-\epsilon) \log \tilde{D}_k.
\end{align}

We can now use the inequality from Eq.~\eqref{eq:ineq} and substitute the above to Eq.~\eqref{eq:dissipatedfreeenergyapprox} to arrive at:
\begin{equation}
   F^N_{\rm{diss}} \lesssim \sigma(F^N)\left(\frac{e^{-x^2/2}}{\sqrt{2\pi}}+x(1-\epsilon)\right).
\end{equation}
Next, employing Eq.~\eqref{eq:xsigmF} and the expression for optimal transformation error from Theorem~\ref{thm:incoherent}, we can re-express $x$ as
\begin{equation}
    x=-\Phi^{-1}(\epsilon)
\end{equation}
to finally obtain
\begin{equation}    
\label{eq:diss_inc_upper}
   F^N_{\rm{diss}} \lesssim \sigma(F^N)\left(\frac{e^{\frac{-(\Phi^{-1}(\epsilon))^2}{2}}}{\sqrt{2\pi}}-\Phi^{-1}(\epsilon)(1-\epsilon)\right).
\end{equation}

To provide a lower bound of $F^N_{\rm diss}$ given in Eq.~\eqref{eq:dissipatedfreeenergyapprox}, we simply follow the argument we have given in Sec.~\ref{sec:incoherent_lower}. From Eq.~\eqref{eq:upper} it straightforwardly follows that 
\begin{align}
    \!\sum_{i=1}^{\tilde{D}_k} (\hat{\v{P}}^N)_i^\downarrow\log(\hat{\v{P}}^N)_i^\downarrow &\geq\!\!\! \sum_{i=1}^{\frac{\chi_{\hat{\v{P}}^N}(\tilde{D}_k e^{\delta\sqrt{N}})}{c}} \!\!\!\!\!(\hat{\v{P}}^N)_i^\downarrow\log(\hat{\v{P}}^N)_i^\downarrow,\!
\end{align}
where $c$, as before, can be lower bounded by 1 for large enough $N$. We thus get
\begin{equation}
\!\!\! F^N_{\mathrm{diss}} \gtrsim\frac{1}{\beta}\left(\! (1-\epsilon)\log \tilde{D}_k +\!\!\!\!\!\!\!\!\!\!\!\sum_{i=1}^{\chi_{\hat{\v{P}}^N}(\tilde{D}_k e^{\delta\sqrt{N}})} \!\!\!\!\!\!\!\!\!\!\!(\hat{\v{P}}^N)_i^\downarrow\log(\hat{\v{P}}^N)_i^\downarrow\!\right)\!.\! \!
\end{equation}
Since the above inequality holds for any $\delta>0$, therefore the limit $\delta\rightarrow 0$ we have
\begin{equation}
   F^N_{\mathrm{diss}} \gtrsim \sigma(F^N)\left(\frac{e^{\frac{-(\Phi^{-1}(\epsilon))^2}{2}}}{\sqrt{2\pi}}-\Phi^{-1}(\epsilon)(1-\epsilon)\right).
\end{equation}
Combining the above with the upper bound from Eq.~\eqref{eq:diss_inc_upper} we finally arrive at
\begin{equation}
      F^N_{\rm{diss}}\simeq a(\epsilon) \sigma(F^N), 
\end{equation}
where $a(\epsilon)$ is given by Eq.~\eqref{eq:a}.


\subsection{Proof of Theorem~\ref{thm:pure}}
\label{sec:proof1b}

The proof of Theorem~\ref{thm:pure} will be divided into three parts. First, we will show that a thermodynamic distillation process from a general state $\rho$ can be reduced to a distillation process from an incoherent state that is a dephased version of $\rho$. Employing this observation, we will recast the problem under consideration in terms of approximate majorisation and thermomajorisation as described in Sec.~\ref{sec:incoherent}. Then, in the second part of the proof, we will derive the upper bound for the optimal transformation error $\epsilon_N$. Finally, in the third part, we will provide a lower bound for $\epsilon_N$ and show that it is approaching the derived upper bound in the asymptotic limit, and so we will prove Eq.~\eqref{eq:error_pure}. 


\subsubsection{Reducing the problem to the incoherent case}

The thermodynamic distillation problem under investigation is specified as follows. The family of initial systems consists of a collection of $N$ identical subsystems, each with the same Hamiltonian
\begin{equation}
    H=\sum_{i=1}^d E_i \ketbra{E_i}{E_i},
\end{equation}
and an ancillary system with an arbitrary Hamiltonian~$H_A$ (note that the ancillary system can always be ignored by simply choosing its dimension to be 1). The family of initial states is given by
\begin{equation}\label{eq:Initial_pure_state_tensor}
    \rho^N= \psi^{\otimes N} \otimes \ketbra{E^A_0}{E^A_0},
\end{equation}
where
\begin{equation}
    \psi=\ketbra{\psi}{\psi},\quad \ket{\psi}=\sum_{i=1}^d \sqrt{p_i} e^{i\phi_i} \ket{E_i},
\end{equation}
is an arbitrary pure state and $\ket{E^A_0}$ is an eigenstate of $H_A$ with energy $E^A_0$. The family of target systems is composed of subsystems described by arbitrary Hamiltonians~$\tilde{H}^N$ and a subsystem described by the Hamiltonian~$H_A$. The family of target states is given by 
\begin{equation}
    \tilde{\rho}^N= \ketbra{\tilde{E}^{N}_k}{\tilde{E}^{N}_k} \otimes \ketbra{E^A_1}{E^A_1},
\end{equation}
where $\ket{\tilde{E}^{N}_k}$ is some eigenstate of $\tilde{H}^N$ and $\ket{E^A_1}$ is an eigenstate of $H_A$ with energy $E^A_1$. We are thus interested in the existence of a thermal operation $\E$ approximately performing the following transformation:
\begin{equation}
    \psi^{\otimes N} \otimes \ketbra{E^A_0}{E^A_0} ~\xrightarrow{~\E~}~ \ketbra{\tilde{E}^{N}_k}{\tilde{E}^{N}_k} \otimes \ketbra{E^A_1}{E^A_1}.
\end{equation}

We now have the following simple, but very useful, lemma.
\begin{lem}
	\label{lem:pure-to-incoherent}
   Every incoherent state $\sigma$ achievable from a state $\rho$ through a thermal operation is also achievable from $\D(\rho)$, where $\D$ is the dephasing operation destroying coherence between different energy subspaces:
	\begin{align}
	\exists \E: \E(\rho) = \sigma \quad\Leftrightarrow \quad \E(\D(\rho)) = \sigma.
	\end{align}
\end{lem}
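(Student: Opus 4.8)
The plan is to exploit two structural features of thermal operations: their covariance with respect to the time-translation symmetry generated by the Hamiltonian, and the fact that the dephasing map $\D$ is itself a thermal operation. First I would recall that the dephasing onto energy blocks can be written as a time average, $\D(\cdot)=\lim_{T\to\infty}\frac{1}{T}\int_0^T e^{-iHt}(\cdot)e^{iHt}\,dt$, which projects any operator onto its block-diagonal form with respect to the energy eigenspaces. Since each $e^{-iHt}$ commutes with $H$, it is an energy-conserving unitary and hence a trivial-bath thermal operation; a continuous convex mixture of thermal operations is again a thermal operation (realizable by adjoining a classical control register), so $\D$ is a valid free operation.

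For the forward implication, suppose $\E$ is a thermal operation with $\E(\rho)=\sigma$. Because any thermal operation is covariant --- a direct consequence of the constraint $[U,H\otimes\iden_E+\iden\otimes H_E]=0$ in its definition --- one has $\E(e^{-iHt}\,\cdot\,e^{iHt})=e^{-i\tilde{H}t}\,\E(\cdot)\,e^{i\tilde{H}t}$, where $\tilde{H}$ is the target Hamiltonian. Averaging this identity over $t$ and using linearity and continuity of $\E$ yields the intertwining relation $\E\circ\D=\tilde{\D}\circ\E$, where $\tilde{\D}$ is the dephasing on the output system. Hence $\E(\D(\rho))=\tilde{\D}(\E(\rho))=\tilde{\D}(\sigma)$, and since $\sigma$ is incoherent it is invariant under $\tilde{\D}$, giving $\E(\D(\rho))=\sigma$. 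Notably, this shows the \emph{same} $\E$ works on $\D(\rho)$.

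The reverse implication is immediate from composition: if some thermal operation $\E$ satisfies $\E(\D(\rho))=\sigma$, then $\E\circ\D$ is a composition of thermal operations, hence itself a thermal operation, and it maps $\rho$ directly to $\sigma$.

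I expect the main obstacle to lie in rigorously justifying covariance and the intertwining $\E\circ\D=\tilde{\D}\circ\E$ --- in particular, handling Hamiltonians with degenerate spectra, so that $\D$ dephases between distinct energy values while preserving coherence within a degenerate eigenspace, and confirming that the time-averaged mixture is genuinely realizable inside the class of thermal operations rather than merely inside the larger class of Gibbs-preserving maps. A secondary point worth care is the exact reading of the statement: the argument above establishes the stronger fact that the \emph{same} $\E$ achieving $\sigma$ from $\rho$ also achieves it from $\D(\rho)$, which is precisely what the displayed equivalence asserts.
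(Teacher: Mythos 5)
Your argument is correct and follows essentially the same route as the paper's proof: covariance of thermal operations under time translation gives the intertwining of $\E$ with dephasing, invariance of the incoherent target under dephasing closes the forward direction, and the fact that $\D$ is itself a thermal operation gives the converse. Your write-up is somewhat more careful than the paper's (e.g., distinguishing the input and output dephasing maps and justifying that the time average is realisable as a thermal operation), but the underlying idea is identical.
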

\begin{proof}
First, for a given $\rho$ and incoherent $\sigma$, assume that there exists a thermal operation $\E$ such that $\E(\rho) = \sigma$. Now, employing the fact that every thermal operation is covariant with respect to time-translations~\cite{lostaglio2015description}, and using the fact that incoherent $\sigma$ by definition satisfies $\mathcal{D}(\sigma) = \sigma$, we get 
\begin{equation}
\label{dephasingrho}
    \E(\D(\rho))=\D(\E(\rho))=\D(\sigma)=\sigma.   
\end{equation}
Likewise, the reverse implication holds by noting that the dephasing operation is a thermal operation.
\end{proof}

Because the target state in our case is incoherent, we can use the above result to restate our problem as the existence of a thermal operation $\E$ approximately performing the following transformation 
\begin{equation}
    \D(\psi^{\otimes N} \otimes \ketbra{E^A_0}{E^A_0}) ~\xrightarrow{~\E~}~ \ketbra{\tilde{E}^{N}_k}{\tilde{E}^{N}_k} \otimes \ketbra{E^A_1}{E^A_1}.
\end{equation}
Since
\begin{equation}\label{eq: Dephased initial state}
    \D(\psi^{\otimes N} \otimes \ketbra{E^A_0}{E^A_0})=\D(\psi^{\otimes N}) \otimes \ketbra{E^A_0}{E^A_0},
\end{equation}
our problem further reduces to understanding the structure of the incoherent state $\D(\psi^{\otimes N})$. It is block-diagonal in the energy eigenbasis and can be diagonalised using thermal operations (since unitaries in a fixed energy subspace are free operations). After such a procedure, we end up with an incoherent state that is described by the probability distribution $\v{P}^N$ over the multi-index set $\v{k}$ 
\begin{equation}
    \label{eq:Multinomial_prob_vector}
    P^N_{\v{k}} =\binom{N}{k_1,...,k_d}\prod_{i=1}^d p^{k_i}_i.
\end{equation}
Note that $P^N_{\v{k}}$ specifies the probability of $k_1$ systems being in energy state $E_1$, $k_2$ systems being in energy state $E_2$, and so on; and that we made a technical assumption that energy levels are incommensurable, so that each vector $\v{k}$ corresponds to a different value of total energy.

We have thus reduced the problem of thermodynamic distillation from pure states to thermodynamic distillation from incoherent states. More precisely, let us denote the sharp distributions corresponding to $\ket{E^A_i}$ by $\v{s}^A_i$ and the corresponding flat states after embedding by $\v{f}^A_i$. As before, we also use $\tilde{\v{s}}_k^N$ and $\tilde{\v{f}}_k^N$ to denote distributions related to the sharp state $\ket{\tilde{E}^{N}_k}$ and its corresponding flat state. The embedded Gibbs state corresponding to $H^N$  will be again denoted by $\hat{\v{G}}^N$, however now it has an even simpler form than in Eq.~\eqref{eq:PN_GN}, as the initial systems have identical Hamiltonians:
\begin{equation}
    \hat{\v{G}}^N=\hat{\v{\gamma}}^{\otimes N} = {\v{\eta}}^{\otimes N}.
\end{equation}
Similarly, $\hat{\v{P}}^N$ will be used to denote the embedded initial state (even though it now has a different form than in Eq.~\eqref{eq:PN_GN}):
\begin{equation}
     \hat{P}^N_{\v{k}, g_{\v{k}}} = \binom{N}{k_1,...,k_d}\prod^{d}_{i=1}\frac{p^{k_i}_i}{D^{k_i}_i}, 
\end{equation}
where $\gamma_i=D_i/D$ and 
\begin{equation}
    g_{\v{k}} \in \bigg\{1, ..., \prod_{i=1}^d D^{k_i}_i \bigg\}
\end{equation}
is an index for the degeneracy coming from embedding. With the notation set, our distillation problem can now be written as
\begin{equation}
\label{thermomajorisation__embedded_pure_state}
    {\hat{\v{P}}}^N \otimes \v{f}^A_0 \otimes \tilde{\v{\eta}} \succ_{\epsilon}  \hat{\v{G}}^N \otimes \v{f}^A_1 \otimes \tilde{\v{f}_k}.
\end{equation}


\subsubsection{Upper bound for the transformation error}

We begin by observing that our target distribution in Eq.~\eqref{thermomajorisation__embedded_pure_state} is flat, and so \mbox{$V(\hat{\v{G}}^N \otimes \v{f}^A_1 \otimes \tilde{\v{f}_k})=0$}. Thus, we can employ Lemma~\ref{lem:1shot-distill} and Eq.~\eqref{eq:simpleobservation2} to get the following expression for the optimal transformation error:
\begin{equation}
    \label{eq:Exactexpoferror}
    \epsilon_N= 1-\sum_{j=1}^{L}(\hat{\v{P}}^N)^{\downarrow}_j
\end{equation}
where $L$ is given by
\begin{align}
    L & = \exp[H(\hat{\v{G}}^{N})+H(\v{f}^A_1)+H(\tilde{\v{f}}_k)-H(\v{f}^A_0)-H(\tilde{\v{\eta}})]\nonumber\\
    & = \exp[H(\hat{\v{G}}^{N})-D(\tilde{\v{f}}_k\|\tilde{\v{\eta}})-\beta(E_1^A-E_0^A)]. \label{eq:L}
\end{align}
Notice that in the current case $\Delta F^N$, defined in Eq.~\eqref{eq:deltaF}, is given by
\begin{align}
   \!\!\! \Delta F^N&\!=\!\frac{1}{\beta}\Big(D(\psi^{\otimes N}\|\gamma^{\otimes N})+ D(\ketbra{E_0^A}{E_0^A}\|\gamma_A)\nonumber\\
    \!\!&\qquad\quad-D(\ketbra{\tilde{E}^N_{k}}{\tilde{E}^N_{k}}\|\tilde{{\gamma}}) -D(\ketbra{E_1^A}{E_1^A}\|\gamma_A)\Big)\nonumber\\
   \!\!\!  &\!=\! \frac{1}{\beta}\Big(ND(\psi\|\gamma)\!-\!D(\tilde{\v{f}}_k\|\tilde{\v{\eta}})\!-\!\beta(E_1^A\!-\!E_0^A)\Big).
\end{align}
Using the above we can then rewrite $L$ as
\begin{align}
    \label{eq:lnL}
    \log L =&  \beta \Delta F^N+H(\hat{\v{G}}^N)-ND(\psi\|\gamma).
\end{align}
Now, employing Eq.~\eqref{eq:lower} and the above, we provide the upper bound for $\epsilon_N$:
\begin{align}
 \!\!\!\epsilon_N&\leq 1- \sum_{\v{k},g_{\v{k}}}\left\lbrace 
    \hat{P}^N_{\v{k},g_{\v{k}}} \middle|
    \hat{P}^N_{\v{k},g_{\v{k}}}\geq \frac{1}{L}	\right\rbrace\nonumber\\
 \!\!\!   &= 1- \sum_{\v{k}}\left\lbrace 
    {P}^N_{\v{k}} \middle|
    {P}^N_{\v{k}}\geq \frac{\prod_{i=1}^d D_i^{k_i}}{L}	\right\rbrace\nonumber\\
    \!\!\!&= 1- \sum_{\v{k}}\left\lbrace 
    {P}^N_{\v{k}} \middle|
    \log{P}^N_{\v{k}}\geq \sum_{i=1}^d k_i\log D_i-\log L	\right\rbrace \nonumber\\
    \!\!\!&= 1- \sum_{\v{k}}\bigg\{
    {P}^N_{\v{k}} \Bigg|
    \frac{\log{P}^N_{\v{k}}}{N}\geq \sum_{i=1}^d \frac{k_i}{N}\log \gamma_i+D(\psi\|\gamma)\nonumber\\
    \!\!\!&\phantom{= 1- \sum_{\v{k}}\bigg\{
    {P}^N_{\v{k}} \Bigg|
    \log{P}^N_{\v{k}}\geq}-\frac{\beta}{N} \Delta F^N \Bigg\}.\label{eq:ErrorCalc}
 \end{align}

To simplify the calculation of the upper bound of $\epsilon_N$, we rewrite each $\v{k}$ as a function of a vector $\v{s}$ such that 
\begin{equation}
    \label{eq:k_of_s}
    \v{k}=\v{k}(\v{s})=N\v{p}+\sqrt{N}\v{s},
\end{equation}
with $\sum_{i=1}^d s_i=0$. We then note that
\begin{equation}\label{eq:KLD_of_psi}
    D(\psi\|\gamma)=-\sum_{i=1}^d p_i \log \gamma_i
\end{equation}
and so the condition in Eq.~\eqref{eq:ErrorCalc} can be rewritten as
\begin{align}\label{eq:logP^N}
    \frac{\log{P}^N_{\v{k}(\v{s})}}{N}&\geq \frac{1}{\sqrt{N}}\sum_{i=1}^d s_i\log \gamma_i-\frac{\beta}{N} \Delta F^N \nonumber\\
    & = -\frac{\beta}{\sqrt{N}}\sum_{i=1}^d s_iE_i-\frac{\beta}{N} \Delta F^N.
 \end{align}
As we rigorously argue in Appendix~\ref{app:log}, the left-hand side of the above vanishes much quicker than the right-hand side when $N\to\infty$, leading to
\begin{align}
 &\!\!\!\!\lim_{N\to\infty}\epsilon_N\nonumber\\
 &\!\!\!\!\quad\leq 1\!-\! \lim_{N\to\infty}\sum_{\v{s}}\bigg\{
    {P}^N_{\v{k}(\v{s})} \Bigg|
     \sum_{i=1}^d s_iE_i\geq- \frac{\Delta F^N}{\sqrt{N}}\Bigg\}.\!\label{Final_Bound_on_Epsilon}
\end{align}

\begin{figure*}[!htb]
    \centering
    \includegraphics{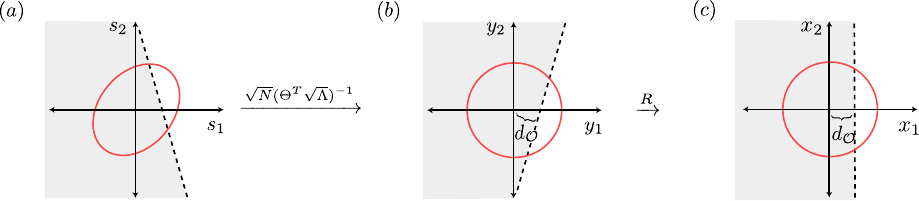}
    \caption{\label{fig:hyperplane} \textbf{Standardising the bivariate normal distribution}. The points with equal probability density for the bivariate normal distribution are represented by a red ellipsis centred at the origin, and the black dashed line corresponds to the constraining hyperplane. The upper bound on $\epsilon_N$ is given by the probability mass within the shaded area. In order to calculate it, we first apply a rotation and scaling transformation, making the ellipsis symmetric with respect to the origin. Then, using the rotational symmetry of the standard bivariate normal distribution, one can rotate it such that the hyperplane becomes parallel to $x_2$.} 
\end{figure*}

Our goal is then to calculate the sum of $P^N_{\v{k}(\v{s})}$ in the limit $N\to\infty$ subject to the following hyperplane constraint 
\begin{equation}\label{eq:Defining_Hyperplane}
     \v{s}\cdot\v{E}\geq- \frac{\Delta F^N}{\sqrt{N}},
\end{equation} 
where $\v{E}$ is a vector of energies (eigenvalues of $H$). First, we approximate the multinomial distribution $\v{P}^N$ specified in Eq.~\eqref{eq:Multinomial_prob_vector} by a multivariate normal distribution $\mathcal{N}^{(\v{\mu},\v{\Sigma})}$ with mean vector $\v{\mu}=N\v{p}$ and covariance matrix $\v{\Sigma} = N(\text{diag }(\v{p})-\v{p}\v{p}^{T})$:
\begin{align}
\mathcal{N}^{(\v{\mu},\v{\Sigma})}_{\v{k}(\v{s})}&= \frac{1}{\sqrt{(2\pi)^d|\boldsymbol\Sigma|}}
\exp\left(-\frac{1}{2}({\v{k}}-{\v{\mu}})^T{\boldsymbol\Sigma}^{-1}({\v{k}}-{\v{\mu}})
\right)\nonumber\\
&=\frac{1}{\sqrt{(2\pi)^d|\boldsymbol\Sigma|}}
\exp\left(-\frac{1}{2}\v{s}^T{N\boldsymbol\Sigma}^{-1}\v{s}
\right).
\end{align}
As we explain in Appendix~\ref{app:clt}, such an approximation can always be made with an error approaching 0 as $N\rightarrow\infty$. Next, we standardise the multivariate normal distribution $\mathcal{N}^{(\v{\mu},\v{\Sigma})}$ using rotation and scaling transformations:
\begin{equation}
\label{ratemodifiedNew2}
    \v{\Sigma} = \Theta^T\sqrt{\v{\Lambda}}\sqrt{\v{\Lambda}}\Theta,
\end{equation}
where $\v{\Lambda}$ is a diagonal matrix with the eigenvalues of $\v{\Sigma}$ and $\Theta$ is an orthogonal matrix with columns given by the eigenvectors of $\v{\Sigma}$. We illustrate this process for a three-level system (so described by $s_1$ and $s_2$ since $\sum_i s_i=0$) in Fig.~\ref{fig:hyperplane}. This rotation and scaling of co-ordinates allows us to write $\mathcal{N}^{(\v{\mu},\v{\Sigma})}$ as a product of univariate standard normal distribution $\phi(y_i)$:
\begin{align}\label{MultivariateNormalmean0}
\!\!\!\!\mathcal{N}^{(\v{\mu},\v{\Sigma})}_{\v{k}(\v{s}(\v{y}))}&\!=\!\frac{1}{\sqrt{(2\pi)^d|\boldsymbol\Sigma|}}
\exp\left(\!-\frac{1}{2}\v{y}^T\v{y}
\!\right)=\prod_{i=1}^d \phi(y_i),\!
\end{align}
where
\begin{equation}\label{y_and_s}
   \v{y}= \sqrt{N}(\Theta^T\sqrt{\v{\Lambda}})^{-1} \v{s}.
\end{equation}

We then can equivalently write the equation specifying the hyperplane, Eq.~\eqref{eq:Defining_Hyperplane}, as
\begin{equation}
    \label{transhyperplane}
    (\Theta^T\sqrt{\v{\Lambda}}\v{y})\cdot\v{E}\geq -\Delta F^N.
\end{equation}
Observe that the standard normal distribution given in  Eq.~\eqref{MultivariateNormalmean0} is rotationally invariant about the origin. One can thus choose a coordinate system \mbox{$\v{x}=\{x_1,\ldots, x_d\}$} by applying a suitable rotation $R$ on \mbox{$\v{y}=\{y_1,\ldots, y_d\}$}, so that the hyperplane specified in Eq.~\eqref{eq:Defining_Hyperplane} becomes parallel to all coordinate axes but  the $x_1$ axis. Eq.~\eqref{MultivariateNormalmean0} can then be rewritten in the following form
\begin{align}\label{MultivariateNormalmean0onx}
\!\!\!\!\mathcal{N}^{(\v{\mu},\v{\Sigma})}_{\v{k}(\v{s}(\v{x}))}&\!=\!\frac{1}{\sqrt{(2\pi)^d|\boldsymbol\Sigma|}}
\exp\left(\!-\frac{1}{2}\v{x}^T\v{x}
\!\right)=\prod_{i=1}^d \phi(x_i).\!
\end{align}
As we have 
\begin{equation}\label{Rotation}
    \v{x}=R\v{y},
\end{equation}
we can use it together with Eq.~\eqref{y_and_s} to rewrite Eq.~\eqref{eq:Defining_Hyperplane} as
\begin{equation}\label{eq:Defining_Hyperplane_New}
     \Theta^T\sqrt{\v{\Lambda}}R^T\v{x}\cdot\v{E}\geq -\Delta F^N.
\end{equation}

To calculate the right hand side of the inequality given in Eq.~\eqref{Final_Bound_on_Epsilon} in the limit $N\rightarrow\infty$, we integrate Eq. $\eqref{MultivariateNormalmean0onx}$ from $-\infty$ to $d_{\mathcal{O}}$ along $x_1$, and  from $-\infty$ to $+\infty$ along any other $x_i\neq x_1$, where $d_\mathcal{O}$ is the signed distance of the hyperplane given in Eq.~\eqref{eq:Defining_Hyperplane_New} from the origin (see Fig.~\ref{fig:hyperplane}). This distance can be explicitly calculated as
\begin{eqnarray}
 d_{\mathcal{O}} &=& \frac{\Delta F^N}{\sqrt{\v{E}\cdot(R\sqrt{\v{\Lambda}}\Theta)^T(R\sqrt{\v{\Lambda}}\Theta)\v{E}}} \nonumber\\
 &=& \frac{\Delta F^N}{\sqrt{\v{E}\cdot(\v{\Sigma} \v{E})}}
= \frac{\Delta F^N}{\sigma(F^N)},
\end{eqnarray}
where we have used the definition of $\sigma(F^N)$ from  Eq.~\eqref{eq:vn} in the last line. Thus, the upper bound on $\epsilon_N$ in the limit $N\rightarrow\infty$ given in Eq.~\eqref{Final_Bound_on_Epsilon} can be calculated as
\begin{align}
&1\!-\!\lim_{N\rightarrow\infty}\int_{-\infty}^{+\infty}\!dx_d\phi(x_d)\ldots\int_{-\infty}^{+\infty}\!dx_2\phi(x_2)\int_{-\infty}^{d_\mathcal{O}}\!dx_1\phi(x_1)\nonumber\\ 
&=  1\!-\!\lim_{N\rightarrow\infty}\Phi(d_\mathcal{O}) = \lim_{N\rightarrow\infty}\Phi\left(-\frac{\Delta F^N}{\sigma(F^N)}\right). \label{eq:Errorint}
\end{align}


\subsubsection{Lower bound for the transformation error}\label{Lower_Bound_Error_transformation}

We start by writing $L$ from Eq.~\eqref{eq:lnL} as 
\begin{eqnarray}\label{Ldef_x}
    L= \exp\left(AN+x\sqrt{N v_N}\right)=:L(x),
\end{eqnarray}
where 
\begin{align}\label{A_and_x}
 A =&H(\v{\eta})-D(\psi\|{\gamma}),\qquad
 x=\frac{\Delta F^N}{\sigma(F^N)}.
\end{align}
In the previous section we have exactly calculated the right hand side of Eq.~\eqref{eq:ErrorCalc} in the limit $N\rightarrow\infty$ (see Eq.~\eqref{eq:Errorint}). Using Eqs.~\eqref{Ldef_x}-\eqref{A_and_x}, we can equivalently rewrite this as
\begin{equation}\label{eq:P_N_and_PHI}
     \lim_{N\rightarrow\infty}\sum_{\v{k},g_{\v{k}}}\left\lbrace 
    \hat{P}^N_{\v{k},g_{\v{k}}} \middle|
    \hat{P}^N_{\v{k},g_{\v{k}}}\geq \frac{1}{L(x)} \right\rbrace = \lim_{N\rightarrow\infty}\Phi(x),
\end{equation}
where $x$ depends on $N$ as per Eq.~\eqref{A_and_x}. Now, to prove the lower bound on transformation error $\epsilon_N$ we start with the exact expression, Eq.~\eqref{eq:Exactexpoferror}, and use the inequality from Eq.~\eqref{eq:upper}. Similarly as before, we choose \mbox{$\alpha=\exp(\delta\sqrt{N})$} such that $\delta>0$, in the Eq.~\eqref{eq:upper}. Thus from Eq.~\eqref{chi_order} and Eq.~\eqref{Ldef_x} we have,
\begin{equation}\label{eq:BoundonlxPure}
    L(x)\leq \frac{\chi_{\hat{\v{P}}^N}(e^{\delta\sqrt{N}}L(x))}{c}=\frac{\chi_{\hat{\v{P}}^N}(L(x+\delta))}{c}
\end{equation}
where $c$ can be evaluated similarly as before: 
\begin{align}\label{c_defn}
    c&=e^{\delta \sqrt{N}/2}  \sum_{i=\chi_{\hat{\v{P}}^N}(e^{\delta\sqrt{N}/2}L(x))}^{\chi_{\hat{\v{P}}^N}(e^{\delta\sqrt{N}}L(x))} (\hat{\v{P}}^N)_i^\downarrow\nonumber\\
    &=e^{\delta \sqrt{N}/2}  \sum_{i=\chi_{\hat{\v{P}}^N}(L(x+\delta/2))}^{\chi_{\hat{\v{P}}^N}(L(x+\delta))} (\hat{\v{P}}^N)_i^\downarrow \nonumber\\
    &=e^{\delta \sqrt{N}/2} \left( \sum_{i}\left\lbrace 
    \hat{P}^N_i \middle|
    \hat{P}_i^N \geq \frac{1}{L(x+\delta)}	
    \right\rbrace  \right.\nonumber\\
    &\quad\qquad\qquad\left. -\sum_{i}\left\lbrace 
    \hat{P}^N_i \middle|
    \hat{P}_i^N \geq \frac{1}{L(x+\delta/2)}
    \right\rbrace \right).
\end{align}
Using Eq.~\eqref{eq:P_N_and_PHI}, we see that the limiting behaviour of $c$ from Eq.~\eqref{c_defn} is given by 
\begin{equation}
  \lim_{N\to\infty} c = \left(\Phi(x+\delta)-\Phi(x+\delta/2)\right) \lim_{N\to\infty} e^{\delta \sqrt{N}/2}.
\end{equation}
Thus, for any finite $\delta>0$, there always exists $N_0$ such that for all $N\geq N_0$ we have $c>1$. Combining with Eq.~\eqref{eq:BoundonlxPure}, for large enough $N$ we finally have, 
\begin{equation}\label{eq:Boundonlx2Pure}
    L(x)\leq \frac{\chi_{\hat{\v{P}}^N}(L(x+\delta))}{c} \leq \chi_{\hat{\v{P}}^N}(L(x+\delta)).
\end{equation}
Hence, using Eq.~\eqref{eq:Boundonlx2Pure}, we have the following lower bound on transformation error
\begin{align}
    \label{eq:second_upper3}
   \!\!\! \lim_{N\rightarrow\infty}\epsilon_N &\geq  1- \lim_{N\rightarrow\infty}\sum_{i=1}^{\chi_{\hat{\v{P}}^N}(L(x+\delta))} (\hat{\v{P}}^N)_i^\downarrow \, \nonumber\\
   \!\!\! &=1- \lim_{N\rightarrow\infty}\sum_{i}\left\lbrace 
    \hat{P}^N_i \middle|
    \hat{P}_i^N \geq \frac{1}{L(x+\delta)}
    \right\rbrace \nonumber\\
    \!\!\!&= 1-\lim_{N\rightarrow\infty}\Phi(x+\delta)=\lim_{N\rightarrow\infty}\Phi(-x-\delta),
\end{align}
where the first equality in the last line follows from Eq.~\eqref{eq:P_N_and_PHI}.
 Since the above inequality holds for any $\delta>0$, taking the limit $\delta\rightarrow0$ we conclude that
\begin{equation}
    \lim_{N\to\infty} \epsilon_N\geq \lim_{N\to\infty} \Phi\left(-\frac{\Delta F^N}{\sigma(F^N)}\right).
\end{equation}
Finally, combining the above with the bound obtained in Eq.~\eqref{eq:Errorint}, we have
\begin{equation}\label{Proof_end_error}
    \lim_{N\to\infty}\epsilon_N=\lim_{N\to\infty} \Phi\left(-\frac{\Delta F^N}{\sigma(F^N)}\right),
\end{equation}
which completes the proof.


\subsection{Proof of Theorem~\ref{thm:pure2}}
\label{sec:proof2b}

The proof of Theorem~\ref{thm:pure2} will be divided into two parts. First, we will find the embedded version of the optimal final state minimising the dissipation of free energy $F^N_{\mathrm{diss}}$, and derive the lower bound for $F^N_{\mathrm{diss}}$ as a function of the initial state. And then, we will calculate this bound up to second order asymptotic terms.


\subsubsection{Deriving bound for optimal dissipation}

We start by applying Lemma~\ref{Opt_Diss_free} to the central Eq.~\eqref{thermomajorisation__embedded_pure_state} that specifies the conditions for the investigated $\epsilon$-approximate thermodynamic distillation process. As a result, the actual total final state in the embedded picture, $\hat{\v{F}}^N$, which is $\epsilon$ away from the target state $\hat{\v{G}}^N  \otimes \v{f}_1^A\otimes \tilde{\v{f}}_k$, is given, up to permutations, by
\begin{eqnarray}
\label{eq:distribution-total2}
     \hat{\v{F}}^N&=&
     \begin{cases}
       \frac{1-\epsilon}{k} &\quad \mathrm{for~} i\leq k,\\
       ( {\hat{\v{P}}}^N \otimes \v{f}^A_0 \otimes \tilde{\v{\eta}})_{i}^\downarrow &\quad\mathrm{for~}i>k,
     \end{cases}\, 
\end{eqnarray}
where
\begin{equation}
    k=\exp\Big(H(\hat{\v{G}}^N)+H(\v{f}^A_1)+H(\tilde{\v{f}}_k)\Big).
\end{equation}
Similarly as before, we have chosen $\hat{\v{F}}^N$ with minimal entropy (i.e., the bistochastic map $B$ from Lemma~\ref{Opt_Diss_free} was chosen to be a permutation), as this minimises the dissipation of free energy. Due to the fact that the final state is specified only up to permutations, one may freely rearrange its elements to reduce the dissipation of free energy coming from unembedding (such dissipation happens when a state is not uniform within each embedding box). However, unlike in the previous case, this permutational freedom does not necessarily allow one to completely avoid additional dissipation. In the previous case, the number of non-uniform embedding boxes was exponentially small, which led to a negligible loss of free energy due to unembedding. But here, the number of non-uniform embedding boxes is not exponentially small and that can lead to a finite dissipation of free energy. Therefore, in this case, we can only provide a lower bound on dissipation of free energy as follows:
\begin{align}
\label{eq:Inequality_F_Diss}
     F^N_{\text{diss}} =&\frac{1}{\beta }\Big(D( \psi^{\otimes N}\|\v{G}^{ N})+ D(\ketbra{E_0^A}{E_0^A}\|\gamma_A)\nonumber\\
     &\qquad-D(\v{F}^N\|\v{G}^N\otimes\gamma_A\otimes\tilde{\v{G}}^N )\Big)\nonumber\\
     \geq& \frac{1}{\beta }\Big(D(\hat{\v{P}}^N\|\hat{\v{G}}^N)+ H(\v{P}^N) +D(\ketbra{E_0^A}{E_0^A}\|\gamma_A)\nonumber\\
     &\qquad- D(\hat{\v{F}}^N\|\hat{\v{G}}^N\otimes\hat{\gamma}_A\otimes\hat{\tilde{\v{G}}}^N)\Big),
\end{align}
where we have used the fact that
\begin{align}
    D(\psi^{\otimes N}\|\v{G}^N)&=D(\D(\psi^{\otimes N})\|\v{G}^N) + D(\psi^{\otimes N}\|\D(\psi^{\otimes N}))    \nonumber\\
    &=D(\hat{\v{P}}^N\|\hat{\v{G}}^N)+ H(\v{P}^N).
\end{align}

Next, we recall that $H(\v{P}^N)=O(\log N)$ (see Appendix~\ref{app:log} for details), and note that \mbox{$D(\ketbra{E_0^A}{E_0^A}\|\gamma_A) = -\log\gamma_0^A$}. Thus, the inequality from Eq.~\eqref{eq:Inequality_F_Diss} can be simplified as follows:
\begin{align}
     \!\!\!F^N_{\text{diss}} \gtrsim&\frac{1}{\beta}\Big(D(\hat{\v{P}}^N\|\hat{\v{G}}^N)
     -D(\hat{\v{F}}^N\|\hat{\v{G}}^N\otimes\hat{\gamma}_A\otimes\hat{\tilde{\v{G}}}^N)\nonumber\\
  &\qquad-\log\gamma_0^A\Big)\nonumber\\
  \!\!\!=&\frac{1}{\beta}\Big( H(\hat{\v{F}}^N)\!-\!H(\hat{\v{P}}^N)\!-\!\log (\tilde{D}D^A)\!-\!\log\gamma_0^A\Big).\label{simplified_True_Diss_Pure}
\end{align}
Here, $\tilde{D}$ and $D^A$ are the embedding constants defined by Eq.~\eqref{eq:definitiongibbsvec} for $\tilde{\v{G}}^N$ and $\gamma_A$, respectively, and similarly $D$ will denote this embedding constant for $\v{G}^N$. As before, $\tilde{D}$ and $D$ can be chosen to be equal, since the only thing that matters is that $\tilde{{G}}^N_k=\tilde{D}_k/\tilde{D}$ and ${{G}}^N_k={D}_k/D$, i.e., the change in $\tilde{D}$ or $D$ can be compensated by the appropriate change in $\tilde{D}_k$ or $D_k$. Furthermore, noting that $k=D\tilde{D}_kD^A_1 =\tilde{D}\tilde{D}_kD^A_1 $, we can express the entropy of $\hat{\v{F}}^N$ as
\begin{align}
    H(\hat{\v{F}}^N)=&-\sum_{i=1}^k \frac{1-\epsilon}{k}\log\left(\frac{1-\epsilon}{k}\right)\nonumber\\ 
    &- \sum_{i > k} ( {\hat{\v{P}}}^N \otimes \v{f}^A_0 \otimes \tilde{\v{\eta}})_i^{\downarrow} \log ( {\hat{\v{P}}}^N \otimes \v{f}^A_0 \otimes \tilde{\v{\eta}})_i^{\downarrow}\nonumber\\
    =&-(1-\epsilon)\log(1-\epsilon)+(1-\epsilon)\log k\nonumber\\
    &- \sum_{i > k} ( {\hat{\v{P}}}^N \otimes \v{f}^A_0 \otimes \tilde{\v{\eta}})_i^{\downarrow}\log \Big(\frac{{\hat{\v{P}}}^N}{\tilde{D}D_0^A}\Big)_i^{\downarrow}\nonumber\\
    \simeq&\log k-\epsilon\log L- \sum_{i > L} (\hat{\v{P}}^N)_i^{\downarrow}\log (\hat{\v{P}}^N)_i^{\downarrow} \label{eq:Entropy_of_J},
\end{align}
where in the last step of we used the fact that \mbox{$L=k/(\tilde{D}D_0^A)$}, which comes from Eq.~\eqref{eq:L}. Thus, from Eq.~\eqref{simplified_True_Diss_Pure}, the dissipated free energy in the optimal distillation process is simply bounded by
\begin{align}
    F^N_{\text{diss}}\gtrsim& \frac{1}{\beta}\Bigg(\sum_{i=1}^{L} (\hat{\v{P}}^N)_i^{\downarrow}\log (\hat{\v{P}}^N)_i^{\downarrow}+\log\Big(\frac{k}{\tilde{D}D^A}\Big)\nonumber\\
    &\qquad-\epsilon\log L- \log \gamma_0^A\Bigg)\nonumber\\
   =& \frac{1}{\beta}\Bigg(\!\sum_{i=1}^{L} (\hat{\v{P}}^N)_i^{\downarrow}\log (\hat{\v{P}}^N)_i^{\downarrow}+(1-\epsilon)\log L\!\Bigg).
\end{align}
Finally, using the expression for $L$ given in Eq.~\eqref{Ldef_x}, we can rewrite the above bound as 
\begin{align}
    F^N_{\text{diss}}\gtrsim& \frac{1}{\beta}\Bigg(\!\sum_{i=1}^{L} (\hat{\v{P}}^N)_i^{\downarrow}\log (\hat{\v{P}}^N)_i^{\downarrow}\nonumber\\
    &\qquad +(1-\epsilon)\left(AN+x\sqrt{Nv_N}\right)\!\Bigg),\label{eq: F_Diss_1}
\end{align}
where $A$ and $x$ are given by Eq.~\eqref{A_and_x}. 


\subsubsection{Calculating bound for optimal dissipation}

We now proceed to bounding the first term on the right hand side of Eq.~\eqref{eq: F_Diss_1}. We start by noting that Eq.~\eqref{eq:Boundonlx2Pure} implies that the following holds for any $\delta>0$:   
\begin{eqnarray}\label{eq:ent_Bound2}
     \!\!\!\!\!\!\!\!\sum_{i=1}^{L(x)} (\hat{\v{P}}^N)_i^\downarrow\log(\hat{\v{P}}^N)_i^\downarrow\geq 
     \!\!\!\!\!\!\sum_{i=1}^{\chi_{\hat{\v{P}}^N}(L(x+\delta))}\!\!\!\!(\hat{\v{P}}^N)_i^\downarrow\log(\hat{\v{P}}^N)_i^\downarrow.
\end{eqnarray}
Next, from the definition of $\chi_{\hat{\v{P}}^N}(L(x+\delta))$, it follows that
\begin{align}
    \label{eq:Entropy_Calc_2}
    &\!\!\!\!\!\sum_{i=1}^{\chi_{\hat{\v{P}}^N}(L(x+\delta))} (\hat{\v{P}}^N)_i^\downarrow\log(\hat{\v{P}}^N)_i^\downarrow\nonumber\\
    &\quad=\sum_{i}\Big\{
    \hat{P}^N_i\log (\hat{P}^N_i) ~\Big|~
    \hat{P}^N_i \geq \frac{1}{L(x+\delta)}\Big\}\nonumber\\
    &\quad= \sum_{\v{k},g_{\v{k}}}\left\lbrace 
    \hat{P}^N_{\v{k},g_{\v{k}}}\log(\hat{P}^N_{\v{k},g_{\v{k}}}) ~\middle|~
    \hat{P}^N_{\v{k},g_{\v{k}}}\geq \frac{1}{L(x+\delta)}	\right\rbrace\nonumber \\
    &\quad=\sum_{\v{k}}\left\lbrace 
    {P}^N_{\v{k}}\log (\hat{P}^N_{\v{k},1}) ~\middle|~
    {P}^N_{\v{k}}\geq \frac{\prod_{i=1}^d D_i^{k_i}}{L(x+\delta)}	\right\rbrace\nonumber\\
    &\quad=\sum_{\v{k}}\left\lbrace 
    {P}^N_{\v{k}}\log ({P}^N_{\v{k}}) ~\middle|~
    {P}^N_{\v{k}}\geq \frac{\prod_{i=1}^d D_i^{k_i}}{L(x+\delta)}	\right\rbrace\nonumber\\
    &\qquad- \sum_{\v{k}}\left\lbrace 
    {P}^N_{\v{k}}\Big(\sum_{j=1}^d k_j\log\gamma_j\Big) ~\middle|~
    {P}^N_{\v{k}}\geq \frac{\prod_{i=1}^d D_i^{k_i}}{L(x+\delta)}	\right\rbrace\nonumber\\
    &\qquad-(1-\epsilon)N\log D.
\end{align}

We now note that the constraint on the summand in Eq.~\eqref{eq:Entropy_Calc_2} can be modified using the parametrization of $\v{k}$ as a function of $\v{s}$ given in Eq.~\eqref{eq:k_of_s} as follows: 
\begin{align}
    &\!\!\!\!\!\!\!\!\!\!\!\!\!\!\log (P^N_{\v{k}}) \nonumber\\
    ~~\geq& \sum_{i=1}^d k_i\log D_i - \log (L(x+\delta))\nonumber\\
    ~~=& \sum_{i=1}^d (Np_i+\sqrt{N}s_i)\log \gamma_i+N\log D- \log L(x+\delta)\nonumber\\
    ~~=& -\beta\sqrt{N}\sum_{i=1}^d s_iE_i+N\Big(\log D-D(\psi\|\gamma)\Big)\nonumber\\
    &- N\Big(\log D-D(\psi\|\gamma)\Big)-(x+\delta)\sqrt{Nv_N}\nonumber\\
    ~~=& -\beta\sqrt{N}\sum_{i=1}^d s_iE_i-(x+\delta)\sqrt{Nv_N}\nonumber\\
    ~~=& -\beta\left(\sqrt{N}\v{s}\cdot\v{E}+\delta\sigma(F^N)+\Delta F^N\right),
    \label{eq:deltabound}
\end{align}
where in the second equality we used the form of $L$ given in Eq.~\eqref{Ldef_x} and employed Eq.~\eqref{eq:KLD_of_psi}, whereas in the final equality we used the definition $\sigma(F^N)=\sqrt{Nv_n}/\beta$ and Eq.~\eqref{A_and_x} saying that \mbox{$x=\Delta F^N/\sigma(F^N)$}. Moreover, since \mbox{$\log(P^N_{\v{k}(\v{s})})=O(\log N)$}, the condition from Eq.~\eqref{eq:deltabound} can be rewritten as
\begin{equation}
    \label{eq:constraint2}
    \sqrt{N}\v{s}\cdot\v{E}+\Delta F^N+\delta\sigma(F^N)\gtrsim 0.
\end{equation}

Coming back, we can now rewrite Eq.~\eqref{eq:Entropy_Calc_2} using the parametrization of $\v{k}$ as a function of $\v{s}$ from Eq.~\eqref{eq:k_of_s} and re-expressing the constraint using Eq.~\eqref{eq:constraint2}:

\begin{align}
    \label{eq:Entropy_Calc_3}
    &\!\!\!\!\!\sum_{i=1}^{\chi_{\hat{\v{P}}^N}(L(x+\delta))} (\hat{\v{P}}^N)_i^\downarrow\log(\hat{\v{P}}^N)_i^\downarrow\nonumber\\
    &\quad=\sum_{\v{k}(\v{s})}\Bigg\{ 
    {P}^N_{\v{k}(\v{s})}\log ({P}^N_{\v{k}(\v{s})})-\sqrt{N}{P}^N_{\v{k}(\v{s})}\sum_{i=1}^d s_i\log\gamma_i \nonumber\\
    &\qquad\qquad\qquad\Bigg|    \sqrt{N}\v{s}\cdot\v{E}+\Delta F^N+\delta\sigma(F^N)\geq 0\Bigg\}\nonumber\\
    &\qquad\qquad-(1-\epsilon)N\Big(\log D-D(\psi\|\gamma)\Big)\nonumber\\
    &\quad\simeq\beta\sqrt{N}\sum_{\v{s}}\Bigg\{ 
    {P}^N_{\v{k}(\v{s})} \v{s}\cdot\v{E}\Bigg| \sqrt{N}\v{s}\cdot\v{E}+\Delta F^N\nonumber\\
    &\qquad\qquad\qquad\quad +\delta\sigma(F^N)\geq 0\Bigg\}-(1-\epsilon)AN,
\end{align}
where we used the definition of $A$ from Eq.~\eqref{A_and_x}. 

Our next goal goal is then to calculate the sum
\begin{equation}
    \label{eq:Defining_Hyperplane2}
    \sum_{\v{s}}\Big\{P^N_{\v{k}(\v{s})}\v{s}\cdot\v{E}  \bigg|\sqrt{N}\v{s}\cdot\v{E}+\Delta F^N+\delta\sigma(F^N)\geq 0\Big\}.
\end{equation}
Similarly as before, we will approximate the multinomial distribution $\v{P}^N$  by a multivariate normal distribution $\mathcal{N}^{(\v{\mu},\v{\Sigma})}$ with mean vector $\v{\mu}=N\v{p}$ and covariance matrix $\v{\Sigma} = N(\text{diag }(\v{p})-\v{p}\v{p}^{T})$:
\begin{align}
\mathcal{N}^{(\v{\mu},\v{\Sigma})}_{\v{k}(\v{s})}&= \frac{1}{\sqrt{(2\pi)^d|\boldsymbol\Sigma|}}
\exp\left(-\frac{1}{2}({\v{k}}-{\v{\mu}})^T{\boldsymbol\Sigma}^{-1}({\v{k}}-{\v{\mu}})
\right)\nonumber\\
&=\frac{1}{\sqrt{(2\pi)^d|\boldsymbol\Sigma|}}
\exp\left(-\frac{1}{2}\v{s}^T{N\boldsymbol\Sigma}^{-1}\v{s}
\right).
\end{align}
Next, we standardise the multivariate normal distribution $\mathcal{N}^{(\v{\mu},\v{\Sigma})}$ using rotation and scaling transformations:
\begin{equation}
\label{ratemodifiedNew}
    \v{\Sigma} = \Theta^T\sqrt{\v{\Lambda}}\sqrt{\v{\Lambda}}\Theta,
\end{equation}
where $\v{\Lambda}$ is a diagonal matrix with the eigenvalues of $\v{\Sigma}$ and $\Theta$ is an orthogonal matrix with columns given by the eigenvectors of $\v{\Sigma}$. This rotation and scaling of co-ordinates allows us to write $\mathcal{N}^{(\v{\mu},\v{\Sigma})}$ as a product of univariate standard normal distribution $\phi(y_i)$:
\begin{align}
\!\!\!\!\mathcal{N}^{(\v{\mu},\v{\Sigma})}_{\v{k}(\v{s}(\v{y}))}&\!=\!\frac{1}{\sqrt{(2\pi)^d|\boldsymbol\Sigma|}}
\exp\left(\!-\frac{1}{2}\v{y}^T\v{y}
\!\right)=\prod_{i=1}^d \phi(y_i),\!
\end{align}
where
\begin{equation}\label{eq:Def_y}
 \v{y} = \sqrt{N}(\Theta^{T}\sqrt{\Lambda})^{-1} \v{s},   
\end{equation}
such that
\begin{equation}\label{eq:Def_s}
\v{s} = \frac{1}{\sqrt{N}} \Theta^{T} \sqrt{\Lambda} \v{y}.
\end{equation}

From Eq.~\eqref{eq:Def_y} and Eq.~\eqref{eq:Def_s}, one can write equivalently the sum given in Eq.~\eqref{eq:Defining_Hyperplane2} as
\begin{equation}
    \label{eq:Defining_Hyperplane3}
    \sum_{\v{y}}\Big\{P^N_{\v{k}(\v{s}(\v{y}))}\frac{\tilde{\v{E}}\cdot \v{y}}{\sqrt{N}} \bigg|\tilde{\v{E}}\cdot\v{y}+\Delta F^N+\delta\sigma(F^N) \geq 0\Big\},
\end{equation}
where we defined $\tilde{\v{E}}:= \sqrt{\Lambda} \Theta \v{E}$ with the following normalisation
\begin{align}
\begin{split}
    \| \tilde{\v{E}}\| &= \sqrt{\tilde{\v{E}}\cdot\tilde{\v{E}}} = \sqrt{(\sqrt{\Lambda} \Theta \v{E})\cdot(\sqrt{\Lambda} \Theta \v{E})} \\ &= \sqrt{\v{E}\cdot\Theta^{T}\Lambda \Theta \v{E}} = \sqrt{\v{E}\cdot \v{\Sigma} \v{E}} \\ &=  \sigma( F^N).
\end{split}
\end{align}
One can then equivalently express Eq.~\eqref{eq:Defining_Hyperplane3} as 
\begin{align}
    \label{eq:sigma_times_P^N}
    \frac{\sigma(F^N)}{\sqrt{N}}\sum_{\v{y}}\Big\{P^N_{\v{k}(\v{s}(\v{y}))}\hat{\tilde{\v{E}}}\cdot \v{y} \;\bigg|\;&\sigma(F^N)\hat{\tilde{\v{E}}}\cdot\v{y}+\Delta F^N\nonumber\\
    &+\delta\sigma(F^N) \geq 0\Big\},
\end{align}
where $\hat{\tilde{\v{E}}}$ is unit vector along $\tilde{\v{E}}$. Now, we choose a rotation $R$ such that
\begin{equation}
    R\hat{\tilde{\v{E}}} = (1,\ldots, 0)^T
\end{equation}
and we call $R\v{y}=\v{x}$. Thus, we can rewrite Eq.~\eqref{eq:sigma_times_P^N} as
\begin{align}
    \label{eq:sigma_times_P^N2}
    &\frac{\sigma(F^N)}{\sqrt{N}}\sum_{\v{y}}\Big\{P^N_{\v{k}(\v{s}(\v{y}))}R\hat{\tilde{\v{E}}}\cdot R\v{y} \;\Bigg|\;\sigma(F^N)R\hat{\tilde{\v{E}}}\cdot R\v{y}\nonumber\\
    &\qquad\qquad\qquad+\Delta F^N+\delta\sigma(F^N) \geq 0\Big\}\nonumber\\
    &\quad=\frac{\sigma(F^N)}{\sqrt{N}}\sum_{\v{x}}\Big\{P^N_{\v{k}(\v{s}(\v{x}))}x_1 \;\Bigg|\;\sigma(F^N)x_1\nonumber\\
    &\qquad\qquad\qquad\qquad+\Delta F^N+\delta\sigma(F^N) \geq 0\Big\}.
\end{align}

Now, employing multivariate normal distribution approximation and the fact that it is isotropic, we can exactly calculate the above expression as an integral over $\v{x}$ of the following form 
\begin{align}
      &\frac{\sigma(F^N)}{\sqrt{N}}\Bigg(\int_{-\infty}^{+\infty}dx_2 \phi(x_2)\ldots \int_{-\infty}^{+\infty}dx_d \phi(x_d)\nonumber\\
      &\times\int_{-\infty}^{d_\mathcal{O}} dx_1x_1 \phi(x_1)\Bigg)=\frac{\sigma( F^N)}{\sqrt{2\pi N}}\exp\left(-\frac{d_{\mathcal{O}}^2}{2}\right),
\end{align}
where $d_\mathcal{O}$ is the distance between the origin and the hyperplane specified by the constraint given in Eq.~\eqref{eq:sigma_times_P^N2}:
\begin{equation}
    d_{\mathcal{O}} = \frac{ \Delta F^N+\delta\sigma(F^N) }{\sqrt{\tilde{\v{E}}\cdot\tilde{\v{E}}}}=\frac{ \Delta F^N}{\sigma( F^N)}+\delta.
\end{equation}
Thus, finally we get the following expression for the desired sum from Eq.~\eqref{eq:Defining_Hyperplane2}:
\begin{align}
    &\sum_{\v{s}}\Big\{P^N_{\v{k}(\v{s})}\v{s}\cdot\v{E}  \bigg|\sqrt{N}\v{s}\cdot\v{E}+\Delta F^N+\delta\sigma(F^N)\geq 0\Big\}\nonumber\\
    &\qquad=  \frac{\sigma( F^N)}{\sqrt{2\pi N}}\exp\left(-\frac{1}{2}\left(\frac{ \Delta F^N}{\sigma( F^N)}+\delta\right)^2\right).
\end{align}

Substituting the above to Eq.~\eqref{eq:Entropy_Calc_3}, and the resulting expression to Eq.~\eqref{eq:ent_Bound2}, we get the following bound:
\begin{align}
    &\!\!\!\!\sum_{i=1}^{L(x)} (\hat{\v{P}}^N)_i^\downarrow\log(\hat{\v{P}}^N)_i^\downarrow  \gtrsim -(1-\epsilon)A \nonumber\\
    &\qquad+\frac{\beta\sigma( F^N)}{\sqrt{2\pi }}\exp\left(-\frac{1}{2}\left(\frac{ \Delta F^N}{\sigma( F^N)}+\delta\right)^2\right),
\end{align}
which in turn can be used in Eq.~\eqref{eq: F_Diss_1} to give the following:
\begin{align}
    F^N_{\mathrm{diss}} \gtrsim &\frac{\sigma( F^N)}{\sqrt{2\pi }}\exp\left(-\frac{1}{2}\left(\frac{ \Delta F^N}{\sigma( F^N)}+\delta\right)^2\right)   \nonumber\\
    &+(1-\epsilon)\Delta F^N.
\end{align}
Since the above holds for any $\delta>0$, by taking the limit $\delta\rightarrow 0$, we finally arrive at
\begin{equation}\label{eq:F_N_Diss_Simp}
    \!\!\! F^N_{\rm diss} \gtrsim   (1-\epsilon)\Delta F^N +\frac{\sigma( F^N)}{\sqrt{2\pi}}\exp\Bigg({-\frac{(\Delta F^N)^2}{2\sigma( F^N)^2}}\Bigg).
\end{equation}
Employing Eq.~\eqref{Proof_end_error} we have
\begin{equation}
    \Delta F^N = -\Phi^{-1}(\epsilon)\sigma(F^N),
\end{equation}
and so substituting this in Eq.~\eqref{eq:F_N_Diss_Simp} we obtain
\begin{eqnarray}
F^N_{\rm diss}&\gtrsim& a(\epsilon)\sigma( F^N),
\end{eqnarray}
with $a$ given by Eq.~\eqref{eq:a}, which completes the proof.


\section{Outlook}
\label{sec:out}

In this paper we have derived a version of the fluctuation-dissipation theorem for state interconversion under thermal operations. We achieved this by establishing a relation between optimal transformation error and the amount of free energy dissipated in the process on the one hand, and fluctuations of free energy in the initial state of the system on the other hand. We addressed and solved the problem in two different regimes: for initial states being either energy-incoherent or pure, with the target state in both cases being an energy eigenstate, and with the possibility to change the Hamiltonian in the process. For the case of finitely many independent but not necessarily identical energy-incoherent systems, we have provided the single-shot upper bound on the optimal transformation error as a function of average dissipated free energy and free energy fluctuations. Moreover, in the asymptotic regime we obtained the optimal transformation error up to second order asymptotic corrections, which extends previous results of Ref.~\cite{Chubb2018beyondthermodynamic} to the regime of non-identical initial systems and varying Hamiltonians. For the first time we have also performed the asymptotic analysis of the thermodynamic distillation process from quantum states that have coherence in the energy eigenbasis. As a result, we expressed the optimal transformation error from identical pure states and free energy dissipated during this transformation up to second order asymptotic corrections as a function of free energy fluctuations.

Our work can be naturally extended in the following directions. Firstly, one could generalise our analysis to arbitrary initial states. We indeed believe that an analogous result to ours will hold for such general mixed states with coherence. That is because dephasing into fixed energy subspaces leads to free energy change of the order $O(\log N)$, which is negligible compared to the second order asymptotic corrections of the order $O(\sqrt{N})$ that we focus on. In other words, the contribution of coherence to free energy per copy of the system vanishes faster with growing $N$ than what we are interested while studying second order corrections. 

Secondly, it would be extremely interesting to generalise the thermodynamic state interconversion problem to arbitrary final states, and see how the interplay between the fluctuations of the initial and target states affects dissipation. For energy-incoherent initial and final states one can infer from Ref.~\cite{korzekwa2019avoiding} that appropriately tuned fluctuations can significantly reduce dissipation, however nothing is known for states with coherence. Unfortunately, since thermal operations are time-translation covariant, such that coherence and athermality form independent resources~\cite{lostaglio2015quantum, lostaglio2015description, PRLHorodeckicoherence}, it seems unlikely that the current approach can be easily generalised. Thirdly, one could try to extend our results on pure states to allow for non-identical systems and to derive a bound working for all $N$, not only for $N\to\infty$ (i.e., replace the proving technique based on central limit theorem by the one based on a version of Berry-Esseen theorem).

In our work we have also provided a number of physical applications of our fluctuation-dissipation theorems by considering several scenarios and explaining how our results can be useful to describe fundamental and well-known thermodynamic and information-theoretic processes. We derived the optimal value of extractable work in a thermodynamic distillation process as a function of the transformation error associated to the work quality. This, together with the knowledge of the actual final free energy of the battery system provided by Theorem~\ref{thm:incoherent2}, could potentially be used to clarify the notion of imperfect work~\cite{Aberg2013, Woods2019maximumefficiencyof, Ying_Ng_NJP}, and to construct a comparison platform allowing one to continuously distinguish between work-like and heat-like forms of energy. We have also shown how our results yield the optimal trade-off between the work invested in erasing $N$ independent bits prepared in arbitrary states, and the erasure quality measured by the infidelity distance between the final state and the fully erased state. This can of course be straightforwardly extended to higher-dimensional systems and arbitrary final erased state (not necessarily the ground state). Finally, we have investigated the optimal encoding rate into a collection of non-interacting subsystems consisting of energy-incoherent or pure states using thermal operations. We derived the optimal rate (up to second-order asymptotics) of encoding information with a given average decoding error and without spending thermodynamic resources. This provides an operational interpretation of the resourcefulness of athermal quantum states for communication scenarios under the restriction of using thermal operations.

We would also like to point out to some possible technical extensions of our results. Firstly, we used infidelity as our quantifier of transformation error, but we expect that similar results could be derived using other quantifiers, e.g., the trace distance. Secondly, our investigations were performed in the spirit of small-deviation analysis (where we look for constant transformation error and total free energy dissipation of the order $O(\sqrt{N})$), but possibly other interesting fluctuation-dissipation relations could be derived within the the moderate and large deviation regimes. Thirdly, our result for pure states is limited to Hamiltonians with incommensurable spectrum, but we believe this is just a technical nuisance that one should be able to get rid of. Lastly, within the framework of general resource theories, it might be possible to derive analogous fluctuation-dissipation relations, but with free energy replaced by a resource quantifier relevant for a given resource theory.


\subsection*{Acknowledgements} 

KK would like to thank Chris Chubb and Marco Tomamichel for helpful discussions. KK and AOJ acknowledge financial support by the Foundation for Polish Science through TEAM-NET project (contract no. POIR.04.04.00-00-17C1/18-00). TB and MH acknowledge support from the Foundation for Polish Science through IRAP project co-financed by EU within the Smart Growth Operational Programme (contract no.2018/MAB/5).

\appendix


\section{Optimality of the communication rate}
\label{app:optimality}

The following derivation will closely follow the proof of Lemma~1 of Ref.~\cite{korzekwa2019encoding}. Let us assume that for a system $(\rho^N,H^N)$ it is possible to encode $M$ messages in a thermodynamically-free way so that the average decoding error is $\epsilon$. It means that there exists a set of $M$ encoding thermal operations $\{\E_i\}_{i=1}^M$ and a decoding POVM $\{ \Pi_i \}_{i=1}^M$ such that
\begin{equation}
    \label{eq:assumption}
    1-\epsilon=\frac{1}{M}\sum_{i=1}^M \tr{\E_i(\rho^N) \Pi_i}.
\end{equation}
Note that every thermal operation $\E_i$ between the initial system $(\rho^N,H^N)$ and a target system $(\tilde{\rho}^N,\tilde{H}^N)$ preserves the thermal equilibrium state, i.e.,
\begin{equation}
    \E_i(\gamma^N)=\tilde{\gamma}^N.
\end{equation}
Now, let us introduce the following three states
\begin{subequations}
\begin{align}
    \tau:=&\frac{1}{M} \sum_{i=1}^M \ketbra{i}{i}\otimes \E_i(\rho^N),\\
    \zeta:=&\frac{1}{M} \sum_{i=1}^M \ketbra{i}{i}\otimes \gamma^N,\\
    \tilde{\zeta}:=&\frac{1}{M} \sum_{i=1}^M \ketbra{i}{i}\otimes \tilde{\gamma}^N.
\end{align}
\end{subequations}
The hypothesis testing relative entropy $D_H^{\epsilon}$ between $\tau$ and $\tilde{\zeta}$, defined by~\cite{wang10,buscemi2010quantum,brandao2011one} 
\begin{align}
\!\!\! D_H^{\epsilon}(\tau\|\tilde{\zeta}) \!:= - \log \inf \big\{ \tr{Q\tilde{\zeta}}\ \!\big|\ &0 \leq Q \leq \iden,\nonumber\\
\!\!\!\!\! &\tr{Q\tau} \geq 1-\epsilon \big\} ,\!
\label{eq:hypothesis_rel_ent}
\end{align}
satisfies the following
\begin{equation}
	D_H^{\epsilon}(\tau\|\tilde{\zeta})\geq -\log \tr{Q\tilde{\zeta}}
\end{equation}
for
\begin{equation}
	Q = \sum_{i=1}^M \ketbra{i}{i} \otimes \Pi_i.
\end{equation}
This is because the above (potentially suboptimal) choice of $Q$ clearly satisfies \mbox{$0\leq Q\leq \iden$} and also
\begin{equation}
	\tr{Q\tau}=	\frac{1}{M} \sum_{i=1}^M \tr{\E_i(\rho)\Pi_i}\geq 1-\epsilon,
\end{equation} 
due to our assumption in Eq.~\eqref{eq:assumption}. At the same time we have
\begin{equation}
	\tr{Q\tilde{\zeta}}=\frac{1}{M} \sum_{i=1}^M \tr{\tilde{\rho}^N \Pi_i}=\frac{1}{M},
\end{equation} 
so that 
\begin{equation}
	\label{eq:hypothesis_ineq_1}
	\log M\leq D_H^{\epsilon}(\tau\|\tilde{\zeta}).
\end{equation}

Next, we introduce the following encoding channel
\begin{equation}
	\F:=\sum_{i=1}^{M} \ketbra{i}{i} \otimes \E_i,
\end{equation}
which satisfies
\begin{equation}
    \F(\zeta)=\tilde{\zeta}.
\end{equation}
Employing the data-processing inequality twice, we get the following sequence of inequalities:
\begin{align}
	D_H^{\epsilon}(\tau\|\tilde{\zeta})&=D_H^{\epsilon}\left(\F\left(\frac{1}{M}\sum_{i=1}^M \ketbra{i}{i} \otimes \rho^N\right)\bigg\|\F({\zeta})\right)\nonumber\\
	&\leq D_H^{\epsilon}\left(\frac{1}{M}\sum_{i=1}^M \ketbra{i}{i} \otimes \rho^N\bigg\|\zeta\right)\nonumber\\
	&\leq D_H^{\epsilon}\big( \rho^N \big\| \gamma^N \big).
\end{align}
Combining this with Eq.~\eqref{eq:hypothesis_ineq_1}, we arrive at
\begin{equation}
	\label{eq:hypothesis_ineq_2}
	\log M\leq D_H^{\epsilon}\big( \rho^N \big\| \gamma^N \big).
\end{equation}

Finally, for the case of identical initial subsystems, $\rho^N=\rho^{\otimes N}$ and $\gamma^N=\gamma^{\otimes N}$, we can use the known second order asymptotic expansion of the hypothesis testing relative entropy~\cite{li12},
\begin{align}
    \frac{1}{N}D^\epsilon_H(\rho^{\otimes N}\|\gamma^{\otimes N})&\simeq D(\rho\|\gamma)+\sqrt{\frac{V(\rho\|\gamma)}{N}}\Phi^{-1}(\epsilon),
\end{align}
leading to
\begin{align}
    \frac{\log M}{N}\lesssim D(\rho\|\gamma)+\sqrt{\frac{V(\rho\|\gamma)}{N}}\Phi^{-1}(\epsilon).
\end{align}
For the above proof to work also in the case of non-identical subsystems, one would need to prove the following asymptotic behaviour of the hypothesis testing relative entropy:
\begin{align}
    \!&\!\frac{1}{N}D^\epsilon_H\left(\bigotimes_{n=1}^N \rho_n\bigg\|\bigotimes_{n=1}^N \gamma_n\right) \nonumber\\ \!&\!~\simeq\frac{1}{N}\sum\limits_{n=1}^N D(\rho_n\|\gamma_n)+\sqrt{\frac{\frac{1}{N}\!\sum\limits_{n=1}^N \!V(\rho_n\|\gamma_n)}{N}}\Phi^{-1}(\epsilon).\!
\end{align}


\section{Proof of Lemma~\ref{Opt_Diss_free}}
\label{app:min_out}
\begin{proof}

Consider $\tilde{\v{q}}$ to be a probability vector that saturates $\v{p}\succ_{\epsilon}\v{q}$. By definition it means that $\v{p}\succ\tilde{\v{q}}$ and \mbox{$F(\v{q},\tilde{\v{q}})=1-\epsilon$}. Since $V(\v{q})=0$ and $H(\v{q})=\log L$, the probability vector $\v{q}$ contains exactly $L$ non-zero entries such that each of them is equal to $1/L$. Thus,
\begin{equation}
    \v{q}^{\downarrow}=\Big(\underbrace{\frac{1}{L},\ldots,\frac{1}{L}}_{L},0,\ldots,0\Big).
\end{equation}
From the definition of $F(\v{q},\tilde{\v{q}})$ given in Eq.~\eqref{eq:DefFidelity} we have the following
\begin{eqnarray}
\label{eq:Fidelity_majorize}
1-\epsilon=F(\v{q},\tilde{\v{q}}) &\leq& F(\v{q}^{\downarrow},\tilde{\v{q}}^{\downarrow})\nonumber\\
&=& \frac{1}{L}\left(\sum_{i=1}^L \sqrt{\tilde{q}_i^{\downarrow}}\right)^2= \frac{1}{L}\left(\sum_{i=1}^L \sqrt{\tilde{q}_i^{\downarrow}\cdot 1}\right)^2\nonumber\\&\leq& \frac{1}{L}\left(\sum_{i=1}^L\tilde{q}_i^{\downarrow}\right)  \left(\sum_{i=1}^L 1\right) = \sum_{i=1}^L\tilde{q}_i^{\downarrow},
\end{eqnarray}
where the first inequality follows from the definition of fidelity, while the second one comes from the Cauchy-Schwarz inequality. Now, from Lemma~\ref{Opt_Diss_free}, we have 
\begin{eqnarray}
\sum_{i=1}^L p_i^{\downarrow} = 1-\epsilon.
\end{eqnarray}
Combining this with Eq.~\eqref{eq:Fidelity_majorize} we obtain
\begin{eqnarray}
\sum_{i=1}^L p_i^{\downarrow}\leq \sum_{i=1}^L \tilde{q}_i^{\downarrow}.
\end{eqnarray}
On the other hand, $\v{p}\succ \tilde{\v{q}}$ gives 
\begin{equation}
    \sum_{i=1}^L p_i^{\downarrow}\geq \sum_{i=1}^L \tilde{q}_i^{\downarrow},    
\end{equation}
and so we conclude that
\begin{equation}\label{constrainoflagrangemultiplier}
    \sum_{i=1}^L p_i^{\downarrow}= \sum_{i=1}^L \tilde{q}_i^{\downarrow}=1-\epsilon.
\end{equation}

Next, note that as $\epsilon = 1-F(\v{q},\tilde{\v{q}})\geq 1-F(\v{q}^{\downarrow},\tilde{\v{q}}^{\downarrow})$, we see that to have the minimal value of error, $\tilde{\v{q}}^{\downarrow}$ needs to maximize the fidelity $F(\v{q}^{\downarrow},\tilde{\v{q}}^{\downarrow})$ subject to the constraint given in Eq.~\eqref{constrainoflagrangemultiplier}. Thus we can write
\begin{equation}\label{eq:Optimal_q}
\begin{split}
 &\max ~~ \qquad\frac{1}{L}\left(\sum_{i=1}^L \sqrt{\tilde{q}_i^{\downarrow}}\right)^2\\
 &\textrm{such that} \quad \sum_{i=1}^L \tilde{q}_i^{\downarrow}=1-\epsilon.
\end{split}
\end{equation}
The Lagrangian of the aforementioned optimization problem in Eq.~\eqref{eq:Optimal_q} is given by 
\begin{equation}
    \mathcal{L}= \frac{1}{L}\left(\sum_{i=1}^L \sqrt{\tilde{q}_i^{\downarrow}}\right)^2-\lambda\left(\sum_{i=1}^L \tilde{q}_i^{\downarrow}-1+\epsilon\right),
\end{equation}
where $\lambda$ is a Lagrange multiplier. To find the solution of the problem we calculate 
\begin{eqnarray}
\frac{\partial \mathcal{L}}{\partial \tilde{q}_j^{\downarrow}}= \frac{1}{\Big(L\sqrt{\tilde{q}_j^{\downarrow}}\Big)}\left(\sum_{i=1}^L\sqrt{\tilde{q}_i^{\downarrow}}\right)-\lambda
\end{eqnarray}
for all $j \in \{1,\ldots,L\}$. Solving $\partial \mathcal{L}/\partial \tilde{q}_j^{\downarrow}=0$ we get 
\begin{eqnarray}\label{eq:qjtilde}
\tilde{q}^{\downarrow}_j=\Bigg(\frac{1}{\lambda L}\left(\sum_{i=1}^L\sqrt{\tilde{q}_i^{\downarrow}}\right)\Bigg)^2
\end{eqnarray}
for all $j \in \{1,\ldots,L\}$. Substituting $\tilde{q}^{\downarrow}_j$ from Eq.~\eqref{eq:qjtilde} to the constraint specified in Eq.~\eqref{constrainoflagrangemultiplier} gives 
\begin{equation}\label{eq:lambda_soln}
    \lambda=\frac{1}{\sqrt{(1-\epsilon)L}}\left(\sum_{i=1}^L\sqrt{\tilde{q}_i^{\downarrow}}\right).
\end{equation}
Therefore, putting $\lambda$ from Eq.~\eqref{eq:lambda_soln} into Eq.~\eqref{eq:qjtilde} finally gives
\begin{eqnarray}\label{eq:ordertilde}
\quad\forall i\in\{1,\ldots, L\}:\qquad \tilde{q}_i^{\downarrow} = \frac{1-\epsilon}{L}.
\end{eqnarray}
Since fidelity is a concave function, and the constraint defined in Eq.~\eqref{eq:Optimal_q} is linear, this solution is optimal.

Next, we note that $\v{p}\succ \tilde{\v{q}}$ implies $Q\v{p}^{\downarrow}=\tilde{\v{q}}^{\downarrow}$ where $Q$ is a bistochastic matrix. Since any $Q$ can be decomposed as a convex sum of permutations, we can write
\begin{equation}\label{Qmatrix}
    Q=\sum_{i}\alpha_i\Pi_i,
\end{equation}
where $\Pi_i$ is a permutation matrix and \mbox{$\sum_i \alpha_i=1$} with $\alpha_i\in[0,1]$ for all~$i$. Let us now define a vector $\v{v}$ as 
\begin{equation}
    \v{v}:=\Big(\underbrace{1,\ldots,1}_{L},0\ldots,0\Big).
\end{equation}
Using this $\v{v}$, we can write the following
\begin{eqnarray}\label{eq:QTv}
     &&\v{v}\cdot Q\v{p}^{\downarrow}= \v{v}\cdot\tilde{\v{q}}^{\downarrow}=\v{v}\cdot\v{p}^{\downarrow}=(1-\epsilon)\nonumber\\
     &\Rightarrow& 
     \sum_{i}\alpha_i\v{v}\cdot\Pi_i\v{p}^{\downarrow}= \v{v}\cdot\v{p}^{\downarrow}
\end{eqnarray}
where we use Eq.~\eqref{constrainoflagrangemultiplier} in the first line, and the convex decomposition of $Q$ from Eq.~\eqref{Qmatrix} in the second line. Note that Eq.~\eqref{eq:QTv} can be equivalently written as 
\begin{eqnarray}\label{eq:Trans_equality}
    \sum_{i}\alpha_i(\Pi_i^T\v{v})\cdot\v{p}^{\downarrow}= \v{v}\cdot\v{p}^{\downarrow},
\end{eqnarray}
where $\Pi_i^T$ is the transpose of the permutation matrix~$\Pi_i$. Because the elements of $\v{v}$ and $\v{p}^{\downarrow}$ are arranged in a decreasing order, the maximum value of $(\Pi_i^T\v{v})\cdot\v{p}^{\downarrow}$ is $\v{v}\cdot\v{p}^{\downarrow}$. Thus, we see that equality in Eq.~\eqref{eq:Trans_equality} holds if and only if $\v{v}$ is invariant under $(\Pi_i^T)$ for all $i$. Therefore, $\Pi_i$ can be only of the form 
\begin{equation}
    \label{eq:PiDecomposition}
    \Pi_i = \Pi_i^L\oplus\Pi_i^{L^c},
\end{equation}
where $\Pi_i^L$ and $\Pi_i^{L^c}$ are permutations that act trivially for indices $i>L$ and $i\in\{1,\ldots,L\}$, respectively. Thus, we infer using Eq.~\eqref{Qmatrix} that $Q$ is a block-diagonal matrix of the form
\begin{equation}
    Q=\tilde{B}\oplus B,
\end{equation}
where $\tilde{B}$ and $B$ are $L\times L$ and $(d-L)\times(d-L)$ bistochastic matrices. Thus, 
\begin{eqnarray}
    \label{eq:ordertilde2}
    \forall i > L ,\quad \tilde{q}_i^{\downarrow} = \sum_{j>L}B_{ij}p_j^{\downarrow}.
\end{eqnarray}
Combining Eq.~\eqref{eq:ordertilde} and Eq.~\eqref{eq:ordertilde2} we conclude that $\tilde{\v{q}}^{\downarrow}={\v{q}}^{*\downarrow}$ with $\v{q}^{*}$ defined in Eq.~\eqref{eq:qstar}. This finally implies $\tilde{\v{q}}=\Pi\v{q}^{*}$ with $\Pi$ being some permutation. Moreover, it is straightforward to show that the minimal entropy of $\v{q}^{*}$ is achieved for $B$ being the identity matrix, since the entropy is increasing under application of any non-trivial bistochastic matrix.
\end{proof}


\section{Eliminating the logarithmic term}
\label{app:log}

We start with the following lemma that will be needed to prove our claim.

\begin{lem}\label{Orderoflogpro}
    For a fixed value of $b>0$ and any $\v{s}$, such that $\|\v{s}\|=\sqrt{\sum_{i=1}^ds_i^2}\leq b$, we have
    \begin{equation}
        \log P^N_{\v{k}(\v{s})}= O(\log N),
    \end{equation}
    where $P^N_{\v{k}}$ and $\v{k}(\v{s})$ are defined by Eqs.~\eqref{eq:Multinomial_prob_vector}~and~\eqref{eq:k_of_s}, respectively.
\end{lem}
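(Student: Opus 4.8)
The plan is to evaluate $\log P^N_{\v{k}(\v{s})}$ directly through Stirling's approximation and to track the order in $N$ of each resulting term, uniformly over the ball $\|\v{s}\|\le b$. Writing out the multinomial coefficient from Eq.~\eqref{eq:Multinomial_prob_vector},
\begin{equation}
    \log P^N_{\v{k}} = \log N! - \sum_{i=1}^d \log k_i! + \sum_{i=1}^d k_i \log p_i ,
\end{equation}
I would first observe that for $\|\v{s}\|\le b$ each component $k_i = N p_i + \sqrt{N}\,s_i$ grows like $\Theta(N)$ (here one uses that all $p_i>0$, restricting to the support of $\v{p}$ otherwise). Hence Stirling's formula $\log m! = m\log m - m + \tfrac{1}{2}\log(2\pi m) + O(1/m)$ applies to every factorial, with an error term $O(1/N)$ that is uniform in $\v{s}$.

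Substituting Stirling and using $\sum_i k_i = N$ to cancel the linear ``$-m$'' contributions, the ``$m\log m$'' terms collapse into
\begin{equation}
    N\log N - \sum_i k_i \log k_i + \sum_i k_i \log p_i = -\sum_{i=1}^d k_i \log\frac{k_i}{N p_i}.
\end{equation}
Writing $k_i/(N p_i) = 1 + s_i/(\sqrt{N}\,p_i)$ and Taylor-expanding $\log(1+x)=x-\tfrac{x^2}{2}+\dots$, the leading term of this sum is $-\sqrt{N}\sum_i s_i$, which vanishes because $\sum_i s_i=0$ (the defining constraint of the parametrisation in Eq.~\eqref{eq:k_of_s}); the next order gives a bounded contribution converging to $-\tfrac{1}{2}\sum_i s_i^2/p_i$. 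Thus this entire block is $O(1)$, uniformly for $\|\v{s}\|\le b$.

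There remain the half-logarithmic terms. Since $\log(2\pi k_i) = \log(2\pi N) + \log p_i + O(N^{-1/2})$, the combination $\tfrac{1}{2}\log(2\pi N) - \tfrac{1}{2}\sum_i \log(2\pi k_i)$ equals $-\tfrac{d-1}{2}\log(2\pi N) + O(1)$. Collecting everything yields
\begin{equation}
    \log P^N_{\v{k}(\v{s})} = -\frac{d-1}{2}\log N + O(1) = O(\log N),
\end{equation}
as claimed, with the dominant logarithmic growth coming precisely from the peak height of the multinomial near its mode. The only delicate point is to make the error estimates uniform in $\v{s}$ across the ball and to ensure the $O(\sqrt{N})$ piece cancels exactly via $\sum_i s_i=0$; one should also note that the factorials require $\v{k}(\v{s})$ to be a genuine integer composition of $N$, which is automatic since the relevant $\v{s}$ arise precisely as $(\v{k}-N\v{p})/\sqrt{N}$ for integer $\v{k}$.
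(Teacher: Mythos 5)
Your proof is correct and follows essentially the same route as the paper's: both expand $\log P^N_{\v{k}(\v{s})}$ via Stirling, reduce the leading block to $-\sum_i k_i\log\bigl(k_i/(Np_i)\bigr)$, and exploit $\sum_i s_i=0$ to kill the $O(\sqrt{N})$ term, leaving only the half-logarithmic Stirling corrections of order $\log N$. The only difference is cosmetic: the paper uses the elementary bounds $\log(1+g)<g$ and $\log(1+g)\geq g/(1+g)$ to get two one-sided estimates, whereas you Taylor-expand and obtain the sharper asymptotic $-\tfrac{d-1}{2}\log N+O(1)$, which is more than the lemma requires but perfectly valid.
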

\begin{proof}
We start by using the definition,
\begin{equation}\label{Multinomialwiths}
    P^N_{\v{k}(\v{s})} = \binom{N}{k_1(s_1),...,k_d(s_d)}p^{k_1(s_1)}_1 ... p^{k_d(s_d)}_d,
\end{equation}
to write $\log P^N_{\v{k}(\v{s})}$ as
\begin{equation}
   \!\!\!\log P^N_{\v{k}(\v{s})}\!=\!\log N! -\sum_{i=1}^d \log k_i(s_i)!+\sum_{i=1}^d k_i(s_i)\log p_i.\!
\end{equation}
Employing Stirling inequality, 
\begin{align}
    &\log\sqrt{N}+N\log N-N\nonumber\\
    &\qquad\leq\log N!\leq \nonumber\\
    &\qquad\qquad 1+\log\sqrt{N}+N\log N-N,
    \label{Stirlingineq}
\end{align}
we first provide a lower bound for $\log P^N_{\v{k}(\v{s})}$,
\begin{align}
    \log P^N_{\v{k}(\v{s})}\geq& (\log\sqrt{N}+N\log N-N)+\sum_{i=1}^d k_i(s_i)\log p_i\nonumber\\
    &-\sum_{i=1}^d(1\!+\!\log\sqrt{k_i(s_i)}\!+\!k_i(s_i)\log k_i(s_i)\!-\!k_i(s_i)) \nonumber\\
    =&-\sum_{i=1}^d k_i(s_i)\log\Big(\frac{k_i(s_i)}{Np_i}\Big)-d\nonumber\\
    &+\frac{1}{2}\Big(\log N-\sum_{i=1}^d\log k_i(s_i)\Big).
    \label{eq:UpperBound}
\end{align}
Recall that $k_i(s_i)=Np_i+\sqrt{N}s_i$, which implies $\sum_{i=1}^d s_i=0$. To simplify the above further, we lower bound the first term by employing the inequality \mbox{$\log(1+g)<g$} for $g>-1$ in the following way:
\begin{align}
     &\sum_{i=1}^d k_i(s_i)\log\Big(\frac{k_i(s_i)}{Np_i}\Big)=\sum_{i=1}^d k_i(s_i)\log\left(1+\frac{s_i}{\sqrt{N}p_i}\right)\nonumber\\
     &\qquad\leq\sum_{i=1}^d k_i(s_i)\frac{s_i}{\sqrt{N}p_i}=\sqrt{N}\sum_{i=1}^d(1+\frac{s_i}{\sqrt{N}p_i})s_i\nonumber\\
     &\qquad\qquad=\sum_{i=1}^d \frac{s_i^2}{p_i}\leq\sum_{i=1}^d \frac{s_i^2}{p_{\min}}\leq \frac{b^2}{p_{\min}},
\end{align}
where $p_{\min}=\text{min }\{p_1,\ldots, p_d\}$. Moreover, observing that
\begin{equation}
     \log N \geq\left(\log N-\sum_{i=1}^d\log k_i(s_i)\right) \geq -(d-1)\log N ,
\end{equation}
we can conclude that 
\begin{equation}\label{eq:O(LogN)}
    \log N-\sum_{i=1}^d\log k_i(s_i)=O(\log N).
\end{equation}
Putting it together, we further simplify the bound given in Eq.~\eqref{eq:UpperBound} as
\begin{equation}
     \!\!\!\log P^N_{\v{k}(\v{s})}\geq -d-\frac{b^2}{p_{\min}}-\frac{(d-1)}{2}\log{N}=O(\log N).
\end{equation}

Similarly, by employing the Stirling inequality from Eq.~\eqref{Stirlingineq}, we also prove an upper bound for $\log P^N_{\v{k}(\v{s})}$ as follows
\begin{align}
    \log P^N_{\v{k}(\v{s})}
    \leq& (1+\log{\sqrt{N}}+N\log N-N)+\sum_{i=1}^dk_i(s_i)\log p_i\nonumber\\
    &-\sum_{i=1}^d\Big(\log\sqrt{k_i(s_i)}+k_i(s
    _i)\log k_i(s_i)-k_i(s_i)\Big)\nonumber\\
    =&-\sum_{i=1}^dk_i(s_i)\log\Big(\frac{k_i(s_i)}{Np_i}\Big)+1\nonumber\\
    &+\frac{1}{2}(\log N-\sum_{i=1}^d\log k_i(s_i)).
\end{align}
Using the inequality $\log(1+g)\geq\frac{g}{1+g}$ for $g>-1$, we have
\begin{align}
    &\sum_{i=1}^dk_i(s_i)\log\Big(\frac{k_i(s_i)}{Np_i}\Big)= \sum_{i=1}^dk_i(s_i)\log\Big(1+\frac{s_i}{\sqrt{N}p_i}\Big)\nonumber\\
    &\qquad\geq\sum_{i=1}^dk_i(s_i)\frac{s_i}{s_i+\sqrt{N}p_i}=\sqrt{N}\sum_{i=1}^d s_i = 0.
    \label{log_kisi_Npi}
\end{align}
The above inequality together with Eq.~\eqref{eq:O(LogN)} imply that $\log P^N_{\v{k}(\v{s})}\leq O(\log N)$ which completes the proof.

\end{proof}

Using Lemma~\ref{Orderoflogpro}, we will now be able to prove our claim that is captured by the following result.
\begin{lem}
The following limits are equal:
\label{Lemma_to_remove_lnN}
\begin{align}
\!\!\!\!\!\!&\lim_{N\rightarrow\infty}\sum_{\v{s}}\Big\{P^N_{\v{k}(\v{s})}\Big| \frac{1}{N}\log(P^N_{\v{k}(\v{s})})+\frac{\beta}{\sqrt{N}}\sum_{i}s_iE_i\nonumber\\
\!\!\!\!\!\!&\phantom{\lim_{N\rightarrow\infty}\sum_{\v{s}}\Big\{P^N_{\v{k}(\v{s})}\Big|}+F_{\rm diss}^N\geq 0\Big\}\nonumber\\
\!\!\!\!\!\! &~~=\lim_{N\rightarrow\infty}\sum_{\v{s}}\Big\{P^N_{\v{k}(\v{s})}\Big| \frac{\beta}{\sqrt{N}}\sum_{i}s_iE_i+F_{\rm diss}^N\geq 0\Big\}.
\end{align}
\end{lem}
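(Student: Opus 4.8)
The plan is to prove that the symmetric difference of the two constraint regions carries vanishing probability mass, so that the two sums have a common limit. Writing $g_N(\v{s}):=\frac{\beta}{\sqrt{N}}\sum_i s_iE_i+F_{\rm diss}^N$ for the reduced constraint function and $h_N(\v{s}):=\frac{1}{N}\log P^N_{\v{k}(\v{s})}$ for the logarithmic correction, the first sum runs over $\{\,g_N+h_N\geq 0\,\}$ and the second over $\{\,g_N\geq 0\,\}$. Since both sums use the same non-negative weights $P^N_{\v{k}(\v{s})}$, their difference is bounded in absolute value by the total weight on the set where the two inequalities disagree, and that set is contained in the slab $S_N:=\{\,\v{s}\mid |g_N(\v{s})|\leq |h_N(\v{s})|\,\}$. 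First I would therefore reduce the claim to showing $\lim_{N\to\infty}\sum_{\v{s}}\{\,P^N_{\v{k}(\v{s})}\mid \v{s}\in S_N\,\}=0$.

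Next I would split this mass into a bulk contribution with $\|\v{s}\|\leq b$ and a tail contribution with $\|\v{s}\|>b$. Because $s_i=(k_i-Np_i)/\sqrt{N}$ and $\mathbb{E}\,\|\v{k}-N\v{p}\|^2=\operatorname{tr}\v{\Sigma}=N(1-\|\v{p}\|^2)$, a Chebyshev estimate gives $\sum_{\v{s}}\{P^N_{\v{k}(\v{s})}\mid\|\v{s}\|>b\}\leq (1-\|\v{p}\|^2)/b^2=:\eta(b)$, a bound uniform in $N$ with $\eta(b)\to 0$ as $b\to\infty$; fixing $b$ large thus controls the tail part of $S_N$ once and for all. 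On the bulk $\|\v{s}\|\leq b$ the hypothesis of Lemma~\ref{Orderoflogpro} applies, so $|h_N(\v{s})|\leq c_b\,\frac{\log N}{N}$ uniformly, and hence $S_N\cap\{\|\v{s}\|\leq b\}$ is contained in the thin slab $\{\,|g_N(\v{s})|\leq c_b\log N/N\,\}$ around the hyperplane $g_N=0$ (the intersection being empty if that hyperplane lies outside the ball).

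Finally I would bound the mass of this thin slab using the multivariate normal approximation of $\v{P}^N$ justified in Appendix~\ref{app:clt} (the same $\mathcal{N}^{(\v{\mu},\v{\Sigma})}$ used in the proof of Theorem~\ref{thm:pure}). Under this limiting Gaussian the linear functional $\frac{\beta}{\sqrt{N}}\sum_i s_iE_i$ has standard deviation $\frac{\beta}{\sqrt{N}}\sqrt{\v{E}\cdot\v{\Sigma}_0\v{E}}=\Theta(1/\sqrt{N})$, where $\v{\Sigma}_0=\v{\Sigma}/N$ is $N$-independent; consequently the Gaussian mass of a slab whose value-width in this functional is $O(\log N/N)$ is at most $O(\log N/\sqrt{N})\to 0$. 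Combining the two estimates — fix $b$ to make the tail smaller than any prescribed $\varepsilon/2$, then send $N\to\infty$ to kill the slab — shows that the difference of the two sums vanishes, proving the lemma. The main obstacle is precisely this ordering of limits: the log term can only be bounded uniformly on a bounded ball, so one must first discard a tail whose mass is uniformly small in $N$, and only then argue that on the retained region the mismatch shrinks like $\log N/\sqrt{N}$ rather than occupying a slab of fixed width. It is exactly the $\sqrt{N}$ gain from the central-limit scaling of $\v{s}$ that beats the $O(\log N)$ growth supplied by Lemma~\ref{Orderoflogpro}.
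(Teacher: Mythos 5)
Your proposal is correct and follows essentially the same route as the paper's proof in Appendix~\ref{app:log}: the same bulk/tail decomposition in $\|\v{s}\|\leq b$ versus $\|\v{s}\|>b$, the same uniform-in-$N$ control of the tail followed by $N\to\infty$ before $b\to\infty$, and the same use of Lemma~\ref{Orderoflogpro} to bound the logarithmic correction on the bulk. The one place you go further is the explicit slab estimate showing that a constraint perturbation of size $O(\log N/N)$ only moves $O(\log N/\sqrt{N})$ of probability mass under the central-limit scaling; the paper simply asserts that the vanishing of the log term makes the two bulk limits equal, so your anti-concentration step is a welcome (and correct) filling-in of that gap rather than a different strategy.
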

\begin{proof}
We start by introducing the following notation
\begin{align}
 &A(N) := \sum_{\v{s}}\Big\{P^N_{\v{k}(\v{s})}\Big| \frac{1}{N}\log(P^N_{\v{k}(\v{s})})+\frac{\beta}{\sqrt{N}}\sum_{i}s_iE_i\nonumber\\
 &\qquad\qquad\qquad\qquad\quad+F_{\text{diss}}^N\geq 0\Big\},\nonumber\\
 &A(b,N) := \sum_{\v{s}}\Big\{P^N_{\v{k}(\v{s})}\Big| \frac{1}{N}\log(P^N_{\v{k}(\v{s})})+\frac{\beta}{\sqrt{N}}\sum_{i}s_iE_i\nonumber\\
 &\qquad\qquad\qquad\qquad\quad+F_{\text{diss}}^N\geq 0\text{  such that  } \|\v{s}\|\leq b\Big\},\nonumber\\
 &B(N) := \sum_{\v{s}}\Big\{P^N_{\v{k}(\v{s})}\Big| \frac{\beta}{\sqrt{N}}\sum_{i}s_iE_i+F_{\text{diss}}^N\geq 0\Big\},\nonumber\\
 &B(b,N) := \sum_{\v{s}}\Big\{P^N_{\v{k}(\v{s})}\Big| \frac{\beta}{\sqrt{N}}\sum_{i}s_iE_i+F_{\text{diss}}^N\geq 0\nonumber\\
 &\qquad\qquad\qquad\qquad\quad\text{  such that  } \|\v{s}\|\leq b\Big\},\nonumber\\
 &\Omega(b,N):= \sum_{\v{s}}\Big\{P^N_{\v{k}(\v{s})}\Big| \text{  such that  } \|\v{s}\|\geq b\Big\}.
 \end{align}
 Our goal is to show that
 \begin{equation}
     \lim_{N\rightarrow\infty} A(N)=\lim_{N\rightarrow\infty} B(N).
 \end{equation}
From the definition it follows that 
\begin{subequations}
\begin{eqnarray}
     A(N)-\Omega(b,N)&\leq& A(b,N)\leq A(N)\label{LimA},\\
      B(N)-\Omega(b,N)&\leq& B(b,N)\leq B(N)\label{LimB}.
\end{eqnarray}
\end{subequations}
Taking the limit $N\rightarrow\infty$ of Eqs.~\eqref{LimA}~and~\eqref{LimB}, we have 
\begin{subequations}
\begin{align}
    \lim_{N\rightarrow\infty}\Big(A(N)-\Omega(b,N)\Big)&\leq\lim_{N\rightarrow\infty} A(b,N)\leq \lim_{N\rightarrow\infty} A(N),\label{LimA1}\\
    \lim_{N\rightarrow\infty}\Big(B(N)-\Omega(b,N)\Big)&\leq \lim_{N\rightarrow\infty}B(b,N)\leq \lim_{N\rightarrow\infty}B(N)\label{LimB1}.
\end{align}
\end{subequations}
Now, let us define
\begin{eqnarray}
     \lim_{N\rightarrow\infty}\Omega(b,N)=:\epsilon(b).
\end{eqnarray}
As the multinomial distribution concentrates around mean for $N\rightarrow\infty$, it follows that $\lim_{b\rightarrow\infty}\epsilon(b)=0$. Therefore, taking the limit $b\rightarrow\infty$ in Eq.~\eqref{LimA1}  we have
\begin{align}
     & \lim_{N\rightarrow\infty}A(N)\leq\lim_{b\rightarrow\infty} \lim_{N\rightarrow\infty}A(b,N)\leq \lim_{N\rightarrow\infty}A(N)\nonumber\\
     &\qquad\Rightarrow\quad\lim_{b\rightarrow\infty} \lim_{N\rightarrow\infty}A(b,N)= \lim_{N\rightarrow\infty}A(N).\label{eq:ANABN}
\end{align}
Analogously, taking the limit $b\rightarrow\infty$ in Eq.~\eqref{LimB1} we can show that
\begin{equation}\label{eq:BNBBN}
    \lim_{b\rightarrow\infty} \lim_{N\rightarrow\infty}B(b,N)=\lim_{N\rightarrow\infty}B(N).
\end{equation}

Moreover, for any fixed $b$, by employing Lemma~\ref{Orderoflogpro}, we see that $\frac{1}{N}\log(P^N_{\v{k}(\v{s})})=O(\frac{\log N}{N})$, which vanishes as \mbox{$N\rightarrow\infty$}. Therefore, we have 
 \begin{equation}
     \lim_{N\rightarrow\infty} A(b,N) =\lim_{N\rightarrow\infty} B(b,N),
 \end{equation}
and so by taking the limit $b\rightarrow\infty$, we arrive at
\begin{eqnarray}\label{eq:ANABNBNBBN}
     \lim_{b\rightarrow\infty}\lim_{N\rightarrow\infty} A(b,N) =\lim_{b\rightarrow\infty}\lim_{N\rightarrow\infty} B(b,N).
\end{eqnarray}
Combining the above with Eqs.~\eqref{eq:ANABN}-\eqref{eq:BNBBN} we have
\begin{equation}
    \lim_{N\rightarrow\infty} A(N) =\lim_{N\rightarrow\infty} B(N),
\end{equation}
which completes the proof.
\end{proof}


\section{Central limit theorem for multinomial distribution}
\label{app:clt}

\begin{lem}\label{CLT_Multinomial}
    The multinomial distribution with mean \mbox{$\v{\mu}=N\v{p}$} and a covariance matrix $\v{\Sigma}$ can be approximated in the asymptotic limit by a multivariate normal distribution $\mathcal{N}^{(\v{\mu},\v{\Sigma})}$ with mean $\v{\mu}$ and a covariance matrix~$\v{\Sigma}$.
\end{lem}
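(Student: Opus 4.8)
The plan is to realise the multinomial count vector as a sum of $N$ independent and identically distributed random vectors and then extract the Gaussian form directly via a second-order Stirling expansion, in the same spirit as Appendix~\ref{app:log}. Concretely, I would write the multinomial sample as $\v{K}=\sum_{n=1}^N \v{Z}_n$, where each $\v{Z}_n$ is the one-hot encoding of a single categorical draw taking value $\v{e}_i$ with probability $p_i$. Then $\mathbb{E}[\v{Z}_n]=\v{p}$ and $\mathrm{Cov}(\v{Z}_n)=\diag{\v{p}}-\v{p}\v{p}^T$, so that $\v{K}$ has mean $\v{\mu}=N\v{p}$ and covariance $\v{\Sigma}=N(\diag{\v{p}}-\v{p}\v{p}^T)$, matching the parameters in the statement.

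The key technical point to flag at the outset is that $\sum_i Z_{n,i}=1$ forces $\sum_i K_i=N$, so the distribution is supported on the hyperplane $\{\sum_i k_i=N\}$ and $\v{\Sigma}$ is singular of rank $d-1$. I would therefore work with the parametrisation $\v{k}=\v{k}(\v{s})=N\v{p}+\sqrt{N}\v{s}$ with $\sum_i s_i=0$ already used in the main text, treating $\v{s}$ as a variable on the $(d-1)$-dimensional tangent space of the simplex; all statements about $\v{\Sigma}^{-1}$ and $|\v{\Sigma}|$ are understood as restrictions to this subspace, with $|\v{\Sigma}|$ read as the product of the nonzero eigenvalues.

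The heart of the argument is then a local expansion of $\log P^N_{\v{k}(\v{s})}$. Applying $\log m!=m\log m-m+\tfrac12\log(2\pi m)+o(1)$ to $N!$ and each $k_i!$ in $P^N_{\v{k}}=\tfrac{N!}{\prod_i k_i!}\prod_i p_i^{k_i}$, and using $\log\big(k_i/(Np_i)\big)=\log\!\big(1+\tfrac{s_i}{\sqrt{N}p_i}\big)=\tfrac{s_i}{\sqrt{N}p_i}-\tfrac{s_i^2}{2Np_i^2}+O(N^{-3/2})$, I would expand $\sum_i k_i\log\!\big(k_i/(Np_i)\big)$. The linear contribution is $\sqrt{N}\sum_i s_i=0$, and the surviving quadratic term is $\tfrac12\sum_i s_i^2/p_i=\tfrac12\,\v{s}^T(\diag{\v{p}})^{-1}\v{s}$. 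Since $(\diag{\v{p}})^{-1}$ acts as the inverse of $\diag{\v{p}}-\v{p}\v{p}^T$ on the subspace $\sum_i s_i=0$ (a direct check gives $(\diag{\v{p}}-\v{p}\v{p}^T)(\diag{\v{p}})^{-1}\v{s}=\v{s}$ there), this equals exactly $\tfrac12(\v{k}-\v{\mu})^T\v{\Sigma}^{-1}(\v{k}-\v{\mu})$. Collecting the $\tfrac12\log$ prefactor terms reproduces the Gaussian normalisation on the $(d-1)$-dimensional subspace, so that $P^N_{\v{k}(\v{s})}=\mathcal{N}^{(\v{\mu},\v{\Sigma})}_{\v{k}(\v{s})}\big(1+o(1)\big)$ uniformly on any bounded region $\|\v{s}\|\leq b$.

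I expect the main obstacle to be promoting this pointwise expansion, valid on compact sets $\|\v{s}\|\leq b$, to a statement sufficient for the sums evaluated in the main proofs, since those range over an unbounded half-space and, in the dissipation calculation, weight the summand by the unbounded factor $\v{s}\cdot\v{E}$. This is controlled exactly as in Appendix~\ref{app:log}: the multinomial concentrates around its mean, so the total mass with $\|\v{s}\|\geq b$ is $\epsilon(b)$ with $\epsilon(b)\to 0$ as $b\to\infty$, while the low-order moments of $\v{s}$ stay uniformly bounded; taking $N\to\infty$ first and then $b\to\infty$ lets the Riemann sums converge to the corresponding Gaussian integrals. As an alternative to the explicit Stirling route, one may simply invoke the multivariate local limit theorem for sums of i.i.d.\ lattice vectors, which yields the same uniform pointwise convergence of the scaled probability mass function to the Gaussian density.
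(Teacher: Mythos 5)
Your proposal is correct, and it shares only its first step with the paper's own proof: both realise the multinomial vector as a sum of $N$ i.i.d.\ one-hot vectors with mean $\v{p}$ and covariance $\diag{\v{p}}-\v{p}\v{p}^T$. After that the paper simply invokes the multivariate central limit theorem for this sum and stops, whereas you prove a \emph{local} limit theorem by a second-order Stirling expansion of $\log P^N_{\v{k}(\v{s})}$, obtaining pointwise convergence of the pmf to the Gaussian density, uniformly on compact sets $\|\v{s}\|\leq b$. This is genuinely different and, in fact, stronger in a way that matters: in the proofs of Theorems~\ref{thm:pure} and~\ref{thm:pure2} the lemma is used to replace $P^N_{\v{k}(\v{s})}$ \emph{inside} lattice sums (in the dissipation bound even weighted by the unbounded factor $\v{s}\cdot\v{E}$) by the value of the normal density and then to pass to integrals; weak convergence of distribution functions, which is all the classical CLT gives, does not literally license that substitution, while a local CLT does. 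Your treatment also makes explicit two points the paper glosses over: the covariance $\v{\Sigma}=N(\diag{\v{p}}-\v{p}\v{p}^T)$ is singular of rank $d-1$ (the counts are constrained to $\sum_i k_i=N$), so $|\v{\Sigma}|$ and $\v{\Sigma}^{-1}$ must be read on the subspace $\sum_i s_i=0$ — your verification that $(\diag{\v{p}})^{-1}$ inverts $\v{\Sigma}/N$ there is correct — and the passage from compact sets to the half-space sums requires the concentration argument of Appendix~\ref{app:log}. In short: the paper's proof buys brevity at the cost of proving, strictly speaking, a weaker (distributional) statement than the one used downstream; your route is longer but delivers the approximation in exactly the pointwise form consumed by the main text, and your closing alternative — citing the multivariate local limit theorem for lattice i.i.d.\ vectors — would be the most economical fully rigorous version.
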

\begin{proof}
Assume $\v{X_1}\ldots \v{X_N}$ are independent and identically distributed random vectors each
of them with the following distribution
\begin{equation}
\!\!\!\text{Prob} (\v{X}=\v{x}) = 
     \begin{cases}
       \prod_{i=1}^d p_i^{x_i} &\quad\!\text{if $\v{x}$ is unit vector},\\
       0 &\quad\!\text{otherwise}.
     \end{cases}
\end{equation}
Then, the mean vector of $\v{X}$ is $\v{p}$ and the covariance matrix $\frac{1}{N}\v{\Sigma}=\text{ diag }(\v{p})-\v{p}\v{p}^T$. Define  $\v{S}_N:=\v{X}_1+\ldots+\v{X}_N$. Then 
\begin{eqnarray}
     \text{Prob}(\v{S}_N=\v{k}) &=&\binom{N}{k_1\ldots k_d}p_1^{k_1}\ldots p_d^{k_d}.
\end{eqnarray}
We thus see that a multinomial distribution arises from a sum of independent and identically distributed random variables. Therefore, using the central limit theorem, we obtain that the distribution of $\v{k}$ approaches the distribution $\mathcal{N}^{(\v{\mu},\v{\Sigma})}$ arbitrarily well for \mbox{$N\to\infty$}, which completes the proof.
\end{proof}


\section{Entropy difference between uniformised and non-uniformised embedding boxes}
\label{app:DDF}

In this appendix, we show that there exists a permutation which transforms the total final state $\hat{\v{F}}^N$ from Eq.~\eqref{eq:distribution-total} into a state that is uniform in almost all embedding boxes, leading to no dissipation up to higher-order asymptotics. We start with the following observations about the total initial and final states. First, since we consider the initial state composed of $N$ independent systems in identical incoherent states $\v{P}^N=\v{p}^{\otimes N}$, we note that $\hat {\v P}^{N} \otimes \hat{\tilde{\v{G}}}^N$ has a polynomial number of distinct entries with exponential degeneracy and exponentially many embedding boxes. Second, note that according to Eq.~\eqref{eq:distribution-total} the entries of the embedded total final state, $\hat{\v{F}}^N$, are essentially given by the entries of $\hat {\v P}^{N} \otimes \hat{\tilde{\v{G}}}^N$ (plus many equal entries $(1-\epsilon)/K$). Thus, $\hat{\v{F}}^N$ has polynomially many distinct entries with exponential degeneracy and exponentially many embedding boxes. Employing permutational freedom, we can then rearrange the entries of $\hat{\v{F}}^N$ so that they are uniform in almost every embedding box, except poly($N$) of them. We will denote the probability distribution over the embedding boxes by $\v{q}$ and note that it is essentially equal to $\v P^{N} \otimes \, \tilde{\v G}^N$. Moreover, we will denote by $\v{r}^{(i)}$ the normalised distribution within the $i$-th embedding box.

The next step is to write the entropy of the embedded total final state as the entropy of probability distribution over different embedding boxes, $H(\v q)$, plus the average entropy of normalised probabilities within each box, $H(\v{r}^{(i)})$,
\begin{equation}
\label{eq:Hnormalisedbox}    
    H(\hat{\v{F}}^N) = H(\v{q}) + \sum_i q_i H(\v{r}^{(i)}).
\end{equation}
Note that whether we uniformise or not a given box, the entropy $H(\v q)$ does not change. Consequently, the entropy of the final state uniformised within each embedding box takes the form of
\begin{align}
\label{eq:entr-totalfinuni}
    H(\hat{\v{F}}_{\rm{uni}}^N) &= H(\v{q}) + \sum_i q_i H(\v{r}^{(i)}_{\rm{uni}}) ,
\end{align}
with $\v{r}^{(i)}_{\rm{uni}}$ representing the normalised and uniformised probability within the $i$-th embedding box. Thus, the entropy difference between the uniformised and non-uniformised distributions reads 
\begin{equation}
    H(\hat{\v{F}}_{\rm{uni}}^N) - H(\hat{\v{F}}^N) = \sum_i q_i \left ( H(\v{r}^{(i)}_{\rm{uni}})-H(\v{r}^{(i)}) \right) .
\end{equation}

Now, note that due to the previous argument, the above sum is performed only over a polynomial number of boxes. Denoting the set with size poly($N$) as $\Omega$, one can write
\begin{align}
  H(\hat{\v{F}}_{\rm{uni}}^N) - H(\hat{\v{F}}^N) &= \sum_{i\in\Omega} q_i \left ( H(\v{r}^{(i)}_{\rm{uni}})-H(\v{r}^{(i)}) \right) \nonumber\\ &\leq \sum_{i\in\Omega} q_iH(\v{r}^{(i)}_{\rm{uni}}).
\end{align}
Let us analyse the right-hand side of the above equation. First, note that $q_i$ is exponentially small, $q_i \propto \exp(-N)$; while $H(\v{r}^{(i)}_{\rm{uni}})$ is the entropy of a uniform state over the dimension of the $i$-th embedding box (equal to $\prod_i D^{k_i}_i$), so it will scale linearly in $N$:
\begin{equation}
    H(\v{r}^{(i)}_{\rm{uni}})=\log \left( \prod_i D^{k_i}_i\right) \propto O(N) \, .
\end{equation}
Therefore, we conclude that the entropy difference vanishes exponentially in $N$
\begin{equation}
H(\hat{\v{F}}_{\rm{uni}}^N) - H(\hat{\v{F}}^N) \propto \exp(-N).
\end{equation}


\bibliography{bibliography}

\begin{thebibliography}{53}%
\makeatletter
\providecommand \@ifxundefined [1]{%
 \@ifx{#1\undefined}
}%
\providecommand \@ifnum [1]{%
 \ifnum #1\expandafter \@firstoftwo
 \else \expandafter \@secondoftwo
 \fi
}%
\providecommand \@ifx [1]{%
 \ifx #1\expandafter \@firstoftwo
 \else \expandafter \@secondoftwo
 \fi
}%
\providecommand \natexlab [1]{#1}%
\providecommand \enquote  [1]{``#1''}%
\providecommand \bibnamefont  [1]{#1}%
\providecommand \bibfnamefont [1]{#1}%
\providecommand \citenamefont [1]{#1}%
\providecommand \href@noop [0]{\@secondoftwo}%
\providecommand \href [0]{\begingroup \@sanitize@url \@href}%
\providecommand \@href[1]{\@@startlink{#1}\@@href}%
\providecommand \@@href[1]{\endgroup#1\@@endlink}%
\providecommand \@sanitize@url [0]{\catcode `\\12\catcode `\$12\catcode
  `\&12\catcode `\#12\catcode `\^12\catcode `\_12\catcode `\%12\relax}%
\providecommand \@@startlink[1]{}%
\providecommand \@@endlink[0]{}%
\providecommand \url  [0]{\begingroup\@sanitize@url \@url }%
\providecommand \@url [1]{\endgroup\@href {#1}{\urlprefix }}%
\providecommand \urlprefix  [0]{URL }%
\providecommand \Eprint [0]{\href }%
\providecommand \doibase [0]{https://doi.org/}%
\providecommand \selectlanguage [0]{\@gobble}%
\providecommand \bibinfo  [0]{\@secondoftwo}%
\providecommand \bibfield  [0]{\@secondoftwo}%
\providecommand \translation [1]{[#1]}%
\providecommand \BibitemOpen [0]{}%
\providecommand \bibitemStop [0]{}%
\providecommand \bibitemNoStop [0]{.\EOS\space}%
\providecommand \EOS [0]{\spacefactor3000\relax}%
\providecommand \BibitemShut  [1]{\csname bibitem#1\endcsname}%
\let\auto@bib@innerbib\@empty
\bibitem [{\citenamefont
  {Carath{\'e}odory}(1909)}]{caratheodory1909untersuchungen}%
  \BibitemOpen
  \bibfield  {author} {\bibinfo {author} {\bibfnamefont {C.}~\bibnamefont
  {Carath{\'e}odory}},\ }\bibfield  {title} {\bibinfo {title} {Untersuchungen
  {\"u}ber die grundlagen der thermodynamik},\ }\href
  {https://doi.org/10.1007/BF01450409} {\bibfield  {journal} {\bibinfo
  {journal} {Math. Ann.}\ }\textbf {\bibinfo {volume} {67}},\ \bibinfo {pages}
  {355} (\bibinfo {year} {1909})}\BibitemShut {NoStop}%
\bibitem [{\citenamefont {Giles}(1964)}]{giles1964mathematical}%
  \BibitemOpen
  \bibfield  {author} {\bibinfo {author} {\bibfnamefont {R.}~\bibnamefont
  {Giles}},\ }\href
  {https://books.google.pl/books/about/Mathematical_Foundations_of_Thermodynami.html?id=OZ0RAQAAIAAJ&redir_esc=y}
  {\emph {\bibinfo {title} {Mathematical Foundations of Thermodynamics}}}\
  (\bibinfo  {publisher} {Pergamon Press},\ \bibinfo {year} {1964})\BibitemShut
  {NoStop}%
\bibitem [{\citenamefont {Seifert}(2008)}]{Seifert2008}%
  \BibitemOpen
  \bibfield  {author} {\bibinfo {author} {\bibfnamefont {U.}~\bibnamefont
  {Seifert}},\ }\bibfield  {title} {\bibinfo {title} {Stochastic
  thermodynamics: principles and perspectives},\ }\href
  {https://doi.org/10.1140/epjb/e2008-00001-9} {\bibfield  {journal} {\bibinfo
  {journal} {Eur. Phys. J. B.}\ }\textbf {\bibinfo {volume} {64}},\ \bibinfo
  {pages} {423} (\bibinfo {year} {2008})}\BibitemShut {NoStop}%
\bibitem [{\citenamefont {Seifert}(2012)}]{Seifert_2012}%
  \BibitemOpen
  \bibfield  {author} {\bibinfo {author} {\bibfnamefont {U.}~\bibnamefont
  {Seifert}},\ }\bibfield  {title} {\bibinfo {title} {Stochastic
  thermodynamics, fluctuation theorems and molecular machines},\ }\href
  {https://doi.org/10.1088/0034-4885/75/12/126001} {\bibfield  {journal}
  {\bibinfo  {journal} {Rep. Prog. Phys.}\ }\textbf {\bibinfo {volume} {75}},\
  \bibinfo {pages} {126001} (\bibinfo {year} {2012})}\BibitemShut {NoStop}%
\bibitem [{\citenamefont {Einstein}(1906)}]{Einstein1906}%
  \BibitemOpen
  \bibfield  {author} {\bibinfo {author} {\bibfnamefont {A.}~\bibnamefont
  {Einstein}},\ }\bibfield  {title} {\bibinfo {title} {Zur theorie der
  brownschen bewegung},\ }\href
  {https://doi.org/https://doi.org/10.1002/andp.19063240208} {\bibfield
  {journal} {\bibinfo  {journal} {Ann. Phys.}\ }\textbf {\bibinfo {volume}
  {324}},\ \bibinfo {pages} {371} (\bibinfo {year} {1906})}\BibitemShut
  {NoStop}%
\bibitem [{\citenamefont {von Smoluchowski}(1906)}]{Smoluchowski1906}%
  \BibitemOpen
  \bibfield  {author} {\bibinfo {author} {\bibfnamefont {M.}~\bibnamefont {von
  Smoluchowski}},\ }\bibfield  {title} {\bibinfo {title} {Zur kinetischen
  theorie der brownschen molekularbewegung und der suspensionen},\ }\href
  {https://doi.org/https://doi.org/10.1002/andp.19063261405} {\bibfield
  {journal} {\bibinfo  {journal} {Ann. Phys.}\ }\textbf {\bibinfo {volume}
  {326}},\ \bibinfo {pages} {756} (\bibinfo {year} {1906})}\BibitemShut
  {NoStop}%
\bibitem [{\citenamefont {Kubo}(1966)}]{Kubo1966}%
  \BibitemOpen
  \bibfield  {author} {\bibinfo {author} {\bibfnamefont {R.}~\bibnamefont
  {Kubo}},\ }\bibfield  {title} {\bibinfo {title} {The fluctuation-dissipation
  theorem},\ }\href {https://doi.org/10.1088/0034-4885/29/1/306} {\bibfield
  {journal} {\bibinfo  {journal} {Rep. Prog. Phys.}\ }\textbf {\bibinfo
  {volume} {29}},\ \bibinfo {pages} {255} (\bibinfo {year} {1966})}\BibitemShut
  {NoStop}%
\bibitem [{\citenamefont {Bonan\ifmmode~\mbox{\c{c}}\else
  \c{c}\fi{}a}(2008)}]{Bonanca2008}%
  \BibitemOpen
  \bibfield  {author} {\bibinfo {author} {\bibfnamefont {M.~V.~S.}\
  \bibnamefont {Bonan\ifmmode~\mbox{\c{c}}\else \c{c}\fi{}a}},\ }\bibfield
  {title} {\bibinfo {title} {Fluctuation-dissipation theorem for the
  microcanonical ensemble},\ }\href
  {https://doi.org/10.1103/PhysRevE.78.031107} {\bibfield  {journal} {\bibinfo
  {journal} {Phys. Rev. E}\ }\textbf {\bibinfo {volume} {78}},\ \bibinfo
  {pages} {031107} (\bibinfo {year} {2008})}\BibitemShut {NoStop}%
\bibitem [{\citenamefont {Batalh\~ao}\ \emph {et~al.}(2014)\citenamefont
  {Batalh\~ao}, \citenamefont {Souza}, \citenamefont {Mazzola}, \citenamefont
  {Auccaise}, \citenamefont {Sarthour}, \citenamefont {Oliveira}, \citenamefont
  {Goold}, \citenamefont {De~Chiara}, \citenamefont {Paternostro},\ and\
  \citenamefont {Serra}}]{batalhao2014}%
  \BibitemOpen
  \bibfield  {author} {\bibinfo {author} {\bibfnamefont {T.~B.}\ \bibnamefont
  {Batalh\~ao}}, \bibinfo {author} {\bibfnamefont {A.~M.}\ \bibnamefont
  {Souza}}, \bibinfo {author} {\bibfnamefont {L.}~\bibnamefont {Mazzola}},
  \bibinfo {author} {\bibfnamefont {R.}~\bibnamefont {Auccaise}}, \bibinfo
  {author} {\bibfnamefont {R.~S.}\ \bibnamefont {Sarthour}}, \bibinfo {author}
  {\bibfnamefont {I.~S.}\ \bibnamefont {Oliveira}}, \bibinfo {author}
  {\bibfnamefont {J.}~\bibnamefont {Goold}}, \bibinfo {author} {\bibfnamefont
  {G.}~\bibnamefont {De~Chiara}}, \bibinfo {author} {\bibfnamefont
  {M.}~\bibnamefont {Paternostro}},\ and\ \bibinfo {author} {\bibfnamefont
  {R.~M.}\ \bibnamefont {Serra}},\ }\bibfield  {title} {\bibinfo {title}
  {Experimental reconstruction of work distribution and study of fluctuation
  relations in a closed quantum system},\ }\href
  {https://doi.org/10.1103/PhysRevLett.113.140601} {\bibfield  {journal}
  {\bibinfo  {journal} {Phys. Rev. Lett.}\ }\textbf {\bibinfo {volume} {113}},\
  \bibinfo {pages} {140601} (\bibinfo {year} {2014})}\BibitemShut {NoStop}%
\bibitem [{\citenamefont {An}\ \emph {et~al.}(2015)\citenamefont {An},
  \citenamefont {Zhang}, \citenamefont {Um}, \citenamefont {Lv}, \citenamefont
  {Lu}, \citenamefont {Zhang}, \citenamefont {Yin}, \citenamefont {Quan},\ and\
  \citenamefont {Kim}}]{An2015}%
  \BibitemOpen
  \bibfield  {author} {\bibinfo {author} {\bibfnamefont {S.}~\bibnamefont
  {An}}, \bibinfo {author} {\bibfnamefont {J.-N.}\ \bibnamefont {Zhang}},
  \bibinfo {author} {\bibfnamefont {M.}~\bibnamefont {Um}}, \bibinfo {author}
  {\bibfnamefont {D.}~\bibnamefont {Lv}}, \bibinfo {author} {\bibfnamefont
  {Y.}~\bibnamefont {Lu}}, \bibinfo {author} {\bibfnamefont {J.}~\bibnamefont
  {Zhang}}, \bibinfo {author} {\bibfnamefont {Z.-Q.}\ \bibnamefont {Yin}},
  \bibinfo {author} {\bibfnamefont {H.~T.}\ \bibnamefont {Quan}},\ and\
  \bibinfo {author} {\bibfnamefont {K.}~\bibnamefont {Kim}},\ }\bibfield
  {title} {\bibinfo {title} {Experimental test of the quantum jarzynski
  equality with a trapped-ion system},\ }\href
  {https://doi.org/10.1038/nphys3197} {\bibfield  {journal} {\bibinfo
  {journal} {Nat. Phys}\ }\textbf {\bibinfo {volume} {11}},\ \bibinfo {pages}
  {193} (\bibinfo {year} {2015})}\BibitemShut {NoStop}%
\bibitem [{\citenamefont {Janzing}\ \emph {et~al.}(2000)\citenamefont
  {Janzing}, \citenamefont {Wocjan}, \citenamefont {Zeier}, \citenamefont
  {Geiss},\ and\ \citenamefont {Beth}}]{Janzing2000}%
  \BibitemOpen
  \bibfield  {author} {\bibinfo {author} {\bibfnamefont {D.}~\bibnamefont
  {Janzing}}, \bibinfo {author} {\bibfnamefont {P.}~\bibnamefont {Wocjan}},
  \bibinfo {author} {\bibfnamefont {R.}~\bibnamefont {Zeier}}, \bibinfo
  {author} {\bibfnamefont {R.}~\bibnamefont {Geiss}},\ and\ \bibinfo {author}
  {\bibfnamefont {T.}~\bibnamefont {Beth}},\ }\bibfield  {title} {\bibinfo
  {title} {Thermodynamic cost of reliability and low temperatures: Tightening
  landauer's principle and the second law},\ }\href
  {https://doi.org/10.1023/A:1026422630734} {\bibfield  {journal} {\bibinfo
  {journal} {Int. J. Theor. Phys.}\ }\textbf {\bibinfo {volume} {39}},\
  \bibinfo {pages} {2717} (\bibinfo {year} {2000})}\BibitemShut {NoStop}%
\bibitem [{\citenamefont {Horodecki}\ and\ \citenamefont
  {Oppenheim}(2013)}]{horodecki2013quantumness}%
  \BibitemOpen
  \bibfield  {author} {\bibinfo {author} {\bibfnamefont {M.}~\bibnamefont
  {Horodecki}}\ and\ \bibinfo {author} {\bibfnamefont {J.}~\bibnamefont
  {Oppenheim}},\ }\bibfield  {title} {\bibinfo {title} {(quantumness in the
  context of) resource theories},\ }\href
  {https://doi.org/10.1142/S0217979213450197} {\bibfield  {journal} {\bibinfo
  {journal} {Int. J. Mod. Phys. B}\ }\textbf {\bibinfo {volume} {27}},\
  \bibinfo {pages} {1345019} (\bibinfo {year} {2013})}\BibitemShut {NoStop}%
\bibitem [{\citenamefont {{Brand\~ao}}\ \emph {et~al.}(2015)\citenamefont
  {{Brand\~ao}}, \citenamefont {{Horodecki}}, \citenamefont {{Ng}},
  \citenamefont {{Oppenheim}},\ and\ \citenamefont
  {{Wehner}}}]{brandao2015second}%
  \BibitemOpen
  \bibfield  {author} {\bibinfo {author} {\bibfnamefont {F.~G.~S.~L.}\
  \bibnamefont {{Brand\~ao}}}, \bibinfo {author} {\bibfnamefont
  {M.}~\bibnamefont {{Horodecki}}}, \bibinfo {author} {\bibfnamefont
  {N.~H.~Y.}\ \bibnamefont {{Ng}}}, \bibinfo {author} {\bibfnamefont
  {J.}~\bibnamefont {{Oppenheim}}},\ and\ \bibinfo {author} {\bibfnamefont
  {S.}~\bibnamefont {{Wehner}}},\ }\bibfield  {title} {\bibinfo {title} {{The
  second laws of quantum thermodynamics}},\ }\href
  {https://doi.org/10.1073/pnas.1411728112} {\bibfield  {journal} {\bibinfo
  {journal} {Proc. Natl. Acad. Sci. U.S.A.}\ }\textbf {\bibinfo {volume}
  {112}},\ \bibinfo {pages} {3275} (\bibinfo {year} {2015})}\BibitemShut
  {NoStop}%
\bibitem [{\citenamefont {Lostaglio}(2019)}]{Lostaglio2019}%
  \BibitemOpen
  \bibfield  {author} {\bibinfo {author} {\bibfnamefont {M.}~\bibnamefont
  {Lostaglio}},\ }\bibfield  {title} {\bibinfo {title} {An introductory review
  of the resource theory approach to thermodynamics},\ }\href
  {https://doi.org/10.1088/1361-6633/ab46e5} {\bibfield  {journal} {\bibinfo
  {journal} {Rep. Prog. Phys.}\ }\textbf {\bibinfo {volume} {82}},\ \bibinfo
  {pages} {114001} (\bibinfo {year} {2019})}\BibitemShut {NoStop}%
\bibitem [{\citenamefont {Dahlsten}\ \emph {et~al.}(2011)\citenamefont
  {Dahlsten}, \citenamefont {Renner}, \citenamefont {Rieper},\ and\
  \citenamefont {Vedral}}]{Dahlsten_2011}%
  \BibitemOpen
  \bibfield  {author} {\bibinfo {author} {\bibfnamefont {O.~C.~O.}\
  \bibnamefont {Dahlsten}}, \bibinfo {author} {\bibfnamefont {R.}~\bibnamefont
  {Renner}}, \bibinfo {author} {\bibfnamefont {E.}~\bibnamefont {Rieper}},\
  and\ \bibinfo {author} {\bibfnamefont {V.}~\bibnamefont {Vedral}},\
  }\bibfield  {title} {\bibinfo {title} {Inadequacy of von neumann entropy for
  characterizing extractable work},\ }\href
  {https://doi.org/10.1088/1367-2630/13/5/053015} {\bibfield  {journal}
  {\bibinfo  {journal} {New J. Phys.}\ }\textbf {\bibinfo {volume} {13}},\
  \bibinfo {pages} {053015} (\bibinfo {year} {2011})}\BibitemShut {NoStop}%
\bibitem [{\citenamefont {Yunger~Halpern}\ \emph {et~al.}(2018)\citenamefont
  {Yunger~Halpern}, \citenamefont {Garner}, \citenamefont {Dahlsten},\ and\
  \citenamefont {Vedral}}]{Yunger2018}%
  \BibitemOpen
  \bibfield  {author} {\bibinfo {author} {\bibfnamefont {N.}~\bibnamefont
  {Yunger~Halpern}}, \bibinfo {author} {\bibfnamefont {A.~J.~P.}\ \bibnamefont
  {Garner}}, \bibinfo {author} {\bibfnamefont {O.~C.~O.}\ \bibnamefont
  {Dahlsten}},\ and\ \bibinfo {author} {\bibfnamefont {V.}~\bibnamefont
  {Vedral}},\ }\bibfield  {title} {\bibinfo {title} {Maximum one-shot
  dissipated work from r\'enyi divergences},\ }\href
  {https://doi.org/10.1103/PhysRevE.97.052135} {\bibfield  {journal} {\bibinfo
  {journal} {Phys. Rev. E}\ }\textbf {\bibinfo {volume} {97}},\ \bibinfo
  {pages} {052135} (\bibinfo {year} {2018})}\BibitemShut {NoStop}%
\bibitem [{\citenamefont {Alhambra}\ \emph {et~al.}(2016)\citenamefont
  {Alhambra}, \citenamefont {Masanes}, \citenamefont {Oppenheim},\ and\
  \citenamefont {Perry}}]{Alhambra2016}%
  \BibitemOpen
  \bibfield  {author} {\bibinfo {author} {\bibfnamefont {A.~M.}\ \bibnamefont
  {Alhambra}}, \bibinfo {author} {\bibfnamefont {L.}~\bibnamefont {Masanes}},
  \bibinfo {author} {\bibfnamefont {J.}~\bibnamefont {Oppenheim}},\ and\
  \bibinfo {author} {\bibfnamefont {C.}~\bibnamefont {Perry}},\ }\bibfield
  {title} {\bibinfo {title} {Fluctuating work: From quantum thermodynamical
  identities to a second law equality},\ }\href
  {https://doi.org/10.1103/PhysRevX.6.041017} {\bibfield  {journal} {\bibinfo
  {journal} {Phys. Rev. X}\ }\textbf {\bibinfo {volume} {6}},\ \bibinfo {pages}
  {041017} (\bibinfo {year} {2016})}\BibitemShut {NoStop}%
\bibitem [{\citenamefont {Halpern}\ \emph {et~al.}(2015)\citenamefont
  {Halpern}, \citenamefont {Garner}, \citenamefont {Dahlsten},\ and\
  \citenamefont {Vedral}}]{Halpern2015}%
  \BibitemOpen
  \bibfield  {author} {\bibinfo {author} {\bibfnamefont {N.~Y.}\ \bibnamefont
  {Halpern}}, \bibinfo {author} {\bibfnamefont {A.~J.~P.}\ \bibnamefont
  {Garner}}, \bibinfo {author} {\bibfnamefont {O.~C.~O.}\ \bibnamefont
  {Dahlsten}},\ and\ \bibinfo {author} {\bibfnamefont {V.}~\bibnamefont
  {Vedral}},\ }\bibfield  {title} {\bibinfo {title} {Introducing one-shot work
  into fluctuation relations},\ }\href
  {https://doi.org/10.1088/1367-2630/17/9/095003} {\bibfield  {journal}
  {\bibinfo  {journal} {New J. Phys.}\ }\textbf {\bibinfo {volume} {17}},\
  \bibinfo {pages} {095003} (\bibinfo {year} {2015})}\BibitemShut {NoStop}%
\bibitem [{\citenamefont {Tajima}\ and\ \citenamefont
  {Hayashi}(2017)}]{Hayashithermo2017}%
  \BibitemOpen
  \bibfield  {author} {\bibinfo {author} {\bibfnamefont {H.}~\bibnamefont
  {Tajima}}\ and\ \bibinfo {author} {\bibfnamefont {M.}~\bibnamefont
  {Hayashi}},\ }\bibfield  {title} {\bibinfo {title} {Finite-size effect on
  optimal efficiency of heat engines},\ }\href
  {https://doi.org/10.1103/PhysRevE.96.012128} {\bibfield  {journal} {\bibinfo
  {journal} {Phys. Rev. E}\ }\textbf {\bibinfo {volume} {96}},\ \bibinfo
  {pages} {012128} (\bibinfo {year} {2017})}\BibitemShut {NoStop}%
\bibitem [{\citenamefont {Chubb}\ \emph {et~al.}(2018)\citenamefont {Chubb},
  \citenamefont {Tomamichel},\ and\ \citenamefont
  {Korzekwa}}]{Chubb2018beyondthermodynamic}%
  \BibitemOpen
  \bibfield  {author} {\bibinfo {author} {\bibfnamefont {C.~T.}\ \bibnamefont
  {Chubb}}, \bibinfo {author} {\bibfnamefont {M.}~\bibnamefont {Tomamichel}},\
  and\ \bibinfo {author} {\bibfnamefont {K.}~\bibnamefont {Korzekwa}},\
  }\bibfield  {title} {\bibinfo {title} {Beyond the thermodynamic limit:
  finite-size corrections to state interconversion rates},\ }\href
  {https://doi.org/10.22331/q-2018-11-27-108} {\bibfield  {journal} {\bibinfo
  {journal} {{Quantum}}\ }\textbf {\bibinfo {volume} {2}},\ \bibinfo {pages}
  {108} (\bibinfo {year} {2018})}\BibitemShut {NoStop}%
\bibitem [{\citenamefont {Chubb}\ \emph {et~al.}(2019)\citenamefont {Chubb},
  \citenamefont {Tomamichel},\ and\ \citenamefont {Korzekwa}}]{Chubb2019_2}%
  \BibitemOpen
  \bibfield  {author} {\bibinfo {author} {\bibfnamefont {C.~T.}\ \bibnamefont
  {Chubb}}, \bibinfo {author} {\bibfnamefont {M.}~\bibnamefont {Tomamichel}},\
  and\ \bibinfo {author} {\bibfnamefont {K.}~\bibnamefont {Korzekwa}},\
  }\bibfield  {title} {\bibinfo {title} {Moderate deviation analysis of
  majorization-based resource interconversion},\ }\href
  {https://doi.org/10.1103/PhysRevA.99.032332} {\bibfield  {journal} {\bibinfo
  {journal} {Phys. Rev. A}\ }\textbf {\bibinfo {volume} {99}},\ \bibinfo
  {pages} {032332} (\bibinfo {year} {2019})}\BibitemShut {NoStop}%
\bibitem [{\citenamefont {Korzekwa}\ \emph
  {et~al.}(2019{\natexlab{a}})\citenamefont {Korzekwa}, \citenamefont {Chubb},\
  and\ \citenamefont {Tomamichel}}]{korzekwa2019avoiding}%
  \BibitemOpen
  \bibfield  {author} {\bibinfo {author} {\bibfnamefont {K.}~\bibnamefont
  {Korzekwa}}, \bibinfo {author} {\bibfnamefont {C.~T.}\ \bibnamefont
  {Chubb}},\ and\ \bibinfo {author} {\bibfnamefont {M.}~\bibnamefont
  {Tomamichel}},\ }\bibfield  {title} {\bibinfo {title} {Avoiding
  irreversibility: engineering resonant conversions of quantum resources},\
  }\href {https://doi.org/10.1103/PhysRevLett.122.110403} {\bibfield  {journal}
  {\bibinfo  {journal} {Phys. Rev. Lett.}\ }\textbf {\bibinfo {volume} {122}},\
  \bibinfo {pages} {110403} (\bibinfo {year} {2019}{\natexlab{a}})}\BibitemShut
  {NoStop}%
\bibitem [{\citenamefont {{Horodecki}}\ and\ \citenamefont
  {{Oppenheim}}(2013)}]{horodecki2013fundamental}%
  \BibitemOpen
  \bibfield  {author} {\bibinfo {author} {\bibfnamefont {M.}~\bibnamefont
  {{Horodecki}}}\ and\ \bibinfo {author} {\bibfnamefont {J.}~\bibnamefont
  {{Oppenheim}}},\ }\bibfield  {title} {\bibinfo {title} {{Fundamental
  limitations for quantum and nanoscale thermodynamics}},\ }\href
  {https://www.nature.com/articles/ncomms3059} {\bibfield  {journal} {\bibinfo
  {journal} {Nat. Commun.}\ }\textbf {\bibinfo {volume} {4}},\ \bibinfo {eid}
  {2059} (\bibinfo {year} {2013})}\BibitemShut {NoStop}%
\bibitem [{\citenamefont {Esposito}\ \emph {et~al.}(2009)\citenamefont
  {Esposito}, \citenamefont {Harbola},\ and\ \citenamefont
  {Mukamel}}]{Esposito2009}%
  \BibitemOpen
  \bibfield  {author} {\bibinfo {author} {\bibfnamefont {M.}~\bibnamefont
  {Esposito}}, \bibinfo {author} {\bibfnamefont {U.}~\bibnamefont {Harbola}},\
  and\ \bibinfo {author} {\bibfnamefont {S.}~\bibnamefont {Mukamel}},\
  }\bibfield  {title} {\bibinfo {title} {Nonequilibrium fluctuations,
  fluctuation theorems, and counting statistics in quantum systems},\ }\href
  {https://doi.org/10.1103/RevModPhys.81.1665} {\bibfield  {journal} {\bibinfo
  {journal} {Rev. Mod. Phys.}\ }\textbf {\bibinfo {volume} {81}},\ \bibinfo
  {pages} {1665} (\bibinfo {year} {2009})}\BibitemShut {NoStop}%
\bibitem [{\citenamefont {Parrondo}\ \emph {et~al.}(2015)\citenamefont
  {Parrondo}, \citenamefont {Horowitz},\ and\ \citenamefont
  {Sagawa}}]{Parrondo2015}%
  \BibitemOpen
  \bibfield  {author} {\bibinfo {author} {\bibfnamefont {J.~M.~R.}\
  \bibnamefont {Parrondo}}, \bibinfo {author} {\bibfnamefont {J.~M.}\
  \bibnamefont {Horowitz}},\ and\ \bibinfo {author} {\bibfnamefont
  {T.}~\bibnamefont {Sagawa}},\ }\bibfield  {title} {\bibinfo {title}
  {Thermodynamics of information},\ }\href {https://doi.org/10.1038/nphys3230}
  {\bibfield  {journal} {\bibinfo  {journal} {Nat. Phys}\ }\textbf {\bibinfo
  {volume} {11}},\ \bibinfo {pages} {131} (\bibinfo {year} {2015})}\BibitemShut
  {NoStop}%
\bibitem [{\citenamefont {Alicki}(1979)}]{Alicki_1979}%
  \BibitemOpen
  \bibfield  {author} {\bibinfo {author} {\bibfnamefont {R.}~\bibnamefont
  {Alicki}},\ }\bibfield  {title} {\bibinfo {title} {The quantum open system as
  a model of the heat engine},\ }\href
  {https://doi.org/10.1088/0305-4470/12/5/007} {\bibfield  {journal} {\bibinfo
  {journal} {J. Phys. A: Math. Gen.}\ }\textbf {\bibinfo {volume} {12}},\
  \bibinfo {pages} {L103} (\bibinfo {year} {1979})}\BibitemShut {NoStop}%
\bibitem [{\citenamefont {{\AA}berg}(2013)}]{Aberg2013}%
  \BibitemOpen
  \bibfield  {author} {\bibinfo {author} {\bibfnamefont {J.}~\bibnamefont
  {{\AA}berg}},\ }\bibfield  {title} {\bibinfo {title} {Truly work-like work
  extraction via a single-shot analysis},\ }\href
  {https://doi.org/10.1038/ncomms2712} {\bibfield  {journal} {\bibinfo
  {journal} {Nat. Commun.}\ }\textbf {\bibinfo {volume} {4}},\ \bibinfo {pages}
  {1925} (\bibinfo {year} {2013})}\BibitemShut {NoStop}%
\bibitem [{\citenamefont {Skrzypczyk}\ \emph {et~al.}(2014)\citenamefont
  {Skrzypczyk}, \citenamefont {Short},\ and\ \citenamefont
  {Popescu}}]{Skrzypczyk2014}%
  \BibitemOpen
  \bibfield  {author} {\bibinfo {author} {\bibfnamefont {P.}~\bibnamefont
  {Skrzypczyk}}, \bibinfo {author} {\bibfnamefont {A.~J.}\ \bibnamefont
  {Short}},\ and\ \bibinfo {author} {\bibfnamefont {S.}~\bibnamefont
  {Popescu}},\ }\bibfield  {title} {\bibinfo {title} {Work extraction and
  thermodynamics for individual quantum systems},\ }\href
  {https://doi.org/10.1038/ncomms5185} {\bibfield  {journal} {\bibinfo
  {journal} {Nat. Commun.}\ }\textbf {\bibinfo {volume} {5}},\ \bibinfo {pages}
  {4185} (\bibinfo {year} {2014})}\BibitemShut {NoStop}%
\bibitem [{\citenamefont {Maxwell}(1872)}]{maxwell1872theory}%
  \BibitemOpen
  \bibfield  {author} {\bibinfo {author} {\bibfnamefont {J.}~\bibnamefont
  {Maxwell}},\ }\href {https://books.google.pl/books?id=5u84AAAAMAAJ} {\emph
  {\bibinfo {title} {Theory of Heat}}},\ Text-books of science\ (\bibinfo
  {publisher} {Longmans, Green, and Company},\ \bibinfo {year}
  {1872})\BibitemShut {NoStop}%
\bibitem [{\citenamefont {Szilard}(1929)}]{Szilard1929}%
  \BibitemOpen
  \bibfield  {author} {\bibinfo {author} {\bibfnamefont {L.}~\bibnamefont
  {Szilard}},\ }\bibfield  {title} {\bibinfo {title} {{\"u}ber die
  entropieverminderung in einem thermodynamischen system bei eingriffen
  intelligenter wesen},\ }\href {https://doi.org/10.1007/BF01341281} {\bibfield
   {journal} {\bibinfo  {journal} {Zeitschrift f{\"u}r Physik}\ }\textbf
  {\bibinfo {volume} {53}},\ \bibinfo {pages} {840} (\bibinfo {year}
  {1929})}\BibitemShut {NoStop}%
\bibitem [{\citenamefont {Bennett}(1982)}]{Bennett1982}%
  \BibitemOpen
  \bibfield  {author} {\bibinfo {author} {\bibfnamefont {C.~H.}\ \bibnamefont
  {Bennett}},\ }\bibfield  {title} {\bibinfo {title} {The thermodynamics of
  computation---a review},\ }\href {https://doi.org/10.1007/BF02084158}
  {\bibfield  {journal} {\bibinfo  {journal} {Int. J. Theor. Phys.}\ }\textbf
  {\bibinfo {volume} {21}},\ \bibinfo {pages} {905} (\bibinfo {year}
  {1982})}\BibitemShut {NoStop}%
\bibitem [{\citenamefont {{Landauer}}(1961)}]{Landauer1961}%
  \BibitemOpen
  \bibfield  {author} {\bibinfo {author} {\bibfnamefont {R.}~\bibnamefont
  {{Landauer}}},\ }\bibfield  {title} {\bibinfo {title} {Irreversibility and
  heat generation in the computing process},\ }\href
  {https://doi.org/10.1147/rd.53.0183} {\bibfield  {journal} {\bibinfo
  {journal} {IBM J. Res. Dev.}\ }\textbf {\bibinfo {volume} {5}},\ \bibinfo
  {pages} {183} (\bibinfo {year} {1961})}\BibitemShut {NoStop}%
\bibitem [{\citenamefont {Alicki}\ and\ \citenamefont
  {Horodecki}(2019)}]{AlickiH2019}%
  \BibitemOpen
  \bibfield  {author} {\bibinfo {author} {\bibfnamefont {R.}~\bibnamefont
  {Alicki}}\ and\ \bibinfo {author} {\bibfnamefont {M.}~\bibnamefont
  {Horodecki}},\ }\bibfield  {title} {\bibinfo {title}
  {Information-thermodynamics link revisited},\ }\href
  {http://dx.doi.org/10.1088/1751-8121/ab076f} {\bibfield  {journal} {\bibinfo
  {journal} {J. Phys. A: Math. Theor.}\ }\textbf {\bibinfo {volume} {52}},\
  \bibinfo {pages} {204001} (\bibinfo {year} {2019})}\BibitemShut {NoStop}%
\bibitem [{\citenamefont {Wilde}(2013)}]{wilde2013quantum}%
  \BibitemOpen
  \bibfield  {author} {\bibinfo {author} {\bibfnamefont {M.}~\bibnamefont
  {Wilde}},\ }\href {https://books.google.pl/books?id=T36v2Sp7DnIC} {\emph
  {\bibinfo {title} {Quantum Information Theory}}},\ Quantum Information
  Theory\ (\bibinfo  {publisher} {Cambridge University Press},\ \bibinfo {year}
  {2013})\BibitemShut {NoStop}%
\bibitem [{\citenamefont {Narasimhachar}\ \emph {et~al.}(2019)\citenamefont
  {Narasimhachar}, \citenamefont {Thompson}, \citenamefont {Ma}, \citenamefont
  {Gour},\ and\ \citenamefont {Gu}}]{narasimhachar2019quantifying}%
  \BibitemOpen
  \bibfield  {author} {\bibinfo {author} {\bibfnamefont {V.}~\bibnamefont
  {Narasimhachar}}, \bibinfo {author} {\bibfnamefont {J.}~\bibnamefont
  {Thompson}}, \bibinfo {author} {\bibfnamefont {J.}~\bibnamefont {Ma}},
  \bibinfo {author} {\bibfnamefont {G.}~\bibnamefont {Gour}},\ and\ \bibinfo
  {author} {\bibfnamefont {M.}~\bibnamefont {Gu}},\ }\bibfield  {title}
  {\bibinfo {title} {Quantifying memory capacity as a quantum thermodynamic
  resource},\ }\href {https://dx.doi.org/10.1103/PhysRevLett.122.060601}
  {\bibfield  {journal} {\bibinfo  {journal} {Phys. Rev. Lett.}\ }\textbf
  {\bibinfo {volume} {122}},\ \bibinfo {pages} {060601} (\bibinfo {year}
  {2019})}\BibitemShut {NoStop}%
\bibitem [{\citenamefont {Korzekwa}\ \emph
  {et~al.}(2019{\natexlab{b}})\citenamefont {Korzekwa}, \citenamefont
  {Pucha{\l}a}, \citenamefont {Tomamichel},\ and\ \citenamefont
  {{\.Z}yczkowski}}]{korzekwa2019encoding}%
  \BibitemOpen
  \bibfield  {author} {\bibinfo {author} {\bibfnamefont {K.}~\bibnamefont
  {Korzekwa}}, \bibinfo {author} {\bibfnamefont {Z.}~\bibnamefont
  {Pucha{\l}a}}, \bibinfo {author} {\bibfnamefont {M.}~\bibnamefont
  {Tomamichel}},\ and\ \bibinfo {author} {\bibfnamefont {K.}~\bibnamefont
  {{\.Z}yczkowski}},\ }\bibfield  {title} {\bibinfo {title} {Encoding classical
  information into quantum resources},\ }\href@noop {} {\bibfield  {journal}
  {\bibinfo  {journal} {arXiv:1911.12373}\ } (\bibinfo {year}
  {2019}{\natexlab{b}})}\BibitemShut {NoStop}%
\bibitem [{\citenamefont {Tomamichel}\ and\ \citenamefont
  {Hayashi}(2013)}]{Tomamichel2013}%
  \BibitemOpen
  \bibfield  {author} {\bibinfo {author} {\bibfnamefont {M.}~\bibnamefont
  {Tomamichel}}\ and\ \bibinfo {author} {\bibfnamefont {M.}~\bibnamefont
  {Hayashi}},\ }\bibfield  {title} {\bibinfo {title} {A hierarchy of
  information quantities for finite block length analysis of quantum tasks},\
  }\href {https://doi.org/10.1109/TIT.2013.2276628} {\bibfield  {journal}
  {\bibinfo  {journal} {IEEE Trans. Inf. Theory}\ }\textbf {\bibinfo {volume}
  {59}},\ \bibinfo {pages} {7693} (\bibinfo {year} {2013})}\BibitemShut
  {NoStop}%
\bibitem [{\citenamefont {Boes}\ \emph {et~al.}(2020)\citenamefont {Boes},
  \citenamefont {Ng},\ and\ \citenamefont {Wilming}}]{boes2020variance}%
  \BibitemOpen
  \bibfield  {author} {\bibinfo {author} {\bibfnamefont {P.}~\bibnamefont
  {Boes}}, \bibinfo {author} {\bibfnamefont {N.~H.~Y.}\ \bibnamefont {Ng}},\
  and\ \bibinfo {author} {\bibfnamefont {H.}~\bibnamefont {Wilming}},\
  }\href@noop {} {\bibinfo {title} {The variance of relative surprisal as
  single-shot quantifier}} (\bibinfo {year} {2020}),\ \Eprint
  {https://arxiv.org/abs/arXiv:2009.08391} {arXiv:2009.08391} \BibitemShut
  {NoStop}%
\bibitem [{\citenamefont {Alicki}\ \emph {et~al.}(2004)\citenamefont {Alicki},
  \citenamefont {Horodecki}, \citenamefont {Horodecki},\ and\ \citenamefont
  {Horodecki}}]{Alicki2004}%
  \BibitemOpen
  \bibfield  {author} {\bibinfo {author} {\bibfnamefont {R.}~\bibnamefont
  {Alicki}}, \bibinfo {author} {\bibfnamefont {M.}~\bibnamefont {Horodecki}},
  \bibinfo {author} {\bibfnamefont {P.}~\bibnamefont {Horodecki}},\ and\
  \bibinfo {author} {\bibfnamefont {R.}~\bibnamefont {Horodecki}},\ }\bibfield
  {title} {\bibinfo {title} {Thermodynamics of quantum information systems ---
  hamiltonian description},\ }\href
  {https://doi.org/10.1023/B:OPSY.0000047566.72717.71} {\bibfield  {journal}
  {\bibinfo  {journal} {Open Syst. Inf. Dyn.}\ }\textbf {\bibinfo {volume}
  {11}},\ \bibinfo {pages} {205} (\bibinfo {year} {2004})}\BibitemShut
  {NoStop}%
\bibitem [{\citenamefont {Bhatia}(1996)}]{bhatia1996matrix}%
  \BibitemOpen
  \bibfield  {author} {\bibinfo {author} {\bibfnamefont {R.}~\bibnamefont
  {Bhatia}},\ }\href {https://books.google.pl/books?id=F4hRy1F1M6QC} {\emph
  {\bibinfo {title} {Matrix Analysis}}},\ Graduate Texts in Mathematics\
  (\bibinfo  {publisher} {Springer New York},\ \bibinfo {year}
  {1996})\BibitemShut {NoStop}%
\bibitem [{\citenamefont {Berry}(1941)}]{berry1941accuracy}%
  \BibitemOpen
  \bibfield  {author} {\bibinfo {author} {\bibfnamefont {A.~C.}\ \bibnamefont
  {Berry}},\ }\bibfield  {title} {\bibinfo {title} {The accuracy of the
  gaussian approximation to the sum of independent variates},\ }\href
  {https://doi.org/10.2307/1990053} {\bibfield  {journal} {\bibinfo  {journal}
  {Trans. Am. Math. Soc.}\ }\textbf {\bibinfo {volume} {49}},\ \bibinfo {pages}
  {122} (\bibinfo {year} {1941})}\BibitemShut {NoStop}%
\bibitem [{\citenamefont {Esseen}(1942)}]{esseen1942}%
  \BibitemOpen
  \bibfield  {author} {\bibinfo {author} {\bibfnamefont {C.~G.}\ \bibnamefont
  {Esseen}},\ }\bibfield  {title} {\bibinfo {title} {On the liapunov limit
  error in the theory of probability},\ }\href@noop {} {\bibfield  {journal}
  {\bibinfo  {journal} {Ark. Mat. Astr. Fys.}\ }\textbf {\bibinfo {volume}
  {49}},\ \bibinfo {pages} {1} (\bibinfo {year} {1942})}\BibitemShut {NoStop}%
\bibitem [{\citenamefont {Esseen}(1956)}]{Esseen1956}%
  \BibitemOpen
  \bibfield  {author} {\bibinfo {author} {\bibfnamefont {C.~G.}\ \bibnamefont
  {Esseen}},\ }\bibfield  {title} {\bibinfo {title} {A moment inequality with
  an application to the central limit theorem},\ }\href
  {https://doi.org/10.1080/03461238.1956.10414946} {\bibfield  {journal}
  {\bibinfo  {journal} {Scand. Actuar J.}\ }\textbf {\bibinfo {volume}
  {1956}},\ \bibinfo {pages} {160} (\bibinfo {year} {1956})}\BibitemShut
  {NoStop}%
\bibitem [{\citenamefont {Shevtsova}(2011)}]{Shevtsova2011}%
  \BibitemOpen
  \bibfield  {author} {\bibinfo {author} {\bibfnamefont {I.}~\bibnamefont
  {Shevtsova}},\ }\href@noop {} {\bibinfo {title} {On the absolute constants in
  the berry-esseen type inequalities for identically distributed summands}}
  (\bibinfo {year} {2011}),\ \Eprint {https://arxiv.org/abs/arXiv:1111.6554}
  {arXiv:1111.6554} \BibitemShut {NoStop}%
\bibitem [{\citenamefont {Lostaglio}\ \emph
  {et~al.}(2015{\natexlab{a}})\citenamefont {Lostaglio}, \citenamefont
  {Jennings},\ and\ \citenamefont {Rudolph}}]{lostaglio2015description}%
  \BibitemOpen
  \bibfield  {author} {\bibinfo {author} {\bibfnamefont {M.}~\bibnamefont
  {Lostaglio}}, \bibinfo {author} {\bibfnamefont {D.}~\bibnamefont
  {Jennings}},\ and\ \bibinfo {author} {\bibfnamefont {T.}~\bibnamefont
  {Rudolph}},\ }\bibfield  {title} {\bibinfo {title} {Description of quantum
  coherence in thermodynamic processes requires constraints beyond free
  energy},\ }\href
  {http://www.nature.com/ncomms/2015/150310/ncomms7383/full/ncomms7383.html}
  {\bibfield  {journal} {\bibinfo  {journal} {Nat. Commun.}\ }\textbf {\bibinfo
  {volume} {6}},\ \bibinfo {pages} {6383} (\bibinfo {year}
  {2015}{\natexlab{a}})}\BibitemShut {NoStop}%
\bibitem [{\citenamefont {Lostaglio}\ \emph
  {et~al.}(2015{\natexlab{b}})\citenamefont {Lostaglio}, \citenamefont
  {Korzekwa}, \citenamefont {Jennings},\ and\ \citenamefont
  {Rudolph}}]{lostaglio2015quantum}%
  \BibitemOpen
  \bibfield  {author} {\bibinfo {author} {\bibfnamefont {M.}~\bibnamefont
  {Lostaglio}}, \bibinfo {author} {\bibfnamefont {K.}~\bibnamefont {Korzekwa}},
  \bibinfo {author} {\bibfnamefont {D.}~\bibnamefont {Jennings}},\ and\
  \bibinfo {author} {\bibfnamefont {T.}~\bibnamefont {Rudolph}},\ }\bibfield
  {title} {\bibinfo {title} {Quantum coherence, time-translation symmetry, and
  thermodynamics},\ }\href {https://doi.org/10.1103/PhysRevX.5.021001}
  {\bibfield  {journal} {\bibinfo  {journal} {Phys. Rev. X}\ }\textbf {\bibinfo
  {volume} {5}},\ \bibinfo {pages} {021001} (\bibinfo {year}
  {2015}{\natexlab{b}})}\BibitemShut {NoStop}%
\bibitem [{\citenamefont {\ifmmode \acute{C}\else
  \'{C}\fi{}wikli\ifmmode~\acute{n}\else \'{n}\fi{}ski}\ \emph
  {et~al.}(2015)\citenamefont {\ifmmode \acute{C}\else
  \'{C}\fi{}wikli\ifmmode~\acute{n}\else \'{n}\fi{}ski}, \citenamefont
  {Studzi\ifmmode~\acute{n}\else \'{n}\fi{}ski}, \citenamefont {Horodecki},\
  and\ \citenamefont {Oppenheim}}]{PRLHorodeckicoherence}%
  \BibitemOpen
  \bibfield  {author} {\bibinfo {author} {\bibfnamefont {P.}~\bibnamefont
  {\ifmmode \acute{C}\else \'{C}\fi{}wikli\ifmmode~\acute{n}\else
  \'{n}\fi{}ski}}, \bibinfo {author} {\bibfnamefont {M.}~\bibnamefont
  {Studzi\ifmmode~\acute{n}\else \'{n}\fi{}ski}}, \bibinfo {author}
  {\bibfnamefont {M.}~\bibnamefont {Horodecki}},\ and\ \bibinfo {author}
  {\bibfnamefont {J.}~\bibnamefont {Oppenheim}},\ }\bibfield  {title} {\bibinfo
  {title} {Limitations on the evolution of quantum coherences: Towards fully
  quantum second laws of thermodynamics},\ }\href
  {https://doi.org/10.1103/PhysRevLett.115.210403} {\bibfield  {journal}
  {\bibinfo  {journal} {Phys. Rev. Lett.}\ }\textbf {\bibinfo {volume} {115}},\
  \bibinfo {pages} {210403} (\bibinfo {year} {2015})}\BibitemShut {NoStop}%
\bibitem [{\citenamefont {Woods}\ \emph {et~al.}(2019)\citenamefont {Woods},
  \citenamefont {Ng},\ and\ \citenamefont
  {Wehner}}]{Woods2019maximumefficiencyof}%
  \BibitemOpen
  \bibfield  {author} {\bibinfo {author} {\bibfnamefont {M.~P.}\ \bibnamefont
  {Woods}}, \bibinfo {author} {\bibfnamefont {N.~H.~Y.}\ \bibnamefont {Ng}},\
  and\ \bibinfo {author} {\bibfnamefont {S.}~\bibnamefont {Wehner}},\
  }\bibfield  {title} {\bibinfo {title} {The maximum efficiency of nano heat
  engines depends on more than temperature},\ }\href
  {https://doi.org/10.22331/q-2019-08-19-177} {\bibfield  {journal} {\bibinfo
  {journal} {{Quantum}}\ }\textbf {\bibinfo {volume} {3}},\ \bibinfo {pages}
  {177} (\bibinfo {year} {2019})}\BibitemShut {NoStop}%
\bibitem [{\citenamefont {Ng}\ \emph {et~al.}(2017)\citenamefont {Ng},
  \citenamefont {Woods},\ and\ \citenamefont {Wehner}}]{Ying_Ng_NJP}%
  \BibitemOpen
  \bibfield  {author} {\bibinfo {author} {\bibfnamefont {N.~H.~Y.}\
  \bibnamefont {Ng}}, \bibinfo {author} {\bibfnamefont {M.~P.}\ \bibnamefont
  {Woods}},\ and\ \bibinfo {author} {\bibfnamefont {S.}~\bibnamefont
  {Wehner}},\ }\bibfield  {title} {\bibinfo {title} {Surpassing the carnot
  efficiency by extracting imperfect work},\ }\href
  {https://doi.org/10.1088/1367-2630/aa8ced} {\bibfield  {journal} {\bibinfo
  {journal} {New J. Phys.}\ }\textbf {\bibinfo {volume} {19}},\ \bibinfo
  {pages} {113005} (\bibinfo {year} {2017})}\BibitemShut {NoStop}%
\bibitem [{\citenamefont {Wang}\ and\ \citenamefont {Renner}(2012)}]{wang10}%
  \BibitemOpen
  \bibfield  {author} {\bibinfo {author} {\bibfnamefont {L.}~\bibnamefont
  {Wang}}\ and\ \bibinfo {author} {\bibfnamefont {R.}~\bibnamefont {Renner}},\
  }\bibfield  {title} {\bibinfo {title} {One-shot classical-quantum capacity
  and hypothesis testing},\ }\href
  {https://doi.org/10.1103/PhysRevLett.108.200501} {\bibfield  {journal}
  {\bibinfo  {journal} {Phys. Rev. Lett.}\ }\textbf {\bibinfo {volume} {108}},\
  \bibinfo {pages} {200501} (\bibinfo {year} {2012})}\BibitemShut {NoStop}%
\bibitem [{\citenamefont {{Buscemi}}\ and\ \citenamefont
  {{Datta}}(2010)}]{buscemi2010quantum}%
  \BibitemOpen
  \bibfield  {author} {\bibinfo {author} {\bibfnamefont {F.}~\bibnamefont
  {{Buscemi}}}\ and\ \bibinfo {author} {\bibfnamefont {N.}~\bibnamefont
  {{Datta}}},\ }\bibfield  {title} {\bibinfo {title} {The quantum capacity of
  channels with arbitrarily correlated noise},\ }\href
  {https://doi.org/10.1109/TIT.2009.2039166} {\bibfield  {journal} {\bibinfo
  {journal} {IEEE Trans. Inf. Theory.}\ }\textbf {\bibinfo {volume} {56}},\
  \bibinfo {pages} {1447} (\bibinfo {year} {2010})}\BibitemShut {NoStop}%
\bibitem [{\citenamefont {{Datta}}\ \emph {et~al.}(2013)\citenamefont
  {{Datta}}, \citenamefont {{Mosonyi}}, \citenamefont {{Hsieh}},\ and\
  \citenamefont {{Brandão}}}]{brandao2011one}%
  \BibitemOpen
  \bibfield  {author} {\bibinfo {author} {\bibfnamefont {N.}~\bibnamefont
  {{Datta}}}, \bibinfo {author} {\bibfnamefont {M.}~\bibnamefont {{Mosonyi}}},
  \bibinfo {author} {\bibfnamefont {M.}~\bibnamefont {{Hsieh}}},\ and\ \bibinfo
  {author} {\bibfnamefont {F.~G. S.~L.}\ \bibnamefont {{Brandão}}},\
  }\bibfield  {title} {\bibinfo {title} {A smooth entropy approach to quantum
  hypothesis testing and the classical capacity of quantum channels},\ }\href
  {https://doi.org/10.1109/TIT.2013.2282160} {\bibfield  {journal} {\bibinfo
  {journal} {IEEE Trans. Inf. Theory.}\ }\textbf {\bibinfo {volume} {59}},\
  \bibinfo {pages} {8014} (\bibinfo {year} {2013})}\BibitemShut {NoStop}%
\bibitem [{\citenamefont {Li}(2014)}]{li12}%
  \BibitemOpen
  \bibfield  {author} {\bibinfo {author} {\bibfnamefont {K.}~\bibnamefont
  {Li}},\ }\bibfield  {title} {\bibinfo {title} {{Second-Order Asymptotics for
  Quantum Hypothesis Testing}},\ }\href {https://doi.org/10.1214/13-AOS1185}
  {\bibfield  {journal} {\bibinfo  {journal} {Ann. Stat.}\ }\textbf {\bibinfo
  {volume} {42}},\ \bibinfo {pages} {171} (\bibinfo {year} {2014})}\BibitemShut
  {NoStop}%
\end{thebibliography}%

\end{document}